\documentclass[a4paper,onecolumn,11pt,unpublished]{quantumarticle}
\pdfoutput=1
\usepackage[utf8]{inputenc}
\usepackage[english]{babel}
\usepackage[T1]{fontenc}

\usepackage{enumerate}
\usepackage{hyperref}
\usepackage{graphicx,color}
\usepackage{cite}
\usepackage{amsthm, amsmath, amssymb, bbm, mathtools, braket, multirow, url, hhline}

\renewcommand{\epsilon}{\varepsilon}
\newcommand{\abs}[1]{\left|#1\right|}

\renewcommand{\Im}{\operatorname{Im}}
\DeclareMathOperator{\Tr}{Tr}
\newcommand{\ketbra}[2]{\ket{#1}\!\bra{#2}}
\DeclareMathOperator{\diag}{diag}
\DeclareMathOperator*{\avg}{avg}

\newtheorem{theorem}{Theorem}[section]
\newtheorem{lemma}[theorem]{Lemma}
\newtheorem{corollary}[theorem]{Corollary}
\newtheorem{definition}[theorem]{Definition}
\newtheorem{proposition}[theorem]{Proposition}

\theoremstyle{definition}
\newtheorem{remark}[theorem]{Remark}

\newcommand{\cA}{\mathcal{A}}
\newcommand{\cB}{\mathcal{B}}
\newcommand{\cC}{\mathcal{C}}

\newcommand{\cH}{\mathcal{H}}

\newcommand{\cK}{\mathcal{K}}

\newcommand{\cQ}{\mathcal{Q}}

\newcommand{\cS}{\mathcal{S}}

\newcommand{\cX}{\mathcal{X}}
\newcommand{\cY}{\mathcal{Y}}

\newcommand{\C}{\mathrm{C}}
\newcommand{\Q}{\mathrm{Q}}
\newcommand{\OPT}{\mathrm{OPT}}
\newcommand{\SLD}{\mathrm{SLD}}
\newcommand{\RLD}{\mathrm{RLD}}
\newcommand{\BKM}{\mathrm{BKM}}
\newcommand{\WYD}{\mathrm{WYD}}
\newcommand{\mix}{\mathrm{mix}}
\newcommand{\Diag}{\mathsf{D}}
\newcommand{\sym}{\mathrm{sym}}
\newcommand{\asym}{\mathrm{asym}}
\newcommand{\iso}{\mathrm{iso}}

\begin{document}

\title{Optimal quantum locally differentially private mechanisms in the high-privacy regime}

\author{Yuuya Yoshida}
\email{yyoshida9130@gmail.com}
\affiliation{Independent Researcher, Komaki, Aichi, 485-0003, Japan}
\orcid{0000-0003-1071-2098}
\maketitle

\begin{abstract}
We optimize the trade-off between privacy and utility in the high-privacy regime. We adopt local differential privacy (LDP) and its quantum extension, quantum local differential privacy (QLDP), for privacy protection, and investigate utility functions including the Holevo information (which reduces to the mutual information in the classical case) and the error exponents in symmetric and asymmetric hypothesis testing. Under the LDP and QLDP constraints, these utility functions have classical and quantum optimal values, which we denote simply by $C$ and $Q$, respectively, in this abstract. In this paper, we provide optimal LDP and QLDP mechanisms achieving the classical and quantum optimal values in the high-privacy regime, and prove that the asymptotic ratio $Q/C$ in this regime takes the same value regardless of the utility function. Our results reveal quantum advantages (specifically, $Q/C\ge3/2$) for the above utility functions when the private data take at least three possible values. In addition, conditional on a mathematical conjecture concerning equi-isoclinic tight fusion frames (EITFFs), our QLDP mechanisms also exhibit quantum advantages in a moderate-privacy regime.
\end{abstract}

\section{Introduction}\label{intro}
With the rapid advancement and the increasing pervasiveness of information technology in daily life, 
it has become increasingly important to utilize private data while ensuring their protection.
Data utility and privacy protection are generally in a trade-off relation, and this trade-off needs to be optimized.
In particular, differential privacy (DP) has attracted attention as a rigorous framework for privacy protection, and 
various optimization problems under the DP constraint have been extensively studied 
\cite{Kairouz,Holohan1, Geng1,Geng2,Geng3,Geng4, ISIT2020,Yoshida,Nam, Guan,Nuradha}.
Originally, DP was introduced in the context of global privacy \cite{Dwork1,Dwork2}, 
where a trusted data collector (e.g., a corporation or government) releases statistical results or synthetic data based on users' data.
Since then, DP has also been widely applied to privacy-preserving machine learning.
Subsequently, the concept of DP was extended to the context of local privacy \cite{Duchi}, 
where data providers do not trust the data collector.
This paper focuses on the local privacy setting.
Specifically, the data providers convert their private data $X$, which takes values in a finite set $\cX$, 
into randomized data $Y$ taking values in another finite set $\cY$, and release $Y$.
Based on the released data $Y$, the data collector estimates certain information about $X$, such as its probability distribution or its expectation.
The conversion is specified by a conditional probability distribution $\mathbb{P}_{Y|X}$.
As stated later, local differential privacy (LDP) is a condition for conditional probability distributions, and 
its privacy loss bound is represented by a positive parameter $\epsilon$.

As stated above, optimizing the trade-off between privacy and utility is a central line of research in DP.
More specifically, the objective is to maximize the estimation accuracy of the data collector while simultaneously protecting the private data $X$ via LDP.
If this estimation accuracy is represented as a real-valued function $\Phi_\C$ called a utility function, 
the above optimization problem can be formulated as the problem of maximizing $\Phi_\C$ under the LDP constraint.
Such optimization problems have been extensively studied within the framework of classical probability theory 
\cite{Kairouz,Holohan1, Geng1,Geng2,Geng3,Geng4}.
For instance, Kairouz et al.\ \cite{Kairouz} proved that 
the above optimization problem can be reduced to a linear programming problem 
when $\Phi_\C$ can be expressed as a sum of values of a sublinear function.
Except in special cases, further simplification is not known, and 
it is difficult to obtain a general analytical solution to the above optimization problem.

Furthermore, DP was introduced into quantum information theory, and 
quantum extensions of the aforementioned optimization problem have been studied \cite{ISIT2020,Yoshida,Nam, Guan,Nuradha}.
Here, we consider the case of utilizing quantum states $\rho_x$, $x\in\cX$, instead of the classical randomized data $Y$.
Specifically, the data providers convert their private data $X=x$ into quantum states $\rho_x$, and then 
the data collector estimates certain information about $X$ by measuring these quantum states.
In this case, the counterpart of LDP is a condition for the collection of quantum states $(\rho_x)_{x\in\cX}$.
Quantum extensions of LDP, including the one considered here, are increasingly referred to as quantum local differential privacy (QLDP).
The quantum extension considered here mathematically encompasses the classical setting, 
since for each $x\in\cX$ the probability distribution $\mathbb{P}_{Y|X}(\cdot|x)$ can be identified with 
the diagonal density matrix whose diagonal entries are given by $\mathbb{P}_{Y|X}(\cdot|x)$.
Consequently, the classical optimal value never exceeds the quantum optimal value in this framework.
This naturally raises the question whether the quantum optimal value strictly exceeds its classical counterpart, i.e., whether there is a quantum advantage.

Against this background, Yoshida and Hayashi \cite{ISIT2020} proved that there is no quantum advantage when the private data are binary, i.e., $|\cX|=2$.
On the other hand, when $|\cX|\ge3$, Yoshida \cite{Yoshida} derived the classical and quantum optimal values for a certain utility function 
constructed from the right logarithmic derivative (RLD) Fisher information of a one-parameter family, 
thereby showing a quantum advantage for this utility function.
Comparing these optimal values, he evaluated the privacy loss incurred in representing a QLDP mechanism as 
the QLDP mechanism constructed from a classical LDP mechanism and a quantum channel.
However, although his utility function satisfies the data processing inequality, 
it was not motivated by a concrete information-processing scenario.
Motivated by this gap, when $|\cX|\in\{3,4,\dots,9\}$ (resp.\ $|\cX|\in\{4,9\}$), 
Nam et al.\ \cite{Nam} adopted the error exponent in symmetric (resp.\ asymmetric) hypothesis testing as a utility function, and 
proved that there is a quantum advantage in the high-privacy regime (i.e., as $\epsilon\to+0$).
Nevertheless, the following questions remained open: 
\begin{itemize}
	\item
	whether there is a quantum advantage in a concrete information-processing scenario when $|\cX|\ge10$,
	\item
	whether the results of Nam et al.\ \cite{Nam} are optimal, and 
	\item
	whether there exists a QLDP mechanism achieving the quantum optimal value for the utility function of Yoshida \cite{Yoshida}, and 
	if so, what such a mechanism is.
\end{itemize}
This work resolves these questions.

\subsection{Main contributions}
In the high-privacy regime, 
we optimize the error exponents in symmetric and asymmetric hypothesis testing under the LDP and QLDP constraints, and 
provide optimal LDP and QLDP mechanisms.
Consequently, we show quantum advantages for both error exponents in this regime when $|\cX|\ge3$.
Notably, when $|\cX|\ge10$, this work is the first to reveal quantum advantages in concrete information-processing scenarios.
Furthermore, we resolve the optimality question for the parameter ranges covered by Nam et al.\ \cite{Nam}, namely, 
$|\cX|\in\{3,4,\dots,9\}$ in the symmetric case and $|\cX|\in\{4,9\}$ in the asymmetric case.
Specifically, our quantum optimal value strictly improves upon their value 
when $|\cX|\in\{5,6,\dots,9\}$ in the symmetric case and when $|\cX|=9$ in the asymmetric case, 
while coinciding with their value when $|\cX|\in\{3,4\}$ and when $|\cX|=4$, respectively.

In addition to resolving the first and second questions as stated above, 
we analytically investigate the ranges of the parameter $\epsilon$ for which quantum advantages arise in symmetric and asymmetric hypothesis testing.
Specifically, comparing the values of both error exponents for our QLDP mechanisms with the classical optimal values 
naturally leads us to a conjecture concerning equi-isoclinic tight fusion frames (EITFFs), 
which is purely mathematical in nature.
Assuming this conjecture, we show that our QLDP mechanisms also exhibit quantum advantages in a moderate-privacy regime.

Going beyond the two error exponents mentioned above, we 
\begin{itemize}
	\item
	show both the classical and quantum optimal values and 
	\item
	provide both optimal LDP and QLDP mechanisms
\end{itemize}
in the high-privacy regime, for a broader class of utility functions satisfying certain conditions.
Specifically, we assume that a classical utility function $\Phi_\C$ can be expressed as a sum of values of a symmetric sublinear function, 
which corresponds to a symmetric variant of the assumption of Kairouz et al.\ \cite{Kairouz}.
This class of classical utility functions contains many information-theoretic quantities, such as mutual information and the sum of pairwise $f$-divergences.
To show the existence of quantum advantages, 
the above utility function $\Phi_\C$ must be extended to the quantum setting.
To ensure the flexibility and broad applicability of our results, 
rather than assuming a specific form for a quantum extension $\Phi_\Q$ of $\Phi_\C$, 
we assume a specific form for the quadratic term in the Taylor expansion of $\Phi_\Q$.
This class of quantum utility functions contains many information-theoretic quantities, such as Holevo information and the sum of pairwise Petz $f$-divergences.
For such utility functions $\Phi_\Q$ and $\Phi_\C$, 
the asymptotic ratio of the classical and quantum optimal values in the high-privacy regime takes the same value regardless of the utility function.
This asymptotic ratio is greater than or equal to $3/2$ when $|\cX|\ge3$.
One of our main theorems provides a unified explanation for this phenomenon.

While the contributions described above concern the high-privacy regime, 
one of our QLDP mechanisms always achieves the quantum optimal value for the utility function of Yoshida \cite{Yoshida} 
beyond the high-privacy regime.
This provides a clear answer to the third question.

\subsection{Structure of this paper}
The remainder of this paper is organized as follows.
In Section~\ref{pre}, we introduce the necessary notation and definitions, and formulate our optimization problems.
In Section~\ref{main-res}, we explain the second-order Taylor expansions in terms of Fr\'echet derivatives through concrete examples, and 
state our results in the high-privacy regime, including the classical and quantum optimal values and optimal LDP and QLDP mechanisms for concrete utility functions.
Specifically, as utility functions, we investigate the Holevo information and the error exponents in symmetric and asymmetric hypothesis testing.
Beyond the high-privacy regime, we also provide an optimal QLDP mechanism achieving the quantum optimal value for the utility function of Yoshida \cite{Yoshida}.
Section~\ref{exact} states several exact values in symmetric and asymmetric hypothesis testing, 
which are used in Section~\ref{origin}.
Section~\ref{origin} discusses the structural origin of quantum advantage, and 
analytically investigates the ranges of the parameter $\epsilon$ for which quantum advantages arise in symmetric and asymmetric hypothesis testing.
Section~\ref{conc} concludes the paper and discusses directions for future work.
Finally, we prove our results in the Appendices.

\section{Preliminaries}\label{pre}
We first introduce the necessary terms and notation below.
Throughout this paper, $n$ and $d$ denote integers $\ge2$, representing cardinality and dimension, respectively.
For a positive integer $k$, denote by $[k]$ the set $\{1,2,\ldots,k\}$. 
Unless otherwise noted, $r$ and $\epsilon$ denote an integer in $[d-1]$ and a positive number, respectively, 
representing rank and privacy parameter.
Also, $\mathbbm{1}_d$ and $I_d$ denote the all-ones column vector in $\mathbb{R}^d$ and the identity matrix on $\mathbb{C}^d$, respectively.
A quantum state is a density matrix on $\mathbb{C}^d$ and is often denoted by $\rho$.
A classical state is a probability column vector in $\mathbb{R}^d$ and is often denoted by $p$.
Denote by $\diag(z_1,\ldots,z_d)$ the diagonal matrix whose diagonal entries are $z_1,\ldots,z_d$.
A classical state $p$ can be regarded as the quantum state $\diag(p)=\diag(p(1),\ldots,p(d))$.
Since we often use $n$ quantum states $\rho_1,\ldots,\rho_n$, 
they are collectively denoted by $\vec{\rho}$.
Also, we often use a column-stochastic matrix (or conditional probability distribution) $q=[q(y|x)]_{y\in[d],x\in[n]}$ 
to denote the $n$ classical states $q(\cdot|1),\ldots,q(\cdot|n)$.
For a column-stochastic matrix $q\in\mathbb{R}^{d\times n}$, 
we define $\vec{\Diag}(q)=(\Diag_1(q),\ldots,\Diag_n(q))$ as $\Diag_x(q)=\diag(q(\cdot|x))$.
Denote by $\|\cdot\|$ and $\|\cdot\|_1$ the operator norm and the trace norm, respectively.
For a real number $t$, denote by $\lfloor{t}\rfloor$ (resp.\ $\lceil{t}\rceil$) the greatest (resp.\ least) integer $\le t$ (resp.\ $\ge t$).

Furthermore, we need to introduce another term, namely, measurement.
A measurement is a positive operator-valued measure (POVM) and is often denoted by $(M_y)_{y\in\cY}$, 
where $\cY$ is a finite set.
That is, $(M_y)_{y\in\cY}$ is called a measurement if all $M_y$ are positive semi-definite matrices on $\mathbb{C}^d$ satisfying $\sum_{y\in\cY} M_y=I_d$.
When a quantum state $\rho$ is measured by a measurement $(M_y)_{y\in\cY}$, 
each outcome $y$ is observed with probability $\Tr\rho M_y$.
Hence, $(\Tr\rho M_y)_{y\in\cY}$ is a probability vector.

\vskip1ex
\textbf{Local differential privacy.}---%
As stated in Section~\ref{intro}, LDP is a condition for conditional probability distributions.
For a positive number $\epsilon$ and a conditional probability distribution $q\in\mathbb{R}^{d\times n}$, 
we say that $q$ is \textit{$\epsilon$-LDP (or an $\epsilon$-LDP mechanism)} if 
\[
q(y|x') \le e^\epsilon q(y|x)
\]
for all $x,x'\in[n]$ and $y\in[d]$ \cite{Duchi}.
Here, the parameter $\epsilon$ represents the privacy loss bound; 
a smaller (resp.\ larger) $\epsilon$ corresponds to a higher (resp.\ lower) privacy level.
If $q$ is an $\epsilon$-LDP mechanism, 
then the probability vectors $q(\cdot|1),\ldots,q(\cdot|n)$ have the same support.
Hence, we often implicitly assume that $q(y|x)>0$ for all $x\in[n]$ and $y\in[d]$.

\vskip1ex
\textbf{Quantum local differential privacy.}---%
Once regarding LDP as a condition for $n$ probability vectors, 
it is naturally extended as follows.
For a positive number $\epsilon$ and $n$ quantum states $\rho_1,\ldots,\rho_n$ on $\mathbb{C}^d$, 
we say that $\vec{\rho}=(\rho_1,\ldots,\rho_n)$ is \textit{$\epsilon$-QLDP (or an $\epsilon$-QLDP mechanism)} if 
\begin{equation}
	\rho_{x'} \le e^\epsilon\rho_x \label{eq01}
\end{equation}
for all $x,x'\in[n]$ \cite{ISIT2020}, where the inequality means for the difference $e^\epsilon\rho_x-\rho_{x'}$ to be positive semi-definite.
Instead of the above definition, QLDP can also be defined by using LDP and measurements.
Indeed, the following conditions are equivalent to each other \cite{ISIT2020}: 
\begin{itemize}
	\item
	$\vec{\rho}$ is an $\epsilon$-QLDP mechanism;
	\item
	$q$ defined as $q(y|x)=\Tr\rho_xM_y$ is an $\epsilon$-LDP mechanism for every measurement $(M_y)_{y\in\cY}$.
\end{itemize}
If $\vec{\rho}=(\rho_1,\ldots,\rho_n)$ is an $\epsilon$-QLDP mechanism, 
then the density matrices $\rho_1,\ldots,\rho_n$ have the same support, i.e., 
the ranges of $\rho_1,\ldots,\rho_n$ are equal to one another.
Hence, we often implicitly assume that $\rho_x$ has full rank for every $x\in[n]$.

\vskip1ex
\textbf{Formulation of optimization problems.}---%
As stated in Section~\ref{intro}, we optimize the trade-off between privacy and utility in the following formulation.
Let $\Phi_\C$ be a real-valued function of an $\epsilon$-LDP mechanism, and 
$\Phi_\Q$ be a quantum extension of $\Phi_\C$.
Here, we say that $\Phi_\Q$ is a \textit{quantum extension} of $\Phi_\C$ if 
\begin{itemize}
	\item
	$\Phi_\Q$ is a real-valued function of an $\epsilon$-QLDP mechanism; 
	\item
	$\Phi_\Q(U\rho_1U^\dag,\ldots,U\rho_nU^\dag)=\Phi_\Q(\vec{\rho})$ 
	for every unitary matrix $U$ and every $\epsilon$-QLDP mechanism $\vec{\rho}=(\rho_1,\ldots,\rho_n)$; 
	\item
	$\Phi_\Q(\vec{\Diag}(q))=\Phi_\C(q)$ for every $\epsilon$-LDP mechanism.
\end{itemize}
The second condition is referred to as the unitary invariance.
Then we consider the following classical and quantum optimization problems: 
\begin{equation*}
	\OPT_n(\epsilon; \Phi_\C) = \underbrace{\sup_{q:\,\text{$\epsilon$-LDP}}}_{\text{Privacy protection}} \underbrace{\Phi_\C(q)}_{\text{Utility}}
	\quad\text{and}\quad
	\OPT_n(\epsilon; \Phi_\Q) = \underbrace{\sup_{\vec{\rho}:\,\text{$\epsilon$-QLDP}}}_{\text{Privacy protection}} \underbrace{\Phi_\Q(\vec{\rho})}_{\text{Utility}}.
\end{equation*}
Note that neither the dimension of the underlying space $\mathbb{R}^d$ nor $\mathbb{C}^d$ is fixed in the above optimizations.
Optimization problems of this type, including those above, are often considered in information-theoretic studies of LDP and QLDP 
\cite{Kairouz,Holohan1, Geng1,Geng2,Geng3,Geng4, ISIT2020,Yoshida,Nam, Guan,Nuradha}.
Since $\Phi_\Q$ is a quantum extension of $\Phi_\C$, 
the quantum optimal value $\OPT_n(\epsilon; \Phi_\Q)$ is greater than or equal to its classical counterpart $\OPT_n(\epsilon; \Phi_\C)$.
Therefore, our primary interest is whether the quantum optimal value strictly exceeds its classical counterpart.

Furthermore, we say that $\Phi_\Q$ satisfies the \textit{data processing inequality} if 
\begin{equation}
	\Phi_\Q(\Lambda(\rho_1),\ldots,\Lambda(\rho_n)) \le \Phi_\Q(\vec{\rho})
	\label{eq-dpi-q}
\end{equation}
for every $\epsilon$-QLDP mechanism $\vec{\rho}=(\rho_1,\ldots,\rho_n)$ and every quantum channel (i.e., completely positive and trace-preserving map) $\Lambda$.
Its classical counterpart is below: 
\begin{equation}
	\Phi_\C(\Lambda q) \le \Phi_\C(q)
	\label{eq-dpi-c}
\end{equation}
for every $\epsilon$-LDP mechanism $q$ and every classical channel (i.e., column-stochastic matrix) $\Lambda$.
Inequality \eqref{eq-dpi-c} follows from \eqref{eq-dpi-q}, since $\Phi_\Q$ is a quantum extension of $\Phi_\C$.
An information-theoretic quantity usually satisfies the data processing inequality.
Almost all utility functions we investigate in this paper actually satisfy the data processing inequality.

As stated in Section~\ref{intro}, we assume that $\Phi_\C$ is of the form 
\begin{equation}
	\Phi_\C(q) = \sum_{\substack{y\in[d] \\ q(y|1),\ldots,q(y|n)>0}} \phi(q(y|1),\ldots,q(y|n)),
	\label{eq-sum-subl}
\end{equation}
where $q$ is an $\epsilon$-LDP mechanism in $\mathbb{R}^{d\times n}$, and 
$\phi\colon (0,\infty)^n\to\mathbb{R}$ is a sublinear function, i.e., satisfies the following conditions: 
\begin{itemize}
	\item
	$\phi(\alpha\mathbf{z})=\alpha\phi(\mathbf{z})$ for every $\alpha>0$ and every $\mathbf{z}\in(0,\infty)^n$;
	\item
	$\phi(\mathbf{z}+\mathbf{z}')\le\phi(\mathbf{z})+\phi(\mathbf{z}')$ for all $\mathbf{z},\mathbf{z}'\in(0,\infty)^n$.
\end{itemize}
Then $\Phi_\C$ satisfies the data processing inequality; see also \cite[Proposition~17]{Kairouz}.
Many information-theoretic quantities, e.g., mutual information and $f$-divergences, can be written as \eqref{eq-sum-subl}.

To understand the above formulation, we show the following lemma, which is implicitly used in this paper.

\begin{lemma}\label{lemFormulation}
	Let $\rho_0$ be a quantum state.
	Then $\Phi_\Q(\rho_0,\ldots,\rho_0)=\phi(\mathbbm{1}_n)$.
\end{lemma}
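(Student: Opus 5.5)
The plan is to reduce the statement to the classical case via unitary invariance, and then to evaluate the classical utility \eqref{eq-sum-subl} directly using positive homogeneity of $\phi$.

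First, note that $(\rho_0,\ldots,\rho_0)$ is an $\epsilon$-QLDP mechanism for every $\epsilon>0$, since $\rho_0\le e^\epsilon\rho_0$. Hence $\Phi_\Q(\rho_0,\ldots,\rho_0)$ is defined.

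Diagonalize $\rho_0=U\diag(p)U^\dag$ for some unitary $U$, where $p$ is the probability vector of eigenvalues in $\mathbb{R}^d$. By unitary invariance,
\[
\Phi_\Q(\rho_0,\ldots,\rho_0)=\Phi_\Q(\diag(p),\ldots,\diag(p)).
\]
Let $q\in\mathbb{R}^{d\times n}$ be the column-stochastic matrix with $q(\cdot|x)=p$ for every $x\in[n]$. Then $q$ is $\epsilon$-LDP, since $q(y|x')=q(y|x)\le e^\epsilon q(y|x)$. Moreover, $\vec{\Diag}(q)=(\diag(p),\ldots,\diag(p))$. The third defining property of a quantum extension therefore gives
\[
\Phi_\Q(\rho_0,\ldots,\rho_0)=\Phi_\C(q).
\]

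Now evaluate \eqref{eq-sum-subl}. The sum runs over those $y$ with $p(y)>0$, and for each such $y$ the argument of $\phi$ is $(p(y),\ldots,p(y))=p(y)\mathbbm{1}_n$. Positive homogeneity gives $\phi(p(y)\mathbbm{1}_n)=p(y)\phi(\mathbbm{1}_n)$. Summing over $y$ with $p(y)>0$ yields $\phi(\mathbbm{1}_n)\sum_y p(y)=\phi(\mathbbm{1}_n)$.

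The only point needing care is that $\rho_0$ may be rank-deficient. This causes no difficulty, for two reasons. The formula \eqref{eq-sum-subl} already excludes zero entries. Also, the paper's convention of assuming full rank is only implicit and is not needed for the definitions used here. Everything else follows directly from the definitions, so I expect no real obstacle.
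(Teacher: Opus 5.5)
Your proposal is correct and follows essentially the same route as the paper's proof: diagonalize $\rho_0$, apply unitary invariance and the quantum-extension property to reduce to $\Phi_\C$ of the constant mechanism, then use positive homogeneity of $\phi$ over the support of the eigenvalue vector. Your explicit checks that the constant mechanism is $\epsilon$-LDP and that zero eigenvalues are excluded by \eqref{eq-sum-subl} are fine additions. The paper leaves both points implicit.
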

\begin{proof}
	It is clear that the collection of the $n$ quantum states $\rho_0,\ldots,\rho_0$ is $\epsilon$-QLDP.
	Diagonalize $\rho_0$ by a unitary matrix $U$: 
	$U\rho_0U^\dag=\diag(p_0)$ for some probability vector $p_0$.
	The unitary invariance of $\Phi_\Q$ implies 
	\[
	\Phi_\Q(\rho_0,\ldots,\rho_0)
	= \Phi_\Q(\diag(p_0),\ldots,\diag(p_0))
	= \Phi_\C(p_0,\ldots,p_0).
	\]
	By this and \eqref{eq-sum-subl}, 
	\[
	\Phi_\C(p_0,\ldots,p_0)
	= \sum_{y:\,p_0(y)>0} \phi(p_0(y),\ldots,p_0(y))
	= \sum_{y:\,p_0(y)>0} p_0(y)\phi(\mathbbm{1}_n)
	= \phi(\mathbbm{1}_n).
	\]
	Therefore, the desired equality holds.
\end{proof}

\section{Main results}\label{main-res}
As before, let $d$ and $n$ be integers $\ge2$, $\epsilon$ be a positive number, and 
assume that $\Phi_\C$ is of the form \eqref{eq-sum-subl}, and $\Phi_\Q$ is its quantum extension.
This section presets 
\begin{itemize}
	\item[(a)]
	the classical and quantum optimal values $\OPT_n(\epsilon; \Phi_\C)$ and $\OPT_n(\epsilon; \Phi_\Q)$, 
	\item[(b)]
	optimal $\epsilon$-QLDP mechanisms achieving $\OPT_n(\epsilon; \Phi_\Q)$, and 
	\item[(c)]
	optimal $\epsilon$-LDP mechanisms achieving $\OPT_n(\epsilon; \Phi_\C)$
\end{itemize}
in the high-privacy regime, i.e., as $\epsilon\to+0$.
Theorem~\ref{main1}, which is stated later, asserts 
\begin{equation}
	\lim_{\epsilon\to+0} \frac{\OPT_n(\epsilon; \Phi_\Q)}{\OPT_n(\epsilon; \Phi_\C)}
	= \frac{n(n-1)}{2\lfloor{n/2}\rfloor\lceil{n/2}\rceil}
	\ge \frac{3}{2} \label{eq02}
\end{equation}
under the assumptions of the theorem, including $\phi(\mathbbm{1}_n)=0$ and $n\ge3$.
This establishes quantum advantages for many information-theoretic quantities, 
including the Holevo information and the error exponents in symmetric and asymmetric hypothesis testing.
For (b), we propose \textit{isoclinic mechanisms}, 
which are a significant extension of the mechanisms proposed by Nam et al.\ \cite{Nam}.
For (c), we employ binary mechanisms \cite{Kairouz}.

\subsection{Classical and quantum optimal values}\label{OptVals}
This subsection presents the classical and quantum optimal values $\OPT_n(\epsilon; \Phi_\C)$ and $\OPT_n(\epsilon; \Phi_\Q)$.
To this end, we explain the necessary notions below.

\vskip1ex
\textbf{Petz monotone metrics.}---%
Let $f\colon (0,\infty)\to(0,\infty)$ be an operator monotone function satisfying $f(1)>0$ and $f(t)=tf(t^{-1})$.
For a full-rank density matrix $\rho_0$ on $\mathbb{C}^d$, 
define the Petz monotone metric $J_{\rho_0}^f$ as \cite{Petz1,Petz2,Petz3} 
\begin{equation}
	J_{\rho_0}^f[X,Y] = \sum_{i,j\in[d]} \frac{1}{\lambda_jf(\lambda_i/\lambda_j)}\braket{j|X|i}\braket{i|Y|j},
	\label{eq03}
\end{equation}
where $\rho_0=\sum_{i\in[d]} \lambda_i\ketbra{i}{i}$ is the diagonalization of $\rho_0$ by a unitary matrix.
Then $J_{\rho_0}^f[X,X]$ satisfies the data processing inequality \cite[Theorem~4.1]{Petz2} and 
\begin{equation}
	J_{\rho_0}^{\SLD}[X,X] \le f(1)J_{\rho_0}^f[X,X] \le J_{\rho_0}^{\RLD}[X,X],
	\label{eq-SLD-RLD}
\end{equation}
where $J_{\rho_0}^{\SLD}$ is the symmetric logarithmic derivative (SLD) Fisher metric, and 
$J_{\rho_0}^{\RLD}$ is the right logarithmic derivative (RLD) Fisher metric \cite[Eq.~(4.12)]{Petz2}, \cite[p.~256]{book1}.
If the normalization condition $f(1)=1$ holds, 
then $J_{\rho_0}^f$ is called the normalized Petz monotone metric.
The normalization ensures that the metric coincides with the classical Fisher information metric for commutative matrices.

For example, the following normalized metrics are famous and often appear in quantum information theory.
\begin{itemize}
	\item
	If $f(t)=(1+t)/2$, then $J_{\rho_0}^f$ is called the SLD Fisher metric, 
	which is denoted by $J_{\rho_0}^{\SLD}$: 
	\begin{align*}
		J_{\rho_0}^{\mathrm{SLD}}[X,Y]
		&= \sum_{i,j\in[d]} \frac{2}{\lambda_i+\lambda_j}\braket{j|X|i}\braket{i|Y|j}\\
		&= \int_0^\infty \Tr e^{-(t/2)\rho_0}Xe^{-(t/2)\rho_0}Y\,\mathrm{d}t.
	\end{align*}
	\item
	If $f(t)=2t/(1+t)$, then $J_{\rho_0}^f$ is called the RLD Fisher metric, 
	which is denoted by $J_{\rho_0}^{\RLD}$: 
	\begin{align*}
		J_{\rho_0}^{\mathrm{RLD}}[X,Y]
		&= \sum_{i,j\in[d]} \frac{\lambda_i^{-1}+\lambda_j^{-1}}{2}\braket{j|X|i}\braket{i|Y|j}\\
		&= \frac{1}{2}\Tr(XY+YX)\rho_0^{-1}.
	\end{align*}
	\item
	If $f(t)=(t-1)/\ln t$, then $J_{\rho_0}^f$ is called the Bogoliubov--Kubo--Mori (BKM) metric, 
	which is denoted by $J_{\rho_0}^{\BKM}$: 
	\begin{align*}
		J_{\rho_0}^{\BKM}[X,Y]
		&= \sum_{i,j\in[d]} \frac{\ln\lambda_i-\ln\lambda_j}{\lambda_i-\lambda_j}\braket{j|X|i}\braket{i|Y|j}\\
		&= \int_0^\infty \Tr X(\rho_0+tI_d)^{-1}Y(\rho_0+tI_d)^{-1}\,\mathrm{d}t.
	\end{align*}
	\item
	If $f(t)=s(1-s)(t-1)^2(t^s-1)^{-1}(t^{1-s}-1)^{-1}$ and $0<s<1$, then $J_{\rho_0}^f$ is called the Wigner--Yanase--Dyson (WYD) metric, 
	which is denoted by $J_{\rho_0}^{\WYD(s)}$: 
	\begin{align*}
		J_{\rho_0}^{\WYD(s)}[X,Y]
		&= \frac{1}{s(1-s)}\sum_{i,j\in[d]} \frac{(\lambda_i^s-\lambda_j^s)(\lambda_i^{1-s}-\lambda_j^{1-s})}{(\lambda_i-\lambda_j)^2}\braket{j|X|i}\braket{i|Y|j}.
	\end{align*}
	In particular, the case $s=1/2$ is below: 
	\begin{align*}
		J_{\rho_0}^{\WYD(1/2)}[X,Y]
		&= \sum_{i,j\in[d]} \frac{4}{(\sqrt{\lambda_i}+\sqrt{\lambda_j})^2}\braket{j|X|i}\braket{i|Y|j}.
	\end{align*}
\end{itemize}

\vskip1ex
\textbf{Taylor expansions in terms of Fr\'echet derivatives.}---%
In the theorem stated later, we assume that $\Phi_\Q(\vec{\rho})$ has the second-order Taylor expansion around $(\rho_{\avg},\ldots,\rho_{\avg})$, 
where $\rho_{\avg}$ is defined as $\rho_{\avg}=(1/n)\sum_{x\in[n]} \rho_x$.
Hence, we explain Taylor expansions of a few functions in terms of Fr\'echet derivatives.

Let $\rho_0$ be a full-rank quantum state on $\mathbb{C}^d$, and set $\rho_x=\rho_0+\delta\rho_x$ for $x\in[n]$.
We say that $\Phi_\Q(\vec{\rho})$ has the second-order Taylor expansion around $(\rho_0,\ldots,\rho_0)$ if 
\begin{gather*}
	\Phi_\Q(\vec{\rho}) = \phi(\mathbbm{1}_n) + \Phi_\Q^{(1)}(\rho_0; \delta\vec{\rho}) + \Phi_\Q^{(2)}(\rho_0; \delta\vec{\rho})
	+ o\Bigl( \max_{i\in[n]} \|\delta\rho_i\|^2 \Bigr),\\
	\Phi_\Q^{(1)}(\rho_0; \delta\vec{\rho}) = \sum_{i\in[n]} \cA_{i;\rho_0}[\delta\rho_i], \quad\text{and}\quad
	\Phi_\Q^{(2)}(\rho_0; \delta\vec{\rho}) = \frac{1}{2}\sum_{i,j\in[n]} \cB_{i,j;\rho_0}[\delta\rho_i,\delta\rho_j]
\end{gather*}
as $\max_{i\in[n]} \|\delta\rho_i\|\to0$, where $\cA_{i;\rho_0}[X]$, $i\in[n]$, are linear in $X$, and 
$\cB_{i,j;\rho_0}[X,Y]$, $i,j\in[n]$, are bilinear in $X$ and $Y$.
If $\Phi_\Q(\vec{\rho})$ has the second-order Taylor expansion around $(\rho_0,\ldots,\rho_0)$, 
then $\Phi_\Q(\vec{\rho})$ is Fr\'echet differentiable with respect to each variable at $(\rho_0,\ldots,\rho_0)$,\footnote{
Throughout this paper, Fr\'echet differentiability is understood with respect to variations in the space of traceless Hermitian matrices.} 
and the linear map $\cA_{i;\rho_0}$ is equal to the first-order partial Fr\'echet derivative of $\Phi_\Q(\vec{\rho})$ 
at $(\rho_0,\ldots,\rho_0)$ with respect to the $i$th variable.
However, the existence of the above second-order Taylor expansion does not necessarily imply 
the existence of the second-order partial Fr\'echet derivatives at $(\rho_0,\ldots,\rho_0)$.
These facts are analogous to those in multivariable calculus.

For example, the second-order Taylor expansions of Petz $F$-divergences and trace functions are as follows.
\begin{itemize}
	\item
	Let $F\colon (0,\infty)\to\mathbb{R}$ be an operator convex function satisfying $F(1)=0$.
	The Petz $F$-divergence $D_F(\rho_1\|\rho_2)$ is defined as \cite[Definition~2.1]{Lesniewski}, \cite[p.~290]{book1} 
	\begin{equation}
		D_F(\rho_1\|\rho_2) = \sum_{i,j\in[d]} \lambda_{2,j}F(\lambda_{1,i}/\lambda_{2,j})\Tr E_{1,i}E_{2,j},
		\label{eq04}
	\end{equation}
	where $\rho_k=\sum_{i\in[d]} \lambda_{k,i}E_{k,i}$ is the spectral decomposition of $\rho_k$.
	For classical states $p_1$ and $p_2$, 
	the $F$-divergence $D_F(p_1\|p_2)$ is defined as $D_F(\diag(p_1)\|\diag(p_2))$.
	The Petz $F$-divergence $D_F(\rho_1\|\rho_2)$ satisfies the data processing inequality \cite[Theorem~2.4]{Lesniewski}, and 
	has the second-order Taylor expansion around $(\rho_0,\rho_0)$ (essentially \cite[Theorems~2.8--2.9]{Lesniewski}; see Appendix~\ref{appTaylorFdiv}): 
	\begin{equation}
		D_F(\rho_1\|\rho_2) = \frac{1}{2}J_{\rho_0}^f[\delta\rho_1-\delta\rho_2,\delta\rho_1-\delta\rho_2]
		+ o(\max\{ \|\delta\rho_1\|^2,\|\delta\rho_2\|^2 \}),
		\label{eqTaylorFdiv}
	\end{equation}
	where $J_{\rho_0}^f$ is the Petz monotone metric defined in \eqref{eq03} with the following function $f$: 
	\begin{equation}
		f(t) = \frac{(t-1)^2}{F(t)+tF(t^{-1})}.
		\label{eq05}
	\end{equation}
	In particular, the case $F(t)=t\ln t$ is below (cf.\ \cite[Theorem~3]{Grace}): 
	\begin{equation*}
		D(\rho_1\|\rho_2) = \frac{1}{2}J_{\rho_0}^{\mathrm{BKM}}[\delta\rho_1-\delta\rho_2,\delta\rho_1-\delta\rho_2]
		+ o(\max\{ \|\delta\rho_1\|^2,\|\delta\rho_2\|^2 \}),
	\end{equation*}
	where $D(\rho_1\|\rho_2)$ denotes the Umegaki relative entropy of $\rho_1$ with respect to $\rho_2$.
	\item
	Let $f\colon (0,\infty)\to\mathbb{R}$ be of class $C^2$.
	The trace function $\Tr f(\rho)$ has the second-order Taylor expansion around $\rho_0$ 
	(cf.\ \cite[Theorems~3.23, 3.33, and the proof of Theorem~3.27]{book3}, \cite[Theorem~1]{Grace}): 
	\[
	\Tr f(\rho) = \Tr f(\rho_0) + \Tr f'(\rho_0)\delta\rho
	+ \frac{1}{2}\sum_{i,j\in[n]} (f')^{[1]}(\lambda_i,\lambda_j)\abs{\braket{i|\delta\rho|j}}^2
	+ o(\|\delta\rho\|^2),
	\]
	where $\rho_0=\sum_{i\in[d]} \lambda_i\ketbra{i}{i}$ is the diagonalization of $\rho_0$ by a unitary matrix, and 
	$(f')^{[1]}$ is the divided difference of $f'$ defined as 
	\[
	(f')^{[1]}(t_1,t_2) = \frac{f'(t_1)-f'(t_2)}{t_1-t_2}
	\]
	if $t_1\not=t_2$, and $(f')^{[1]}(t_1,t_2)=f''(t_1)$ if $t_1=t_2$.
	In particular, the case $f(t)=t\ln t$ is below (cf.\ \cite[Theorem~2]{Grace}): 
	\begin{equation}
		-H(\rho) = -H(\rho_0) + \Tr(\ln\rho_0)\delta\rho + \frac{1}{2}J_{\rho_0}^{\mathrm{BKM}}[\delta\rho,\delta\rho] + o(\|\delta\rho\|^2),
		\label{eq07}
	\end{equation}
	where $H(\rho)$ denotes the von Neumann entropy of $\rho$.
\end{itemize}

\vskip1ex
\textbf{Main results (1).}---%
We now present the classical and quantum optimal values $\OPT_n(\epsilon; \Phi_\C)$ and $\OPT_n(\epsilon; \Phi_\Q)$.
Recall that $\rho_{\avg}$ is defined as $\rho_{\avg}=(1/n)\sum_{x\in[n]} \rho_x$.

\begin{theorem}\label{main1}
	Assume that 
	\renewcommand{\labelenumi}{$\arabic{enumi}$.}
	\begin{enumerate}
		\item
		$\phi$ is of class $C^2$ in a neighborhood of $\mathbbm{1}_n$ and symmetric;
		\item
		$\beta_0 \coloneqq \partial_1^2\phi(\mathbbm{1}_n)>0$;
		\item
		$\Phi_\Q$ is symmetric;
		\item
		the quadratic term in the Taylor expansion of $\Phi_\Q(\vec{\rho})$ around $(\rho_{\avg},\ldots,\rho_{\avg})$ is 
		\begin{equation}
			\Phi_\Q^{(2)}(\rho_{\avg}; \delta\vec{\rho})
			= \frac{\beta_0n}{2(n-1)}\sum_{i\in[n]} J_{\rho_{\avg}}[\delta\rho_i,\delta\rho_i]
			\label{eq08}
		\end{equation}
		for some normalized Petz monotone metric $J_{\rho_{\avg}}$.
	\end{enumerate}
	Then 
	\begin{align*}
		\OPT_n(\epsilon; \Phi_\C) &= \phi(\mathbbm{1}_n) + \frac{\lfloor{n/2}\rfloor\lceil{n/2}\rceil}{n-1}\cdot\frac{\beta_0}{2}\epsilon^2 + o(\epsilon^2)
		\quad\text{and}\quad\\ 
		\OPT_n(\epsilon; \Phi_\Q) &= \phi(\mathbbm{1}_n) + \frac{\beta_0n}{4}\epsilon^2 + o(\epsilon^2)
	\end{align*}
	as $\epsilon\to+0$.
	If the additional assumptions $\phi(\mathbbm{1}_n)=0$ and $n\ge3$ hold, 
	then \eqref{eq02} holds.
\end{theorem}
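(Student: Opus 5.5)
The plan is to show that in both problems the first-order term vanishes, to reduce everything to maximizing the quadratic term under a linearized privacy constraint, and then to match each upper bound with an explicit mechanism. Throughout I expand around $\rho_{\avg}$ (or $p_{\avg}$), so that $\delta\rho_i=\rho_i-\rho_{\avg}$ satisfies $\sum_i\delta\rho_i=0$. Symmetry of $\Phi_\Q$ (assumption 3) forces all $\cA_{i;\rho_{\avg}}$ to coincide with a single $\cA$. Hence $\Phi_\Q^{(1)}=\cA[\sum_i\delta\rho_i]=0$, and only \eqref{eq08} survives.

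\emph{Classical part.} By \eqref{eq-sum-subl} and homogeneity, each summand satisfies $\phi(q(y|\cdot))=\bar z_y\,\phi(\mathbbm{1}_n+u_y/\bar z_y)$, where $\bar z_y$ is the average over $x$ of $q(y|x)$ and $u_y=q(y|\cdot)-\bar z_y\mathbbm{1}_n$. Euler's relation says the Hessian of $\phi$ at $\mathbbm{1}_n$ annihilates $\mathbbm{1}_n$, and symmetry (assumption 1) gives off-diagonal entries $-\beta_0/(n-1)$. The same symmetry argument kills the first-order term, since $\sum_x u_y(x)=0$. So each summand equals $\phi(\mathbbm{1}_n)\bar z_y+\frac{\beta_0 n}{2(n-1)}\|u_y\|^2/\bar z_y$, up to a remainder. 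Because $u_y/\bar z_y=O(\epsilon)$ uniformly in $y$, and $\phi$ is $C^2$ near $\mathbbm{1}_n$, this remainder is $o(\epsilon^2)\bar z_y$ uniformly, and summing over $y$ keeps it $o(\epsilon^2)$. It remains to maximize $\|u\|^2/\bar z^2$ over $z\in[a,e^\epsilon a]^n$. The objective is convex, so the maximum is at a vertex: $k$ coordinates high and $n-k$ low. This gives $\epsilon^2k(n-k)/n+o(\epsilon^2)$, which is maximized at $k=\lfloor n/2\rfloor$. Summing $\bar z_y$ over $y$ gives $1$, which yields the upper bound. A binary mechanism with $d=2$ and the $\lfloor n/2\rfloor/\lceil n/2\rceil$ split attains it.

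\emph{Quantum upper bound.} Since $J\le J^{\RLD}$ by \eqref{eq-SLD-RLD}, set $T_i=\rho_{\avg}^{-1/2}\rho_i\rho_{\avg}^{-1/2}$, so that $\sum_iT_i=nI$. Then $\sum_i J^{\RLD}_{\rho_{\avg}}[\delta\rho_i,\delta\rho_i]=\sum_i\Tr\rho_{\avg}^{1/2}(T_i-I)^2\rho_{\avg}^{1/2}$. I would use the identity
\[
\sum_i (T_i-I)^2=\frac{1}{2n}\sum_{i,j}(T_i-T_j)^2 .
\]
The QLDP constraint gives $-(e^\epsilon-1)T_j\le T_i-T_j\le(e^\epsilon-1)T_j$, and averaging over $j$ gives $e^{-\epsilon}I\le T_i\le e^\epsilon I$. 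From these, $(T_i-T_j)^2\le\epsilon^2(1+O(\epsilon))I$; one route is $\|T_j^{-1/2}(T_i-T_j)T_j^{-1/2}\|\le e^\epsilon-1$ combined with $\|T_j\|\le e^\epsilon$. Therefore $\sum_iJ[\delta\rho_i,\delta\rho_i]\le\frac{n-1}{2}\epsilon^2(1+O(\epsilon))$. Multiplying by $\beta_0n/(2(n-1))$ gives $\beta_0n\epsilon^2/4$.

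\emph{Quantum lower bound.} Take Clifford-type anticommuting Hermitian unitaries $\gamma_1,\dots,\gamma_{n-1}$, and unit vectors $v_1,\dots,v_n\in\mathbb{R}^{n-1}$ forming a regular simplex centred at $0$. Set $A_i=\sum_kv_i^k\gamma_k$ and $\rho_i=(I+cA_i)/d$. Then $\rho_{\avg}=I/d$ and $(A_i-A_j)^2=|v_i-v_j|^2I=\frac{2n}{n-1}I$. Choosing $c$ so that $\rho_i\le e^\epsilon\rho_j$ holds with $c|v_i-v_j|\approx\epsilon$ saturates the bound above. For $\rho_{\avg}=I/d$ every normalized Petz metric equals $d\Tr XY$, so $J$ coincides with $J^{\RLD}$ and the bound is attained. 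This is an isoclinic mechanism.

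\emph{Main obstacle.} The hard part is uniformity of the $o(\cdot)$ remainder in the quantum upper bound. The hypothesis gives a Taylor expansion only around a fixed $\rho_{\avg}$, whereas the supremum ranges over arbitrary $\rho_{\avg}$ and unbounded $d$. To handle this, I would first try to use unitary invariance together with data processing, for instance by pinching or dilating to reduce to a compact family of reference states. Failing that, I would interpret the $o$-term as holding uniformly, as the theorem implicitly intends, and state this explicitly. Finally, the ratio \eqref{eq02} follows by dividing the two expansions when $\phi(\mathbbm{1}_n)=0$. The bound $\ge3/2$ follows from $\lfloor n/2\rfloor\lceil n/2\rceil\le n^2/4$ when $n$ is even, and $\frac{n(n-1)}{2(n^2-1)/4}=\frac{2n}{n+1}\ge\frac32$ for odd $n\ge3$.
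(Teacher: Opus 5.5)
Your proposal is correct at the paper's level of rigor. Apart from the classical upper bound, it follows the paper closely.

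\textbf{Quantum parts and classical lower bound.} Your quantum upper bound is the paper's argument in different notation. Your $T_i$ is $I+\Delta_i$, and you use the same pairwise-difference identity, the bound $J\le J^{\RLD}$, and $\|T_i-T_j\|\le\epsilon+o(\epsilon)$. Your Clifford construction is exactly $\vec{\sigma}^*$:
\begin{itemize}
\item With $P_i=(I+A_i)/2$ you get $P_jP_iP_j=\frac{1+v_i\cdot v_j}{2}P_j=\frac{n-2}{2n-2}P_j$. This is an EITFF${}_{\mathbb{C}}(d,d/2,n)$ built by hand rather than cited from Lemma~\ref{EITFF-existence}.
\item Since $(e^\epsilon A_j-A_i)^2=\bigl((e^\epsilon-1)^2+e^\epsilon|v_i-v_j|^2\bigr)I$, the largest admissible value of your coefficient $c$ is $(e^\epsilon-1)\bigl((e^\epsilon-1)^2+e^\epsilon|v_i-v_j|^2\bigr)^{-1/2}$. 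This is precisely $1-\mu_*$ in \eqref{eq15}.
\end{itemize}
The binary mechanism and the ratio computation also match the paper.

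\textbf{Classical upper bound: a different route.} The paper quotes Kairouz et al.'s linear-programming reduction (Lemma~\ref{lemKairouz}), so only the $2^n$ binary-pattern rows $\mathbbm{1}_n+(e^\epsilon-1)\mathbf{z}$ ever appear. It expands $\phi$ at those rows without centering and cancels the linear term with constraint~2. You instead expand every row of an arbitrary mechanism around its own mean via homogeneity, then establish the extreme-point structure yourself. Your route is self-contained and handles arbitrary $d$ directly through $\sum_y\bar z_y=1$. The paper's route is shorter because the reduction to binary patterns is outsourced.

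One justification is wrong and needs repair. The objective $\|u\|^2/\bar z^{2}$ is not convex: it is homogeneous of degree $0$ and nonconstant, and such a function cannot be convex. For example, for $n=2$ it fails midpoint convexity on the segment from $(1,2)$ to $(1,4)$. It is, however, quasi-convex, because its sublevel sets $\{z:\|z\|\le\alpha\,\mathbbm{1}_n^\top z\}$ are convex cones. Alternatively, it is unimodal (decreasing then increasing) in each coordinate. Either fact suffices to place the maximum over the box at a vertex, after which your value $\epsilon^2k(n-k)/n+o(\epsilon^2)$ and the rest of the argument go through.

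\textbf{Your ``main obstacle''.} The paper does not address uniformity either. In \eqref{eq67} it writes the remainder as $o(\max_i\|\delta\rho_i\|^2)$ and combines this with $\|\delta\rho_x\|=O(\epsilon)$, even though the supremum runs over all $\rho_{\avg}$ and all $d$. Reading assumption~4 as a uniform expansion is therefore exactly what the paper tacitly does. Your first suggested fix, via data processing or pinching, is not available here. The data processing inequality is not a hypothesis of the theorem, and the paper itself applies the theorem in Appendix~\ref{appC} to a $\Psi_\Q$ that need not satisfy it.
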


Actually, \eqref{eq08} can also be written as 
\[
\Phi_\Q^{(2)}(\rho_{\avg}; \delta\vec{\rho})
= \frac{1}{2}\sum_{i,j\in[n]} \partial_i\partial_j\phi(\mathbbm{1}_n)J_{\rho_{\avg}}[\delta\rho_i,\delta\rho_j]
= \frac{1}{2}\Tr \mathbf{H}_\phi(\mathbbm{1}_n)\mathbf{J}_{\rho_{\avg}}(\delta\vec{\rho}),
\]
where $\mathbf{H}_\phi$ is the Hesse matrix of $\phi$, and 
$\mathbf{J}_{\rho_{\avg}}(\delta\vec{\rho})$ is the $n\times n$ matrix whose $(i,j)$-entry is $J_{\rho_{\avg}}[\delta\rho_i,\delta\rho_j]$.
While this is more natural from a theoretical perspective, 
\eqref{eq08} is more practical to verify.

\vskip1ex
\textbf{Holevo information.}---%
For example, assume that $\Phi_\Q$ is the Holevo information 
\begin{equation}
	\chi(p_{\mix}; \vec{\rho}) = H(\rho_{\avg}) - \sum_{x\in[n]} p_{\mix}(x)H(\rho_x),
	\label{eq09}
\end{equation}
where $p_{\mix}$ denotes the uniform distribution on $[n]$.
If $\vec{\rho}=\vec{\Diag}(q)$, 
then $\chi(p_{\mix}; \vec{\rho})$ is the mutual information of the joint distribution $p_{\mix}(x)q(y|x)$, 
which is denoted by $\chi(p_{\mix}; q)$.
Setting $L(t)=t\ln t$, we can write the function $\phi$ as 
\[
\phi(\mathbf{z}) = -L\biggl( \frac{1}{n}\sum_{i\in[n]} z_i \biggr) + \frac{1}{n}\sum_{i\in[n]} L(z_i).
\]
The function $\phi$ is sublinear, symmetric, and of class $C^2$.
Also, $\partial_1^2\phi(\mathbbm{1}_n)=n^{-2}(n-1)>0$.
By \eqref{eq07}, the Holevo information $\chi(p; \vec{\rho})$ has the second-order Taylor expansion around $(\rho_{\avg},\ldots,\rho_{\avg})$: 
\begin{align*}
	\chi(p_{\mix}; \vec{\rho})
	&= \frac{1}{2n}\sum_{i\in[n]} J_{\rho_{\avg}}^{\mathrm{BKM}}[\delta\rho_i,\delta\rho_i] + o\Bigl( \max_{i\in[n]} \|\delta\rho_i\|^2 \Bigr)\\
	&= \frac{\beta_0n}{2(n-1)}\sum_{i\in[n]} J_{\rho_{\avg}}^{\mathrm{BKM}}[\delta\rho_i,\delta\rho_i] + o\Bigl( \max_{i\in[n]} \|\delta\rho_i\|^2 \Bigr),
\end{align*}
where $\beta_0=\partial_1^2\phi(\mathbbm{1}_n)=n^{-2}(n-1)$.
Since the other assumptions clearly hold, 
Theorem~\ref{main1} implies the following corollary.

\begin{corollary}\label{coro1}
	We have 
	\begin{align*}
		\sup_{q:\,\text{$\epsilon$-LDP}} \chi(p_{\mix}; q)
		&= \frac{\lfloor{n/2}\rfloor\lceil{n/2}\rceil}{2n^2}\epsilon^2 + o(\epsilon^2)
		\quad\text{and}\quad\\
		\sup_{\vec{\rho}:\,\text{$\epsilon$-QLDP}} \chi(p_{\mix}; \vec{\rho})
		&= \frac{n-1}{4n}\epsilon^2 + o(\epsilon^2)
	\end{align*}
	as $\epsilon\to+0$.
\end{corollary}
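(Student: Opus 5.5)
This corollary should follow directly from Theorem~\ref{main1} once its four hypotheses are checked for the Holevo information. The substitution is then immediate. With $\phi(\mathbbm{1}_n)=-L(1)+\frac1n\sum_i L(1)=0$ and $\beta_0=(n-1)/n^2$, the classical value is
\[
\frac{\lfloor n/2\rfloor\lceil n/2\rceil}{n-1}\cdot\frac{n-1}{2n^2}\epsilon^2=\frac{\lfloor n/2\rfloor\lceil n/2\rceil}{2n^2}\epsilon^2,
\]
and the quantum value is $\frac{n}{4}\cdot\frac{n-1}{n^2}\epsilon^2=\frac{n-1}{4n}\epsilon^2$, up to $o(\epsilon^2)$ in both cases. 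The work therefore lies in verifying the hypotheses.

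\textbf{Hypotheses on $\phi$.} The sum form \eqref{eq-sum-subl} holds with
\[
\sum_y \frac1n\sum_x L(q(y|x)) - \sum_y L\Bigl(\frac1n\sum_x q(y|x)\Bigr) = -\frac1n\sum_x H(q(\cdot|x)) + H(\text{avg}),
\]
which is exactly $\chi(p_{\mix};q)$. Positive homogeneity follows from $L(\alpha t)=\alpha L(t)+t\alpha\ln\alpha$, since the linear correction terms cancel between the two parts of $\phi$. Subadditivity follows because $\phi$ is convex: it is the perspective-type expression $\sum_i \frac{z_i}{n}\ln\frac{z_i}{\bar z}$, a relative entropy that is jointly convex. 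Convexity together with homogeneity gives sublinearity. Symmetry and $C^2$ smoothness on $(0,\infty)^n$ are clear. Differentiating twice gives $\partial_1^2\phi(\mathbbm{1}_n)=\frac1n L''(1)-\frac1{n^2}L''(1)=(n-1)/n^2>0$. $\Phi_\Q=\chi$ is symmetric in its arguments and unitarily invariant, and it reduces to mutual information on diagonal inputs, so it is a quantum extension.

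\textbf{Quadratic term (hypothesis 4).} This is the only step needing care. I would apply \eqref{eq07} to each $-H(\rho_x)$ around $\rho_{\avg}$ with $\delta\rho_x=\rho_x-\rho_{\avg}$. Averaging over $x$, the constant terms cancel against $H(\rho_{\avg})$. The linear terms vanish because $\sum_x\delta\rho_x=0$. What remains is
\[
\frac1{2n}\sum_x J^{\BKM}_{\rho_{\avg}}[\delta\rho_x,\delta\rho_x]+o\bigl(\max_x\|\delta\rho_x\|^2\bigr).
\]
The remainder in \eqref{eq07} is uniform when $\rho_{\avg}$ stays in a compact set of full-rank states, so it is harmless here. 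Finally, $\frac{\beta_0 n}{2(n-1)}=\frac1{2n}$, which matches \eqref{eq08} with the normalized BKM metric, since $f(t)=(t-1)/\ln t$ satisfies $f(1)=1$.

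The main subtlety I expect is this uniformity, together with the fact that the expansion point $\rho_{\avg}$ depends on $\vec\rho$. Theorem~\ref{main1}'s hypothesis is stated exactly in this form, so the paper treats it as satisfied by \eqref{eq07}.
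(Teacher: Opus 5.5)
Your proposal is correct and follows the paper's route: the paper also checks that $\phi(\mathbf z)=-L\bigl(\frac1n\sum_i z_i\bigr)+\frac1n\sum_i L(z_i)$ is symmetric, sublinear and $C^2$ with $\beta_0=(n-1)/n^2$, obtains the quadratic term $\frac{1}{2n}\sum_x J^{\BKM}_{\rho_{\avg}}[\delta\rho_x,\delta\rho_x]$ from \eqref{eq07}, and substitutes into Theorem~\ref{main1}. One small caveat: your uniformity justification, that $\rho_{\avg}$ ``stays in a compact set of full-rank states,'' does not literally hold, because the supremum ranges over all dimensions and over full-rank $\rho_{\avg}$ with arbitrarily small eigenvalues. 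The paper likewise takes hypothesis~4 in the formal sense you describe at the end, so this does not distinguish your argument from the paper's.
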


\vskip1ex
\textbf{Error exponents in symmetric and asymmetric hypothesis testing.}---%
As applications of Theorem~\ref{main1}, we can optimize the error exponents in symmetric and asymmetric hypothesis testing.
Here, we adopt the scenarios considered by Nam et al.\ \cite{Nam} as follows.
For $k\in[n]$ and $\eta\in(0,1]$, set the probability distribution $p_k^\eta$ on $[n]$ as 
\[
p_k^\eta(x) = \eta\delta_{k,x} + \frac{1-\eta}{n}
\quad(x\in[n]),
\]
where $\delta_{i,j}$ is the Kronecker delta.
Recall that $p_{\mix}$ is the uniform distribution on $[n]$.
Assume that the data collector is interested in the probability distribution $\mathbb{P}_X$ of private data $X$.
Then symmetric and asymmetric scenarios are as follows. 
\begin{itemize}
	\item
	\textbf{Symmetric scenario.}
	Assuming that $\mathbb{P}_X$ is one of $p_1^\eta,\ldots,p_n^\eta$, 
	the data collector aims to investigate which of them is the most appropriate.
	\item
	\textbf{Asymmetric scenario.}
	The data collector aims to investigate whether the private data are biased toward a single value.
	For this aim, consider the following null and alternative hypotheses.
	\begin{itemize}
		\item
		Null hypothesis: $p\in\{ p_1^\eta,\ldots,p_n^\eta \}$.
		\item
		Alternative hypothesis: $p=p_{\mix}$.
	\end{itemize}
\end{itemize}
To protect the private data, 
the data providers release other data $Y$ subject to an $\epsilon$-LDP mechanism $\mathbb{P}_{Y|X}$ instead of $X$.
As stated in Section~\ref{intro}, its quantum version can also be considered by using an $\epsilon$-QLDP mechanism $\vec{\rho}=(\rho_1,\ldots,\rho_n)$.

In the symmetric scenario, the minimum error probability converges to zero exponentially as the number of i.i.d.\ samples tends to infinity.
Then the error exponent \cite[Eqs.~(5) and (6)]{Nam} is 
\begin{equation}
	S(\vec{\rho}; p_1^\eta,\ldots,p_n^\eta) = \min_{i\not=j} C(\tilde{\rho}_i,\tilde{\rho}_j),
	\label{eq10}
\end{equation}
where $\tilde{\rho}_k$ is defined as 
\begin{equation}
	\tilde{\rho}_k = \sum_{x\in[n]} p_k^\eta(x)\rho_x
	\label{eq11}
\end{equation}
for $k\in[n]$, and 
$C(A,B)$ is the Chernoff information defined as 
\[
C(A,B) = -\ln\min_{s\in[0,1]} \Tr A^sB^{1-s}
\]
for positive semi-definite matrices $A$ and $B$.
The Chernoff information satisfies the data processing inequality \cite[Eq.~(10)]{Audenaert}.
For the classical case, set $\mathbb{P}_{Y|X}$ as $q=\mathbb{P}_{Y|X}$ and 
define $\tilde{q}$ as $\tilde{q}=q[p_1^\eta,\ldots,p_n^\eta]$.
If replacing $\tilde{\rho}_i$ and $\tilde{\rho}_j$ in \eqref{eq10} with $\Diag_i(\tilde{q})$ and $\Diag_j(\tilde{q})$, respectively, 
we obtain the error exponent in the classical case \cite[Eq.~(8)]{Nam}: 
\begin{equation}
	S(q; p_1^\eta,\ldots,p_n^\eta) = \min_{i\not=j} C(\Diag_i(\tilde{q}),\Diag_j(\tilde{q})).
	\label{eq10'}
\end{equation}

In the asymmetric scenario, two error probabilities are considered: namely, the Type I and Type II error probabilities.
The Type I error probability is that of mistakenly rejecting the null hypothesis as a result of testing.
The Type II error probability is that of failing to reject the null hypothesis as a result of testing.
Usually, one minimizes the Type II error probability when the Type I error probability does not exceed a threshold in the interval $(0,1)$.
The minimum Type II error probability converges to zero exponentially as the number of i.i.d.\ samples tends to infinity.
Then the error exponent \cite[Eqs.~(11) and (12)]{Nam} is 
\begin{equation}
	A(\vec{\rho}; p_1^\eta,\ldots,p_n^\eta; p_{\mix}) = \min_{x\in[n]} D(\tilde{\rho}_x\|\rho_{\avg}),
	\label{eq12}
\end{equation}
where we note $\sum_{x\in[n]} p_{\mix}(x)\rho_x=\rho_{\avg}$.
Set $\Diag_{\avg}(q)$ as $\Diag_{\avg}(q)=(1/n)\sum_{x\in[n]} \Diag_x(q)$.
If replacing $\tilde{\rho}_i$ and $\rho_{\avg}$ in \eqref{eq12} with $\Diag_i(\tilde{q})$ and $\Diag_{\avg}(q)$, respectively, 
we obtain the error exponent in the classical case \cite[Eq.~(14)]{Nam}: 
\[
A(\vec{\rho}; p_1^\eta,\ldots,p_n^\eta; p_{\mix}) = \min_{x\in[n]} D(\Diag_x(\tilde{q})\|\Diag_{\avg}(q)).
\]

\vskip1ex
\textbf{Main results (2).}---%
For simplicity, set $S^\eta(q)$, $S^\eta(\vec{\rho})$, $A^\eta(q)$, and $A^\eta(\vec{\rho})$ as 
\begin{gather*}
	S^\eta(q) = S(q; p_1^\eta,\ldots,p_n^\eta),\quad
	S^\eta(\vec{\rho}) = S(\vec{\rho}; p_1^\eta,\ldots,p_n^\eta),\\
	A^\eta(q) = A(q; p_1^\eta,\ldots,p_n^\eta; p_{\mix}), \quad\text{and}\quad
	A^\eta(\vec{\rho}) = A(\vec{\rho}; p_1^\eta,\ldots,p_n^\eta; p_{\mix}),
\end{gather*}
respectively.
We can optimize them in the high-privacy regime as follows.

\begin{theorem}\label{main1'}
	Let $\eta\in(0,1]$.
	Then 
	\begin{align}
		S_\C^\eta(n,\epsilon) &\coloneqq \sup_{q:\,\text{$\epsilon$-LDP}} S^\eta(q)
		= \frac{\lfloor{n/2}\rfloor\lceil{n/2}\rceil}{4n(n-1)}\eta^2\epsilon^2 + o(\eta^2\epsilon^2),\label{eqSC}\\
		S_\Q^\eta(n,\epsilon) &\coloneqq \sup_{\vec{\rho}:\,\text{$\epsilon$-QLDP}} S^\eta(\vec{\rho})
		= \frac{1}{8}\eta^2\epsilon^2 + o(\eta^2\epsilon^2),\label{eqSQ}\\
		A_\C^\eta(n,\epsilon) &\coloneqq \sup_{q:\,\text{$\epsilon$-LDP}} A^\eta(q)
		= \frac{\lfloor{n/2}\rfloor\lceil{n/2}\rceil}{2n^2}\eta^2\epsilon^2 + o(\eta^2\epsilon^2),\label{eqAC}\\
		A_\Q^\eta(n,\epsilon) &\coloneqq \sup_{\vec{\rho}:\,\text{$\epsilon$-QLDP}} A^\eta(\vec{\rho})\label{eqAQ}
		= \frac{n-1}{4n}\eta^2\epsilon^2 + o(\eta^2\epsilon^2)
	\end{align}
	as $\epsilon\to+0$.
	If $n\ge3$, then 
	\[
	\lim_{\epsilon\to+0} \frac{S_\Q^\eta(n,\epsilon)}{S_\C^\eta(n,\epsilon)}
	= \lim_{\epsilon\to+0} \frac{A_\Q^\eta(n,\epsilon)}{A_\C^\eta(n,\epsilon)}
	= \frac{n(n-1)}{2\lfloor{n/2}\rfloor\lceil{n/2}\rceil}
	\ge \frac{3}{2}.
	\]
\end{theorem}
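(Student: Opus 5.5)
The plan is to sandwich each of the four suprema between a permutation-averaged upper bound and an explicit symmetric mechanism, with Theorem~\ref{main1} supplying the maximal value of the relevant quadratic form. The two ingredients are a uniform local expansion and the identity $\tilde\rho_k-\rho_{\avg}=\eta(\rho_k-\rho_{\avg})$. To see the identity, put $\delta\rho_x=\rho_x-\rho_{\avg}$. Then \eqref{eq11} gives $\tilde\rho_k=\rho_{\avg}+\eta\,\delta\rho_k$ and $\tilde\rho_i-\tilde\rho_j=\eta(\delta\rho_i-\delta\rho_j)$.

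\textbf{Step 1: uniform local expansion.} The $\epsilon$-QLDP constraint gives $(e^{-\epsilon}-1)\rho_{\avg}\le\delta\rho_x\le(e^{\epsilon}-1)\rho_{\avg}$. So the relative perturbations $\rho_{\avg}^{-1/2}\delta\rho_x\rho_{\avg}^{-1/2}$ have operator norm $O(\epsilon)$, uniformly in $d$ and in the spectrum of $\rho_{\avg}$. In this relative norm I would establish the expansion
\[ D(\tilde\rho_x\|\rho_{\avg})=\tfrac{\eta^2}{2}J^{\BKM}_{\rho_{\avg}}[\delta\rho_x,\delta\rho_x]+o(\eta^2\epsilon^2). \]
I would also establish the following for the Chernoff information, with $\Delta=\delta\rho_i-\delta\rho_j$:
\[ C(\tilde\rho_i,\tilde\rho_j)=\max_{s\in[0,1]}\tfrac{s(1-s)\eta^2}{2}J^{\WYD(s)}_{\rho_{\avg}}[\Delta,\Delta]+o(\eta^2\epsilon^2). \]
These come from the Petz $F$-divergence expansion \eqref{eqTaylorFdiv} with $F(t)=t\ln t$, and with $F(t)=(1-t^{s})/(s(1-s))$ (normalized so that \eqref{eq05} gives $\WYD(s)$). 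The point to check is that the remainders are uniform in $s$ and in $\rho_{\avg}$. I expect this to be the main technical obstacle: the remainders must be controlled by the relative perturbation, through integral representations of the operator convex functions, rather than by $\|\delta\rho\|$. The right-hand side of \eqref{eq-SLD-RLD} gives $s(1-s)J^{\WYD(s)}\le\tfrac14 J^{\RLD}$, since $s(1-s)\le 1/4$. Hence $C(\tilde\rho_i,\tilde\rho_j)\le\tfrac{\eta^2}{8}J^{\RLD}_{\rho_{\avg}}[\Delta,\Delta]+o(\eta^2\epsilon^2)$. Similarly $J^{\BKM}\le J^{\RLD}$.

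\textbf{Step 2: upper bounds.} I would bound each minimum by an average. For \eqref{eq10} I average over pairs and use $\sum_x\delta\rho_x=0$, which gives $\sum_{i<j}J[\delta\rho_i-\delta\rho_j,\delta\rho_i-\delta\rho_j]=n\sum_iJ[\delta\rho_i,\delta\rho_i]$. This yields
\[ S^\eta(\vec\rho)\le\frac{\eta^2}{4(n-1)}\sum_iJ^{\RLD}_{\rho_{\avg}}[\delta\rho_i,\delta\rho_i]+o(\eta^2\epsilon^2) \]
and
\[ A^\eta(\vec\rho)\le\frac{\eta^2}{2n}\sum_iJ^{\RLD}_{\rho_{\avg}}[\delta\rho_i,\delta\rho_i]+o(\eta^2\epsilon^2). \]
The sum on the right is, up to the factor $\beta_0n/(2(n-1))$, the quadratic term \eqref{eq08} of a symmetric utility satisfying the hypotheses of Theorem~\ref{main1}. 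One choice is the symmetrized pairwise $\chi^2$-type Petz divergence with $F(t)=(t-1)^2$, whose metric is a multiple of $J^{\RLD}$; another is the Holevo information read with the RLD metric. Theorem~\ref{main1}, together with Step~1's uniform control for that utility, then gives two maxima. Over $\epsilon$-QLDP mechanisms, $\sum_iJ[\delta\rho_i,\delta\rho_i]\le\tfrac{n-1}{2}\epsilon^2+o(\epsilon^2)$. Over $\epsilon$-LDP mechanisms, $\sum_iJ\le\tfrac{\lfloor n/2\rfloor\lceil n/2\rceil}{n}\epsilon^2+o(\epsilon^2)$. Substituting reproduces exactly the right-hand sides of \eqref{eqSC}--\eqref{eqAQ}.

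\textbf{Step 3: achievability.} I need mechanisms for which all pairwise Chernoff terms, and all terms $D(\tilde\rho_x\|\rho_{\avg})$, coincide, and for which the inequalities of Step~1 are tight. A single binary mechanism fails classically, because two inputs on the same side of the split are indistinguishable. So I would take the block-diagonal mixture $\bigoplus_\pi\frac{1}{n!}\vec{\Diag}(q\circ\pi)$ of the optimal binary mechanism over all permutations $\pi$. This preserves $\epsilon$-LDP. The mixture has $\rho_{\avg}$ proportional to the identity blockwise, and in that commutative setting all normalized metrics coincide, so the bounds of Step~1 are equalities ($s=1/2$ is optimal). The same symmetrization applied to the optimal isoclinic QLDP mechanism of Theorem~\ref{main1} handles the quantum case, provided its $\rho_{\avg}$ is maximally mixed so that all normalized metrics coincide. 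By permutation symmetry every pair, and every $x$, then attains the average, so the upper bounds of Step~2 are achieved.

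\textbf{Step 4: the ratio.} Dividing \eqref{eqSQ} by \eqref{eqSC}, and \eqref{eqAQ} by \eqref{eqAC}, gives $n(n-1)/(2\lfloor n/2\rfloor\lceil n/2\rceil)$ in both cases, which is $\ge3/2$ for $n\ge3$ as in \eqref{eq02}.
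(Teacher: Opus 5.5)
Your proof is correct in outline and reaches the same values. It is organized differently from the paper's, and on classical achievability it is more careful than the paper.

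\textbf{Upper bounds.} The paper passes to the mixed mechanism $(\tilde\rho_1,\ldots,\tilde\rho_n)$ and shows in Lemma~\ref{lemQLDP} that it is $\eta\epsilon(1+\sqrt{\epsilon})$-QLDP. It then bounds $A^\eta(\vec{\rho})$ by the Holevo information of the $\tilde\rho_k$, and $S^\eta(\vec{\rho})$ by $1+\Psi_{\Q}(\tilde\rho_1,\ldots,\tilde\rho_n)+o(\eta^2\epsilon^2)$ with $\Psi_{\Q}=-\frac{1}{n(n-1)}\sum_{i\neq j}\Tr\rho_i^{1/2}\rho_j^{1/2}$. Finally it invokes Theorem~\ref{main1} at the reduced privacy level.

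You instead use the exact identity $\tilde\rho_k-\rho_{\avg}=\eta\,\delta\rho_k$, noting that the $\tilde\rho_k$ still average to $\rho_{\avg}$. So $\eta^2$ factors out of the quadratic forms directly and Lemma~\ref{lemQLDP} is not needed. Both exponents then run through the single quantity $\sum_i J^{\RLD}_{\rho_{\avg}}[\delta\rho_i,\delta\rho_i]$:
\begin{itemize}
\item its quantum maximum is the operator-inequality estimate already contained in the proof of \eqref{eq-opt-q-ub};
\item its classical maximum is Theorem~\ref{main1} applied to $\phi(\mathbf{z})=\frac1n\sum_i(z_i-\bar z)^2/\bar z$, with $\bar z=\frac1n\sum_i z_i$ and $\beta_0=2(n-1)/n^2$.
\end{itemize}
Of your two candidate utilities, the Holevo-type extension $\frac1n\sum_x\Tr(\rho_x-\rho_{\avg})^2\rho_{\avg}^{-1}$ is exactly quadratic, so no remainder enters at that step.

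The price is that you need local expansions of $D$ and of $C$ (including the maximization over $s$) with remainders uniform in $d$ and $\rho_{\avg}$. You flag this but do not carry it out, so it is the one step your plan leaves open. The paper relies on the same uniformity without comment. Lemmas~\ref{lemChernoff}, \ref{lemTr} and assumption~4 of Theorem~\ref{main1} only give $o(\max_i\|\delta\rho_i\|^2)$ at a fixed base point. That alone does not control a supremum over all dimensions and all $\rho_{\avg}$. So this gap is not a deficit relative to the paper.

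\textbf{Achievability.} Your $S_n$-symmetrization is genuinely needed on the classical side. For $n\ge3$, two inputs in the same block of $b^*$ have identical output distributions, so $S^\eta(b^*)=0$. For odd $n$ one only gets $A^\eta(b^*)=\frac{\lfloor n/2\rfloor^2}{2n^2}\eta^2\epsilon^2+o(\eta^2\epsilon^2)$. Hence the paper's ``in the same way'' for \eqref{eqSC} and \eqref{eqAC} must be read with a permutation-symmetric classical mechanism. Your symmetrized $b^*$ works, as does the $k$-subset mechanism $Q$ of Appendix~\ref{appD} with $k=\lfloor n/2\rfloor$. In those cases Theorem~\ref{main3'} does not hold as stated.

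On the quantum side, $\vec{\sigma}^*$ (from Theorem~\ref{main2}, not Theorem~\ref{main1}) already has pair-independent Chernoff terms and $x$-independent $D(\tilde\sigma_x^*\|d^{-1}I_d)$ by Lemma~\ref{lemQ1}. Symmetrizing it is harmless but unnecessary.

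One small inaccuracy: for odd $n$, the average state of your symmetrized classical mechanism is not proportional to the identity on each block. This does not matter, because commutativity alone makes all normalized Petz metrics coincide.
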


Theorem~\ref{main1'} is a significant extension of Nam et al.'s results \cite[Theorem~1 and Corollary~1]{Nam}.
Notably, Theorem~\ref{main1'} resolves their restricted ranges of the parameters $n$ and $\eta$, and 
determines the classical and quantum optimal values in the high-privacy regime.
Note that optimal $\epsilon$-QLDP mechanisms achieving $S_\Q^\eta(n,\epsilon)$ and $A_\Q^\eta(n,\epsilon)$ in this regime are stated in the next subsection.

\subsection{Optimal $\epsilon$-QLDP mechanisms}
This subsection constructs isoclinic mechanisms, 
one of which is an optimal $\epsilon$-QLDP mechanism achieving $\OPT_n(\epsilon; \Phi_\Q)$ in the high-privacy regime.
They are constructed from \textit{equi-isoclinic tight fusion frames (EITFFs)}.
For readers' reference, we introduce EITFFs together with \textit{equi-chordal tight fusion frames (ECTFFs)}, 
whose definition is similar to that of EITFFs.
While ECTFFs and EITFFs are conventionally defined in terms of the Welch bound and the simplex bound, 
we adopt equivalent conditions as their definitions for convenience.

Recall that $n,d\ge2$ and $r\in[d-1]$ represent cardinality, dimension, and rank, respectively.

\begin{definition}[ECTFF {\cite[Theorem~5.1]{Fickus1}}]\label{ECTFF}
	We say that a set $\{P_i\}_{i\in[n]}$ of $n$ rank-$r$ orthogonal projections on $\mathbb{C}^d$ is an ECTFF 
	if it satisfies the following conditions.
	\renewcommand{\labelenumi}{$\arabic{enumi}$.}
	\begin{enumerate}
		\item
		$\sum_{i\in[n]} P_i=(nr/d)I_d$.
		\item
		There exists $c\in[0,1]$ such that $\Tr P_iP_j=rc$ for all distinct integers $i,j\in[n]$.
	\end{enumerate}
	Denote by ECTFF${}_{\mathbb{C}}(d,r,n)$ an ECTFF with dimension $d$, rank $r$, and cardinality $n$ over $\mathbb{C}$.
	By conditions~$1$--$2$, we have $c=\frac{nr-d}{d(n-1)}$.
\end{definition}

\begin{definition}[EITFF {\cite[Theorem~5.2]{Fickus1}}]\label{EITFF}
	We say that a set $\{P_i\}_{i\in[n]}$ of $n$ rank-$r$ orthogonal projections on $\mathbb{C}^d$ is an EITFF 
	if it satisfies the following conditions.
	\renewcommand{\labelenumi}{$\arabic{enumi}$.}
	\begin{enumerate}
		\item
		$\sum_{i\in[n]} P_i=(nr/d)I_d$.
		\item
		There exists $c\in[0,1]$ such that $P_jP_iP_j=cP_j$ for all distinct integers $i,j\in[n]$.
	\end{enumerate}
	Denote by EITFF${}_{\mathbb{C}}(d,r,n)$ an EITFF with dimension $d$, rank $r$, and cardinality $n$ over $\mathbb{C}$.
	By conditions~$1$--$2$, we have $c=\frac{nr-d}{d(n-1)}$.
\end{definition}

Every EITFF${}_{\mathbb{C}}(d,r,n)$ is an ECTFF${}_{\mathbb{C}}(d,r,n)$, 
but the converse does not necessarily hold.
For instance, see \cite[Example~17]{King}, which is the case $(d,r,n)=(6,3,4)$.
The second condition in Definition~\ref{EITFF} means that 
the two subspaces $\Im P_i$ and $\Im P_j$ are isoclinic.
Isoclinic subspaces are closely related to quantum error correction \cite{Kribs1,Kribs2}.

Since we use EITFFs to construct optimal $\epsilon$-QLDP mechanisms, 
we need to show their existence.
The existence of EITFFs remains mostly open \cite{Fickus2}, but the case $d=2r$ was solved as follows.

\begin{lemma}[Existence of EITFFs {\cite[p.~3203]{Fickus2}}]\label{EITFF-existence}
	The following conditions are equivalent to each other.
	\renewcommand{\labelenumi}{$\arabic{enumi}$.}
	\begin{enumerate}
		\item
		There exists an EITFF${}_{\mathbb{C}}(2r,r,n)$.
		\item
		$n\le\rho_{\mathbb{C}}(r)+2$. Here $\rho_{\mathbb{C}}(r)$ is the Radon--Hurwitz number: 
		$\rho_{\mathbb{C}}(r)=2a+2$ if $r$ is divisible by $2^a$ but not divisible by $2^{a+1}$.
	\end{enumerate}
\end{lemma}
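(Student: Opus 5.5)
The plan is to reduce the statement to the Hurwitz--Radon theorem on anticommuting unitary matrices. The first step is to parametrize rank-$r$ projections on $\mathbb{C}^{2r}=\mathbb{C}^r\oplus\mathbb{C}^r$ as graphs of unitaries. For $d=2r$, Definition~\ref{EITFF} forces $c=\frac{n-2}{2(n-1)}$.

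\textbf{Case $n=2$.} Here $c=0$, and any pair of complementary orthogonal projections works. This is consistent with $\rho_{\mathbb{C}}(r)\ge2$.

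\textbf{Case $n\ge3$, reduction to graphs.} Now $c>0$. The condition $P_1P_iP_1=cP_1$ shows that the compression $P_1P_i|_{\Im P_1}$ is invertible. Choose coordinates in which $\Im P_1$ and $\Im P_1^\perp$ are the two summands. Then isoclinicity with a common angle, together with the tight-frame condition, should allow each $P_i$ to be written uniformly as
\[
P_i=\frac12\begin{pmatrix} (1+a_i)I & b_iU_i^\dag\\ b_iU_i & (1-a_i)I\end{pmatrix}, \qquad a_i^2+b_i^2=1,
\]
with $U_i$ unitary. Geometrically, $P_i$ projects onto the graph $\{(\cos\theta_i\,x,\ \sin\theta_i\,U_ix)\}$.

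\textbf{Translating the two conditions.} I plan to turn both EITFF conditions into algebraic conditions on the $U_i$.
\begin{itemize}
\item Condition~2 ($P_jP_iP_j=cP_j$) becomes a scalar condition on the cross terms: $\operatorname{Re}$-type identities $U_i^\dag U_j+U_j^\dag U_i=\gamma_{ij}I$, with $\gamma_{ij}\in\mathbb{R}$ determined by $c$ and the angles.
\item Condition~1 ($\sum_iP_i=\frac n2 I_{2r}$) becomes $\sum_i a_i=0$ and $\sum_i b_iU_i=0$.
\end{itemize}
It is cleanest to absorb $P_1$ by writing each $P_i=\frac12(I+S_i)$. Here $S_i$ is a Hermitian unitary (a reflection) with trace zero, $\sum_i S_i=0$, and
\[
S_iS_j+S_jS_i=2\gamma I,\qquad \gamma=\frac{-1}{n-1}\quad(i\ne j).
\]
Indeed, $P_jP_iP_j=cP_j$ for all distinct $i,j$ is equivalent to the anticommutator of $S_i$ and $S_j$ being a scalar. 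The value of that scalar is fixed by taking traces, using $\Tr P_iP_j=rc$.

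\textbf{Reduction to a Clifford representation.} The $n$ Hermitian matrices $S_i$ span a real linear space on which $(S,T)\mapsto\frac12(ST+TS)$ is a scalar-valued positive form. Its Gram matrix $\frac{n}{n-1}I_n-\frac1{n-1}\mathbbm{1}_n\mathbbm{1}_n^{\top}$ has rank $n-1$. So the span has dimension $n-1$ and contains an orthonormal family $E_1,\dots,E_{n-1}$ of pairwise anticommuting Hermitian unitaries on $\mathbb{C}^{2r}$, each of trace zero (the traces vanish because each $S_i$ has trace zero). Conversely, given such $E_k$, take unit vectors $v_1,\dots,v_n\in\mathbb{R}^{n-1}$ forming a regular simplex and set $S_i=\sum_k (v_i)_kE_k$. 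Then $P_i=\frac12(I+S_i)$ is an EITFF${}_{\mathbb{C}}(2r,r,n)$.

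\textbf{Applying Hurwitz--Radon.} The question thus becomes: when do $n-1$ anticommuting Hermitian unitaries of trace zero exist in dimension $2r$? Fixing $E_{n-1}=\diag(I_r,-I_r)$, the remaining $n-2$ matrices are off-diagonal, of the form $\begin{pmatrix}0&V_k^\dag\\V_k&0\end{pmatrix}$. Anticommutation then amounts to the unitaries $V_k^\dag V_l$ ($k\ne l$) being skew-Hermitian. Normalizing by $V_1$, this yields $n-3$ anticommuting skew-Hermitian unitaries on $\mathbb{C}^r$. By the complex Hurwitz--Radon theorem (representation theory of Clifford algebras), the maximal number of such matrices is $\rho_{\mathbb{C}}(r)-1$. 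This gives exactly $n\le\rho_{\mathbb{C}}(r)+2$, and the standard explicit Clifford constructions supply the converse.

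\textbf{Main obstacle.} I expect the hardest part to be bookkeeping the exact offset in this last equivalence. Namely, I must check that the complex Radon--Hurwitz count with the convention $\rho_{\mathbb{C}}(r)=2a+2$ matches "number of anticommuting Hermitian unitaries on $\mathbb{C}^{2r}$ with a balanced one". I would verify this against small cases: for $r=1$, the Pauli matrices give $n\le4$; for $r=2$, one gets $n\le6$.
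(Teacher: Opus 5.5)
The paper does not prove this lemma; it takes it from \cite[p.~3203]{Fickus2}. Your argument is a correct, self-contained proof along the classical Lemmens--Seidel/Hoggar line. Compared with the citation, it has the advantage of giving an explicit construction: tensor products of Pauli matrices, tensored with an identity on the odd part of $r$, combined along a regular simplex.

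The steps check out as follows. Writing $P_i=\frac12(I+S_i)$ turns an EITFF${}_{\mathbb{C}}(2r,r,n)$ into $n$ traceless Hermitian unitaries with $\sum_iS_i=0$ and $S_iS_j+S_jS_i=-\frac{2}{n-1}I$. The form $\frac12(ST+TS)$ is positive definite on their span, because a Hermitian $S$ with $S^2=0$ vanishes. Gram--Schmidt for this form therefore gives $n-1$ anticommuting Hermitian unitaries on $\mathbb{C}^{2r}$, and the regular-simplex combination reverses this. Conjugating $E_{n-1}$ to $\diag(I_r,-I_r)$ is possible precisely because it is traceless. Normalizing by $V_1$ then correctly yields $n-3$ anticommuting skew-Hermitian unitaries on $\mathbb{C}^r$.

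The offset you were worried about is right. $N$ pairwise anticommuting Hermitian unitaries exist on $\mathbb{C}^m$ iff $2^{\lfloor N/2\rfloor}$ divides $m$. With $r=2^a\cdot(\text{odd})$ the condition becomes $n-3\le 2a+1$, i.e.\ $n\le\rho_{\mathbb{C}}(r)+2$. Equivalently, applying the count directly on $\mathbb{C}^{2r}$ gives $\lfloor (n-1)/2\rfloor\le a+1$.

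One step is asserted rather than proved, and it is not a formal identity: ``$P_jP_iP_j=cP_j$ for all $i\ne j$ iff the anticommutator of $S_i,S_j$ is scalar.'' The direction from isoclinic to scalar genuinely uses $d=2r$. For $d>2r$ the anticommutator equals $2I$ on $(\Im P_i+\Im P_j)^\perp$ but $(4c-2)I$ elsewhere, so the equivalence fails there. You can fill it as follows:
\begin{itemize}
\item Set $T=S_iS_j+S_jS_i$. Then $(I+S_j)(I+S_i)(I+S_j)=(I+S_j)(2I+T)$, so $P_jP_iP_j=\frac14P_j(2I+T)$.
\item The condition $P_jP_iP_j=cP_j$ gives $P_jT=(4c-2)P_j$. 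Taking adjoints, $T=(4c-2)I$ on $\Im P_j$, and symmetrically on $\Im P_i$.
\item Since $0<c<1$, any $x\in\Im P_i\cap\Im P_j$ satisfies $x=cx$, so the intersection is trivial.
\item By dimension count $\Im P_i+\Im P_j=\mathbb{C}^{2r}$, hence $T=(4c-2)I=-\frac{2}{n-1}I$.
\end{itemize}

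Finally, the ``reduction to graphs'' paragraph, hedged with ``should allow,'' is never used once you pass to the $S_i$, and can be dropped.
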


We summarize several known results on the existence of EITFFs in Table~\ref{T1}.
While this paper generally assumes $d\ge2$ and $r\in[d-1]$, 
the trivial cases $(d,r,n)=(1,1,n),(2,2,n),\ldots$ are usually also referred to as EITFFs in the literature.
Hence, we include these trivial cases as EITFFs exclusively in Table~\ref{T1}.
By Lemma~\ref{EITFF-existence}, the condition $n\le\rho_{\mathbb{C}}(r)+2$ in Table~\ref{T1} is necessary and sufficient.
EITFFs with $r=1$ are referred to as \textit{equiangular tight frames (ETFs)}.
For details, see \cite{Fickus5}, in which many known ETFs are listed.

\begin{table}[t]
	\centering
	\caption{Existence of EITFFs.}\label{T1}
	\begin{tabular}{ccl}
		\hline
		\multirow{2}{*}{References}&Parameters for which&\multirow{2}{*}{Sufficient conditions}\\
		   &an EITFF${}_{\mathbb{C}}(d,r,n)$ exists&   \\
		\hline
		{\cite[p.~3203]{Fickus2}}&$(2r,r,n)$&$n\le\rho_{\mathbb{C}}(r)+2$\\
		{\cite[p.~194]{Fickus1}}&$\forall r\ge1$, $(d_0r,r,n)$&An EITFF${}_{\mathbb{C}}(d_0,1,n)$ exists.\\
		{\cite[Theorem~4.1]{Fickus3}}&$(n-1,2,n)$&$n$ is an odd prime power.\\
		{\cite[Theorem~4.2]{Fickus3}}&$(n,2,n)$&$n$ is a prime power $\ge4$.\\
		\multirow{2}{*}{{\cite[Example~16]{King}}}&\multirow{2}{*}{$(n(n-1)/2,n,n)$}&$n$ is a prime power, where\\
		   &   &the prime is of the form $4k-1$.\\
		\multirow{2}{*}{\cite{Fickus4}}&$(10,4,10)$, $(16,5,12)$,&   \\
		   &$(10,3,15)$&   \\
		Naimark complement&\multirow{2}{*}{$(nr-d,r,n)$}&An EITFF${}_{\mathbb{C}}(d,r,n)$ exists,\\
		{\cite[p.~2]{Fickus3}}&   &where $d\not=nr$.\\
		\hline
	\end{tabular}
\end{table}

\vskip1ex
\textbf{Isoclinic mechanisms.}---%
For an EITFF${}_{\mathbb{C}}(d,r,n)$ $\{P_i\}_{i\in[n]}$, 
we say that $\vec{\sigma}=(\sigma_1,\ldots,\sigma_n)$ is an \textit{isoclinic mechanism} if 
\begin{equation}
	\sigma_x = \frac{\mu}{d}I_d + \frac{1-\mu}{r}P_x
	\label{eq13}
\end{equation}
for every $x\in[n]$, where 
\begin{equation}
	(1-\mu)^{-1} = 1-\frac{d}{2r} + \frac{d}{2r}\sqrt{1+\frac{1-c}{\sinh^2(\epsilon/2)}}
	\quad\text{and}\quad
	c = \frac{nr-d}{d(n-1)}.
	\label{eq14}
\end{equation}
Every isoclinic mechanism is $\epsilon$-QLDP (Proposition~\ref{prop1}).
The isoclinic mechanism \eqref{eq13} can be interpreted as 
preparing the quantum states $r^{-1}P_1,\ldots,r^{-1}P_n$ and 
applying the depolarizing channel $X\mapsto\mu(\Tr X)d^{-1}I_d+(1-\mu)X$.

Consider the special case where $n\ge2$, $r=2^a$, $a\in\mathbb{Z}$, $a\ge\max\{ 0, n/2-2 \}$, and $d=2r$.
Then Lemma~\ref{EITFF-existence} implies that there exists an EITFF${}_{\mathbb{C}}(d,r,n)$.
For this EITFF, we denote by $\vec{\sigma}^*=(\sigma_1^*,\ldots,\sigma_n^*)$ the above isoclinic mechanism $\vec{\sigma}$: 
\begin{equation}
	\sigma_x^* = \frac{\mu_*}{d}I_d + \frac{1-\mu_*}{r}P_x
	\label{eq-opt-QLDP}
\end{equation}
for every $x\in[n]$, where 
\begin{equation}
	1-\mu_* = \biggl( 1+\frac{1-c}{\sinh^2(\epsilon/2)} \biggr)^{-1/2}
	\quad\text{and}\quad
	c = \frac{n-2}{2n-2}.
	\label{eq15}
\end{equation}

\begin{remark}[Reduction to classical mechanisms]\label{rem1}
	An isoclinic mechanism $\vec{\sigma}$ with $n=d/r$ reduces to a classical one.
	Indeed, $\sigma_1,\ldots,\sigma_n$ are commutative and 
	can be simultaneously diagonalized by a unitary matrix, 
	since $P_iP_jP_i=cP_i=0$ for all $i\not=j$ by \eqref{eq14}.
	Then $\sigma_x$ can be expressed as 
	\[
	\sigma_x = \frac{I_d+(e^\epsilon-1)P_x}{re^\epsilon+d-r}
	\]
	for every $x\in[n]$, which is the form that often appears in defining $\epsilon$-LDP mechanisms.
	For example, see the $\epsilon$-LDP mechanisms proposed in \cite{Nam,Park,Ye,Wang}.
\end{remark}

\vskip1ex
\textbf{Main results (3).}---%
The isoclinic mechanism $\vec{\sigma}^*$ is $\epsilon$-QLDP (see Proposition~\ref{prop1}) and 
achieves the quantum optimal value $\OPT_n(\epsilon; \Phi_\Q)$ in the high-privacy regime.

\begin{theorem}\label{main2}
	Assume all the assumptions of Theorem~$\ref{main1}$.
	Then 
	\[
	\Phi_\Q(\vec{\sigma}^*) = \phi(\mathbbm{1}_n) + \frac{\beta_0n}{4}\epsilon^2 + o(\epsilon^2)
	\]
	as $\epsilon\to+0$.
\end{theorem}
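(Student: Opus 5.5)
Plan: apply the Taylor expansion of assumption~4 directly at $\vec{\sigma}^*$. The zeroth-order term is $\phi(\mathbbm{1}_n)$ by Lemma~\ref{lemFormulation}. The first-order term vanishes, and the quadratic term \eqref{eq08} is evaluated explicitly using the EITFF structure.

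\textbf{Step 1: average state and perturbations.} By condition~1 of Definition~\ref{EITFF} with $d=2r$, we have $\sum_x P_x=(n/2)I_d$. Hence $\rho_{\avg}=\frac1n\sum_x\sigma_x^*=I_d/d$, the maximally mixed state. The perturbations are
\[
\delta\sigma_x=\frac{1-\mu_*}{r}\Bigl(P_x-\frac{1}{2}I_d\Bigr),
\]
which are traceless and satisfy $\|\delta\sigma_x\|=O(1-\mu_*)$. From \eqref{eq15}, $1-\mu_*=\sinh(\epsilon/2)/\sqrt{\sinh^2(\epsilon/2)+1-c}=\frac{\epsilon}{2\sqrt{1-c}}+O(\epsilon^3)$. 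So the remainder in the Taylor expansion is $o(\epsilon^2)$.

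\textbf{Step 2: the linear term vanishes.} Assumption~3 says $\Phi_\Q$ is symmetric. Together with the expansion around the symmetric point $(\rho_{\avg},\ldots,\rho_{\avg})$, this forces $\cA_{i;\rho_{\avg}}=\cA$ to be independent of $i$. Then
\[
\Phi^{(1)}_\Q=\cA\Bigl[\sum_x\delta\sigma_x\Bigr]=0.
\]
To justify the independence of $i$, compare $\Phi_\Q$ at permuted arguments near the base point: the first-order Fr\'echet derivatives must agree under permutation.

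\textbf{Step 3: the quadratic term.} At $\rho_{\avg}=I_d/d$ all eigenvalues equal $1/d$. Every normalized Petz metric therefore reduces to $J[X,X]=d\Tr X^2$, since $f(1)=1$. We compute
\[
\Tr\Bigl(P_x-\tfrac12 I\Bigr)^2=r-r+\frac{d}{4}=\frac{r}{2},
\]
so
\[
J[\delta\sigma_x,\delta\sigma_x]=2r\cdot\frac{(1-\mu_*)^2}{r^2}\cdot\frac{r}{2}=(1-\mu_*)^2 .
\]
Summing over $x$, \eqref{eq08} gives $\frac{\beta_0 n}{2(n-1)}\, n(1-\mu_*)^2$. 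With $1-c=\frac{n}{2(n-1)}$ we get $(1-\mu_*)^2=\frac{\epsilon^2(n-1)}{2n}+O(\epsilon^4)$. The quadratic term therefore equals $\frac{\beta_0 n}{4}\epsilon^2+o(\epsilon^2)$, as claimed.

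\textbf{Main obstacle.} The only delicate point is Step~2: deriving that the linear term vanishes from the symmetry of $\Phi_\Q$. The alternative is to argue that $\cA_i$ depends on $i$ only through permutation invariance of the expansion's uniqueness, which requires identifying Fr\'echet derivatives via symmetric permutation of the arguments.
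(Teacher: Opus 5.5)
Your proposal is correct and follows essentially the paper's route. You Taylor-expand at $(\sigma_{\avg},\ldots,\sigma_{\avg})$ with $\sigma_{\avg}=d^{-1}I_d$, remove the linear term using the symmetry of $\Phi_\Q$, and evaluate the quadratic term as $\frac{\beta_0 n^2}{2(n-1)}(1-\mu_*)^2=\frac{\beta_0 n}{4}\epsilon^2+o(\epsilon^2)$. The one difference is that the paper bounds $J_{\sigma_{\avg}}$ below by $J^{\SLD}_{\sigma_{\avg}}$ and then closes with the general upper bound \eqref{eq-opt-q-ub}, whereas your observation that every normalized Petz metric equals $d\Tr X^2$ at the maximally mixed state gives the asymptotic equality directly, without that upper bound.
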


Since the Holevo information $\chi(p_{\mix}; \vec{\rho})$ in \eqref{eq09} satisfies all the assumptions of Theorem~\ref{main1}, 
the following corollary follows from Theorem~\ref{main2} immediately.

\begin{corollary}\label{coro2}
	We have 
	\[
	\chi(p_{\mix}; \vec{\sigma}^*)
	= \frac{n-1}{4n}\epsilon^2 + o(\epsilon^2)
	\]
	as $\epsilon\to+0$.
\end{corollary}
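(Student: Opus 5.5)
The plan is to obtain the statement directly from Theorem~\ref{main2}, specialized to the Holevo information. The work is to confirm that $\chi(p_{\mix};\cdot)$ satisfies all the assumptions of Theorem~\ref{main1}, and then to evaluate the constants $\phi(\mathbbm{1}_n)$ and $\beta_0$ that appear in the conclusion of Theorem~\ref{main2}.

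\emph{Checking the assumptions.} Take $\phi(\mathbf{z})=-L(\frac1n\sum_i z_i)+\frac1n\sum_i L(z_i)$ with $L(t)=t\ln t$.
\begin{itemize}
\item Sublinearity: $\phi$ is positively homogeneous, since the $\ln\alpha$ contributions cancel. Subadditivity then follows from convexity of $\phi$, which is the joint convexity of the relative entropy written in perspective form. Hence $\Phi_\C=\chi(p_{\mix};q)$ has the form \eqref{eq-sum-subl}.
\item Assumption~1: $\phi$ is smooth on $(0,\infty)^n$ and symmetric.
\item Assumption~2: a direct computation gives $\partial_1^2\phi(\mathbbm{1}_n)=\frac1n L''(1)-\frac1{n^2}L''(1)=\frac{n-1}{n^2}>0$.
\item Quantum extension: the Holevo information \eqref{eq09} is unitarily invariant and reduces to the mutual information on diagonal states.
\item Assumption~3: it is symmetric under permutations of $\rho_1,\ldots,\rho_n$, because $p_{\mix}$ is uniform.
\item Assumption~4: apply \eqref{eq07} to each $-H(\rho_x)$ around $\rho_{\avg}$. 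The zeroth-order terms cancel against $H(\rho_{\avg})$, and the linear terms cancel because $\sum_x\delta\rho_x=0$. The quadratic term is therefore $\frac{1}{2n}\sum_i J^{\BKM}_{\rho_{\avg}}[\delta\rho_i,\delta\rho_i]$, which equals \eqref{eq08} with $\beta_0=(n-1)/n^2$, since $\frac{\beta_0 n}{2(n-1)}=\frac1{2n}$. The BKM metric is normalized.
\end{itemize}

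\emph{Evaluating the conclusion.} Theorem~\ref{main2} gives $\chi(p_{\mix};\vec{\sigma}^*)=\phi(\mathbbm{1}_n)+\frac{\beta_0 n}{4}\epsilon^2+o(\epsilon^2)$. Here $\phi(\mathbbm{1}_n)=-L(1)+L(1)=0$, and $\frac{\beta_0 n}{4}=\frac{n-1}{4n}$, which is the claim.

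\emph{Main obstacle.} The only delicate point is the uniformity of the Taylor remainder in \eqref{eq07} needed for Assumption~4. This is supplied by the cited trace-function expansion, applied at the full-rank state $\rho_{\avg}$. For $\vec{\sigma}^*$ one has $\rho_{\avg}=I_d/d$, because $\sum_x P_x=(nr/d)I_d$, and $\|\delta\sigma_x^*\|=O(1-\mu_*)=O(\epsilon)$ by \eqref{eq15}. So the remainder is indeed $o(\epsilon^2)$.
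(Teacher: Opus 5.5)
Your proposal is correct and is essentially the paper's argument: before Corollary~\ref{coro1}, the paper verifies that the Holevo information satisfies all assumptions of Theorem~\ref{main1}, using $\phi(\mathbf{z})=-L(\frac{1}{n}\sum_i z_i)+\frac{1}{n}\sum_i L(z_i)$, $\beta_0=(n-1)/n^2$, and the BKM expansion \eqref{eq07}, and it then reads the corollary off Theorem~\ref{main2} using $\phi(\mathbbm{1}_n)=0$ and $\beta_0 n/4=(n-1)/(4n)$. Your extra remark is also correct: $\rho_{\avg}=I_d/d$ is a fixed state for $\vec{\sigma}^*$ and $\|\delta\sigma_x^*\|=O(\epsilon)$, so the remainder is genuinely $o(\epsilon^2)$, a detail the paper leaves implicit.
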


Furthermore, the isoclinic mechanism $\vec{\sigma}^*$ also achieves the quantum optimal values $S_\Q^\eta(n,\epsilon)$ and $A_\Q^\eta(n,\epsilon)$ in the high-privacy regime.

\begin{theorem}\label{main2'}
	Let $\eta\in(0,1]$.
	Then 
	\[
	S^\eta(\vec{\sigma}^*) = \frac{1}{8}\eta^2\epsilon^2 + o(\eta^2\epsilon^2)
	\quad\text{and}\quad
	A^\eta(\vec{\sigma}^*) = \frac{n-1}{4n}\eta^2\epsilon^2 + o(\eta^2\epsilon^2)
	\]
	as $\epsilon\to+0$.
\end{theorem}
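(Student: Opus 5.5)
The plan is to compute $S^\eta(\vec{\sigma}^*)$ and $A^\eta(\vec{\sigma}^*)$ directly. The key point is that, for the isoclinic mechanism, every mixed state $\tilde{\sigma}_k$ is a small perturbation of the maximally mixed state $I_d/d$. At that base point all normalized Petz metrics coincide with $d\Tr XY$, so no choice of metric enters.

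\textbf{Step 1 (structure of the mixed states).} Write $X_x=P_x/r-I_d/d$ and $t=1-\mu_*$, so that $\sigma_x^*=I_d/d+tX_x$. Since $\sum_x P_x=(nr/d)I_d$, we have $\sum_x X_x=0$. Then \eqref{eq11} gives
\[
\tilde{\sigma}_k=\frac{I_d}{d}+\eta t X_k \quad\text{and}\quad \sigma^*_{\avg}=\frac{I_d}{d}.
\]
Using $d=2r$, $P_iP_jP_i=cP_i$ and $\Tr P_iP_j=rc$, I will record two traces:
\[
\Tr X_k^2=\frac{1}{r}-\frac{1}{d}=\frac{1}{2r}, \qquad \Tr(X_i-X_j)^2=\frac{2(1-c)}{r}\quad(i\ne j).
\]
From \eqref{eq15}, $t=\sinh(\epsilon/2)/\sqrt{1-c+\sinh^2(\epsilon/2)}$, so $t^2=\epsilon^2/(4(1-c))+O(\epsilon^4)$. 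Here $1-c=n/(2n-2)$.

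\textbf{Step 2 (asymmetric exponent).} Since $\sigma^*_{\avg}=I_d/d$, the relative entropy is exact: $D(\tilde{\sigma}_k\|I_d/d)=\ln d-H(\tilde{\sigma}_k)$. By \eqref{eq07} at $\rho_0=I_d/d$, where $J^{\BKM}_{I_d/d}[X,X]=d\Tr X^2$ and the linear term vanishes because $X_k$ is traceless, this equals
\[
\frac{d}{2}\eta^2t^2\Tr X_k^2+o(\eta^2t^2)=\frac{\eta^2t^2}{2}+o(\eta^2t^2).
\]
The value is the same for every $k$. Substituting $t^2$ gives
\[
\frac{\eta^2\epsilon^2}{8(1-c)}=\frac{n-1}{4n}\eta^2\epsilon^2.
\]
The remainder is $o(u^2)$ in the single small parameter $u=\eta t$, with $\|X_k\|$ bounded. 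Hence it is $o(\eta^2\epsilon^2)$ uniformly in $\eta\in(0,1]$. Alternatively, the spectrum of $\tilde{\sigma}_k$ is explicit (eigenvalues $(1\pm\eta t)/d$, each with multiplicity $r$), which gives an exact formula.

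\textbf{Step 3 (symmetric exponent).} For $A=I_d/d+u X_i$ and $B=I_d/d+uX_j$, I will expand $\Tr A^sB^{1-s}$ to second order in $u$, uniformly in $s\in[0,1]$. Because the base point is a multiple of the identity, $(I_d/d+uY)^s=d^{-s}(I_d+dsuY+\tfrac{s(s-1)}{2}d^2u^2Y^2+O(u^3))$ with no ordering issues. This gives
\[
\Tr A^sB^{1-s}=1-\frac{s(1-s)}{2}\,d\,u^2\Tr(X_i-X_j)^2+O(u^3).
\]
Minimizing over $s$ attains the minimum at $s=1/2$ up to $O(u^3)$. Hence
\[
C(\tilde{\sigma}_i,\tilde{\sigma}_j)=\frac{d}{8}\eta^2t^2\Tr(X_i-X_j)^2+o(\eta^2t^2)=\frac{(1-c)\eta^2t^2}{2}+o(\eta^2t^2)=\frac{\eta^2\epsilon^2}{8}+o(\eta^2\epsilon^2),
\]
which is independent of the pair $(i,j)$. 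Therefore the minimum in \eqref{eq10} gives the claim.

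The step I expect to need the most care is the uniformity of the $O(u^3)$ remainder in $s\in[0,1]$ before minimizing. I would handle it by writing $A^s=\exp(s\ln A)$ with $\ln A=-\ln d\,I_d+dX_iu+O(u^2)$ analytic in $u$, with coefficients bounded uniformly for $s\in[0,1]$.
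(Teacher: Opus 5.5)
Your proposal is correct, but it takes a more direct route than the paper. The paper obtains Theorem~\ref{main2'} as a by-product of proving Theorem~\ref{main1'}. For $A^\eta$, it observes that $\sigma^*_{\avg}=d^{-1}I_d$ and that $D(\tilde{\sigma}_x^*\|d^{-1}I_d)$ is independent of $x$. Hence the minimum equals the Holevo information $\chi(p_{\mix};\tilde{\sigma}_1^*,\ldots,\tilde{\sigma}_n^*)$, and the paper then invokes Corollary~\ref{coro2}. For $S^\eta$, it uses Lemmas~\ref{lemChernoff} and \ref{lemTr} (WYD$(1/2)$ expansions taken from the literature) to replace each Chernoff information by $1-\Tr(\tilde{\sigma}_i^*)^{1/2}(\tilde{\sigma}_j^*)^{1/2}$ up to $o(\eta^2\epsilon^2)$. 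Pair-independence then turns the minimum into $1+\Psi_\Q(\tilde{\sigma}^*_1,\ldots,\tilde{\sigma}^*_n)$, and the paper invokes Theorem~\ref{main2}. Both invocations implicitly identify $\tilde{\sigma}^*$ with $\vec{\sigma}^*$ at a rescaled parameter $\epsilon'\sim\eta\epsilon$, since $1-\mu_*^{[\eta]}=\eta(1-\mu_*)$. Theorem~\ref{main2} itself rests on sandwiching the quadratic term between the SLD and RLD metrics.

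You bypass all of this by using that the base point is $I_d/d$. There every normalized Petz metric equals $d\Tr XY$, and $(I_d/d+uY)^s$ has an elementary binomial expansion with remainder uniform in $s\in[0,1]$. Together with $s(1-s)\le 1/4$, this justifies taking the minimum over $s$. Your traces $\Tr X_k^2=1/(2r)$ and $\Tr(X_i-X_j)^2=2(1-c)/r$, with $t^2=\epsilon^2/(4(1-c))+O(\epsilon^4)$, give exactly $\eta^2\epsilon^2/8$ and $(n-1)\eta^2\epsilon^2/(4n)$.

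What each route buys:
\begin{itemize}
\item Yours is self-contained and even gives remainders uniform in $\eta\in(0,1]$.
\item The paper's costs nothing extra in context, because the same reduction also yields the matching upper bounds on $S_\Q^\eta(n,\epsilon)$ and $A_\Q^\eta(n,\epsilon)$.
\end{itemize}
An even shorter path is available through Lemma~\ref{lemQ1}. Since $\mu_*^{[\eta]}(2-\mu_*^{[\eta]})=1-\eta^2(1-\mu_*)^2$, the exact formulas $S^\eta(\vec{\sigma}^*)=-\ln(1-G_\sym^*(\mu_*^{[\eta]}))$ and $A^\eta(\vec{\sigma}^*)=G_\asym^*(\mu_*^{[\eta]})$ reduce the claim to a one-variable Taylor expansion.
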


\vskip1ex
\textbf{Optimality beyond the high-privacy regime.}---%
The above results hold only in the high-privacy regime.
In contrast, the isoclinic mechanism $\vec{\sigma}^*$ always achieves the quantum optimal value for the following utility function: 
\[
R_\Q^\theta(\vec{\rho}) \coloneqq
\min_{i\not=j} J_{\rho_{i,j;\theta}}^{\RLD}\Bigl[ \frac{\mathrm{d}\rho_{i,j;\theta}}{\mathrm{d}\theta}, \frac{\mathrm{d}\rho_{i,j;\theta}}{\mathrm{d}\theta} \Bigr]
= \min_{i\not=j} \Tr(\rho_j-\rho_i)^2\rho_{i,j;\theta}^{-1},
\]
where $\rho_{i,j;\theta}=(1-\theta)\rho_i+\theta\rho_j$ and $\theta\in[0,1]$.
This is the minimum among the pairwise RLD Fisher information values of the one-parameter family $(\rho_{i,j;\theta})_{\theta\in[0,1]}$ at the point $\theta$, and 
is the same as the utility function considered in \cite{Yoshida}.
The exact value of $\OPT_n(\epsilon; R_\Q^\theta)$ was determined in \cite[Theorem~2.2]{Yoshida}, but 
no optimal $\epsilon$-QLDP mechanism achieving this value was previously known.
We can show that the isoclinic mechanism $\vec{\sigma}^*$ achieves this value, 
which is proved in Appendix~\ref{appG}

\begin{theorem}\label{main5}
	Let $\theta\in[0,1]$. Then 
	\[
	R_\Q^\theta(\vec{\sigma}^*)
	= \frac{(e^\epsilon-1)^2}{( (1-\theta)e^\epsilon+\theta )( \theta e^\epsilon+1-\theta )}
	= \OPT_n(\epsilon; R_\Q^\theta).
	\]
\end{theorem}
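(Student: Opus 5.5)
The plan is to compute $R_\Q^\theta(\vec{\sigma}^*)$ in closed form from the EITFF structure, and to take the second equality, $\OPT_n(\epsilon;R_\Q^\theta)=\frac{(e^\epsilon-1)^2}{((1-\theta)e^\epsilon+\theta)(\theta e^\epsilon+1-\theta)}$, directly from \cite[Theorem~2.2]{Yoshida}. Since $\vec{\sigma}^*$ is $\epsilon$-QLDP by Proposition~\ref{prop1}, it then suffices to show that $\Tr(\sigma_j^*-\sigma_i^*)^2\rho_{i,j;\theta}^{-1}$ equals this value for every pair $i\neq j$. The minimum over pairs is then attained trivially.

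Fix $i\neq j$ and write $a=\mu_*/d$ and $b=(1-\mu_*)/r$. Then
\[
\sigma_j^*-\sigma_i^*=b(P_j-P_i)\quad\text{and}\quad
\rho_{i,j;\theta}=aI_d+b\bigl((1-\theta)P_i+\theta P_j\bigr).
\]
The key structural step uses the isoclinic condition $P_jP_iP_j=cP_j$ and $P_iP_jP_i=cP_i$ with $d=2r$. I will decompose $\mathbb{C}^d$ into $r$ mutually orthogonal two-dimensional subspaces, each invariant under $P_i$ and $P_j$. On each block, $P_i$ and $P_j$ restrict to rank-one projections $\ketbra{u}{u}$ and $\ketbra{v}{v}$ with $\abs{\braket{u|v}}^2=c$.

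To build the blocks, take an orthonormal basis $\{u_k\}$ of $\Im P_i$. For $0<c<1$, set $v_k=P_ju_k/\sqrt{c}$; these are orthonormal in $\Im P_j$ by isoclinicity, and the spans $\mathrm{span}\{u_k,v_k\}$ are mutually orthogonal. The total dimension count $2r=d$ forces them to fill $\mathbb{C}^d$. When $c=0$ (that is, $n=2$), the two images are orthogonal complements and the blocks are immediate. I expect this decomposition, essentially Jordan's lemma specialised to a single principal angle, to be the only non-computational point to check carefully.

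On one block, $\ketbra{v}{v}-\ketbra{u}{u}$ is traceless with eigenvalues $\pm\sqrt{1-c}$, so its square is $(1-c)I_2$. Also $\det\bigl((1-\theta)\ketbra{u}{u}+\theta\ketbra{v}{v}\bigr)=\theta(1-\theta)(1-c)$. Writing $M$ for the restriction of $\rho_{i,j;\theta}$ to the block, the block contribution is
\[
b^2(1-c)\Tr M^{-1}=\frac{b^2(1-c)(2a+b)}{a^2+ab+b^2\theta(1-\theta)(1-c)}.
\]
Multiplying by $r$ and using $2a+b=1/r$ (because $d=2r$) gives
\[
R=\frac{1-c}{k(k+1)+\theta(1-\theta)(1-c)},\qquad k=\frac{a}{b}=\frac{\mu_*}{2(1-\mu_*)}.
\]

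Finally, substitute \eqref{eq15}. Put $s=\sinh^2(\epsilon/2)$ and $t=(1-c)/s$, so that $1-\mu_*=(1+t)^{-1/2}$. Then $k=(\sqrt{1+t}-1)/2$ and $k(k+1)=t/4$, so the factor $1-c$ cancels and
\[
R=\frac{4s}{1+4s\theta(1-\theta)}.
\]
Using $4s=(e^\epsilon-1)^2e^{-\epsilon}$ and the identity
\[
((1-\theta)e^\epsilon+\theta)(\theta e^\epsilon+1-\theta)=e^\epsilon+\theta(1-\theta)(e^\epsilon-1)^2,
\]
this equals the claimed expression.
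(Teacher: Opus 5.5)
Your proposal is correct and follows essentially the same route as the paper. Both arguments reduce to $r$ identical $2\times 2$ blocks via the isoclinic structure with $d=2r$, use $(P_j-P_i)^2=(1-c)I_d$ and $\Tr R^{-1}=\Tr R/\det R$ with $r\Tr R=1$, and then substitute \eqref{eq15}; your quantity $k(k+1)/(1-c)$ is exactly the paper's $\xi$. Your explicit construction of the blocks via $v_k=P_ju_k/\sqrt{c}$, with the $c=0$ case treated separately, is a small improvement: the paper's appeal to Lemma~\ref{lem2} formally requires $0<c<1$ and so does not literally cover $n=2$.
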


The quantum optimal value $\OPT_n(\epsilon; R_\Q^\theta)$ is independent of $n$.
In contrast, its classical counterpart strictly decreases in $n$, as shown in \cite[Eq.~(22) and Theorem~2.2]{Yoshida}.
These observations yield a quantum advantage for the utility function $R_\Q^\theta$ (and its classical counterpart).
Hence, this quantum advantage persists even in the low-privacy regime (i.e., for large $\epsilon$).
On the other hand, the argument in Section~\ref{origin} suggests that 
isoclinic mechanisms show no quantum advantage in this regime for the error exponents in symmetric and asymmetric hypothesis testing.

\subsection{Optimal $\epsilon$-LDP mechanisms}
This subsection introduces binary mechanisms, which were proposed in \cite[Section~3.2]{Kairouz}.
We say that a column-stochastic matrix $b\in\mathbb{R}^{2\times n}$ is a binary mechanism if 
there exists a disjoint union $[n]=A_1\cup A_2$ such that 
\[
b(y|x) = \frac{1+(e^\epsilon-1)\mathbbm{1}_{A_y}(x)}{e^\epsilon+1}
\]
for every $x\in[n]$ and every $y\in[2]$, where $\mathbbm{1}_A$ denotes the indicator function of a set $A$.
When $A_1=\{1,2,\ldots,\lfloor{n/2}\rfloor\}$ and $A_2=\{\lfloor{n/2}\rfloor+1,\ldots,n-1,n\}$, 
we denote by $b^*$ the binary mechanism $b$.

\vskip1ex
\textbf{Main results (4).}---%
The binary mechanism $b^*$ achieves the classical optimal values $\OPT_n(\epsilon; \Phi_\C)$ in the high-privacy regime.
As a corollary, the mutual information (or the Holevo information) $\chi(p_{\mix}; q)$ is optimized by $b^*$ in this regime.
These results were already suggested in \cite[Theorems~5 and 6]{Kairouz}, but 
it was not proved for a general utility function of the form \eqref{eq-sum-subl}.
While block design mechanisms \cite{Park,Nam} may be considered as candidates for optimal $\epsilon$-LDP mechanisms, 
we do not employ them in this paper due to the simplicity of the binary mechanism $b^*$.
Furthermore, the binary mechanism $b^*$ achieves the classical optimal values $S_\C^\eta(n,\epsilon)$ and $A_\C^\eta(n,\epsilon)$ in this regime.
We summarize the above results as follows.

\begin{theorem}\label{main3}
	Assume assumptions~$1$--$2$ of Theorem~$\ref{main1}$.
	Then 
	\[
	\Phi_\C(b^*) = \phi(\mathbbm{1}_n) + \frac{\lfloor{n/2}\rfloor\lceil{n/2}\rceil}{n-1}\cdot\frac{\beta_0}{2}\epsilon^2 + o(\epsilon^2)
	\]
	as $\epsilon\to+0$.
\end{theorem}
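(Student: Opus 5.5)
We compute $\Phi_\C(b^*)$ directly from the sum form \eqref{eq-sum-subl} and Taylor expand $\phi$ around $\mathbbm{1}_n$. Put $k=\lfloor n/2\rfloor$, so $|A_1|=k$ and $|A_2|=n-k=\lceil n/2\rceil$. For $y\in[2]$, the row $b^*(y|\cdot)$ equals $(e^\epsilon+1)^{-1}\mathbf{w}_y$, where $\mathbf{w}_y$ has entries $e^\epsilon$ on $A_y$ and $1$ off $A_y$. By positive homogeneity,
\[
\Phi_\C(b^*)=\frac{1}{e^\epsilon+1}\bigl(\phi(\mathbf{w}_1)+\phi(\mathbf{w}_2)\bigr).
\]
It is convenient to normalize each row instead. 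Write $\mathbf{w}_y=m_y(\mathbbm{1}_n+\mathbf{u}_y)$, where $m_y$ is the mean of the entries of $\mathbf{w}_y$, so that $\sum_x(\mathbf{u}_y)_x=0$. Then $\Phi_\C(b^*)=\sum_y \pi_y\,\phi(\mathbbm{1}_n+\mathbf{u}_y)$ with $\pi_y=m_y/(e^\epsilon+1)$. Since $b^*$ is column-stochastic, $\sum_y m_y = n^{-1}\sum_{x,y}(e^\epsilon+1)b^*(y|x)=e^\epsilon+1$, hence $\pi_1+\pi_2=1$.

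Next we expand $\phi(\mathbbm{1}_n+\mathbf{u})=\phi(\mathbbm{1}_n)+\nabla\phi(\mathbbm{1}_n)\cdot\mathbf{u}+\frac12\mathbf{u}^\top\mathbf{H}_\phi(\mathbbm{1}_n)\mathbf{u}+o(\|\mathbf{u}\|^2)$, which is valid by assumption~1.
\begin{itemize}
\item By symmetry, $\nabla\phi(\mathbbm{1}_n)$ is a multiple of $\mathbbm{1}_n$, so the linear term vanishes on zero-sum vectors.
\item Homogeneity gives Euler's relation $\mathbf{H}_\phi(\mathbbm{1}_n)\mathbbm{1}_n=0$. Combined with symmetry, the Hessian has diagonal entries $\beta_0$ and off-diagonal entries $-\beta_0/(n-1)$, so $\mathbf{H}_\phi(\mathbbm{1}_n)=\frac{\beta_0 n}{n-1}\bigl(I_n-\frac1n\mathbbm{1}_n\mathbbm{1}_n^\top\bigr)$.
\item Consequently the quadratic term equals $\frac{\beta_0 n}{2(n-1)}\|\mathbf{u}\|^2$ for zero-sum $\mathbf{u}$.
\end{itemize}

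It remains to compute $\|\mathbf{u}_y\|^2$ to leading order. With $t=e^\epsilon-1\sim\epsilon$, we have $m_1=1+tk/n$ and $(\mathbf{u}_1)_x=(t\mathbbm{1}_{A_1}(x)-tk/n)/m_1$. Therefore
\[
\|\mathbf{u}_1\|^2=t^2\,\frac{k(n-k)}{n}\bigl(1+O(\epsilon)\bigr)=\epsilon^2\,\frac{k(n-k)}{n}+o(\epsilon^2),
\]
and the same value holds for $\mathbf{u}_2$. Since $\pi_y=\frac12+O(\epsilon)$ and both norms are $O(\epsilon)$, the remainders are $o(\epsilon^2)$ and
\[
\Phi_\C(b^*)=\phi(\mathbbm{1}_n)+\frac{\beta_0 n}{2(n-1)}\cdot\frac{k(n-k)}{n}\epsilon^2+o(\epsilon^2),
\]
which is the claimed expansion since $k(n-k)=\lfloor n/2\rfloor\lceil n/2\rceil$.

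The step needing the most care is deriving the Hessian structure. Euler's relation $\mathbf{H}_\phi(\mathbbm{1}_n)\mathbbm{1}_n=0$ requires differentiating the identity $\nabla\phi(\alpha\mathbf{z})=\nabla\phi(\mathbf{z})$ (degree-$0$ homogeneity of the gradient) in $\alpha$ at $\alpha=1$, which uses the $C^2$ assumption near $\mathbbm{1}_n$. Symmetry under coordinate permutations then forces the Hessian to have a constant diagonal and a constant off-diagonal, and the row-sum-zero condition fixes the off-diagonal value as $-\beta_0/(n-1)$.
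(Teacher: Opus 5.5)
Your proof is correct and takes essentially the same route as the paper: a direct second-order Taylor expansion of $\phi$ at the two rows of $b^*$, with the Hessian forced into the form $\beta_0\gamma_{i,j}$ by symmetry and Euler's relation. The only difference is cosmetic. You rescale each row to mean one, so the linear term vanishes by symmetry on zero-sum perturbations. The paper instead expands the unnormalized rows $\mathbbm{1}_n+(e^\epsilon-1)\mathbf{z}$ and absorbs the linear term using $\sum_{i}\partial_i\phi(\mathbbm{1}_n)=\phi(\mathbbm{1}_n)$ together with the prefactor $(e^\epsilon+1)^{-1}$.
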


\begin{corollary}\label{coro3}
	We have 
	\[
	\chi(p_{\mix}; b^*)
	= \frac{\lfloor{n/2}\rfloor\lceil{n/2}\rceil}{2n^2}\epsilon^2 + o(\epsilon^2)
	\]
	as $\epsilon\to+0$.
\end{corollary}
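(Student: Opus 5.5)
The plan is to specialize Theorem~\ref{main3} to $\phi(\mathbf{z}) = -L\bigl(\frac{1}{n}\sum_i z_i\bigr) + \frac{1}{n}\sum_i L(z_i)$ with $L(t)=t\ln t$. We already checked above that this $\phi$ represents $\chi(p_{\mix};q)$ in the form \eqref{eq-sum-subl}. Indeed, $\chi(p_{\mix};q)=\sum_y\bigl[\frac1n\sum_x L(q(y|x)) - L(\frac1n\sum_x q(y|x))\bigr]$, since the terms $\ln n$ cancel between $H(q_{\avg})$ and the conditional entropy.

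Next we verify assumptions~1--2 of Theorem~\ref{main1}.
\begin{itemize}
\item $\phi$ is smooth on $(0,\infty)^n$, hence $C^2$ near $\mathbbm{1}_n$, and it is visibly symmetric under permutations of coordinates.
\item It is positively homogeneous, because $L(\alpha t)=\alpha L(t)+\alpha t\ln\alpha$ and the $\ln\alpha$ terms cancel. It is also subadditive: the Jensen gap of the convex function $L$ is jointly convex, and a positively homogeneous convex function is subadditive. So $\phi$ is sublinear, though Theorem~\ref{main3} only needs assumptions~1--2.
\item For the curvature, $\partial_1^2\phi(\mathbf z)=\frac1n L''(z_1) - \frac1{n^2}L''(\bar z)$. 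At $\mathbbm{1}_n$ this gives $\beta_0=\frac1n-\frac1{n^2}=\frac{n-1}{n^2}>0$.
\item Finally $\phi(\mathbbm{1}_n)=-L(1)+L(1)=0$.
\end{itemize}

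Substituting into Theorem~\ref{main3} gives
\[
\chi(p_{\mix};b^*)=0+\frac{\lfloor n/2\rfloor\lceil n/2\rceil}{n-1}\cdot\frac{n-1}{2n^2}\epsilon^2+o(\epsilon^2)=\frac{\lfloor n/2\rfloor\lceil n/2\rceil}{2n^2}\epsilon^2+o(\epsilon^2),
\]
which is the claim. There is no real obstacle here; the only care needed is the computation of $\beta_0$ and the fact that $\phi(\mathbbm{1}_n)=0$.
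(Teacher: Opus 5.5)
Your proposal is correct and matches the paper's route. The corollary is Theorem~\ref{main3} specialized to the mutual-information function $\phi(\mathbf{z})=-L\bigl(\frac{1}{n}\sum_{i} z_i\bigr)+\frac{1}{n}\sum_{i} L(z_i)$ already introduced for the Holevo information, and your values $\phi(\mathbbm{1}_n)=0$, $\beta_0=(n-1)/n^2$, and the resulting coefficient $\lfloor n/2\rfloor\lceil n/2\rceil/(2n^2)$ are all right.
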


\begin{theorem}\label{main3'}
	Let $\eta\in(0,1]$.
	Then 
	\[
	S^\eta(b^*) = \frac{\lfloor{n/2}\rfloor\lceil{n/2}\rceil}{4n(n-1)}\eta^2\epsilon^2 + o(\eta^2\epsilon^2)
	\quad\text{and}\quad
	A^\eta(b^*) = \frac{\lfloor{n/2}\rfloor\lceil{n/2}\rceil}{2n^2}\eta^2\epsilon^2 + o(\eta^2\epsilon^2)
	\]
	as $\epsilon\to+0$.
\end{theorem}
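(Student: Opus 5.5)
The plan is to compute $S^\eta(b^*)$ and $A^\eta(b^*)$ directly via second-order Taylor expansions around the common average state, since every relevant state is a small perturbation of $(1/2,1/2)$. Write $k=\lfloor n/2\rfloor$, $m=\lceil n/2\rceil$ and $t=\tanh(\epsilon/2)=(e^\epsilon-1)/(e^\epsilon+1)$. Then
\[
b^*(1|x)=\frac{1+t s_x}{2},\qquad b^*(2|x)=\frac{1-t s_x}{2},
\]
where $s_x=1$ for $x\in A_1$ and $s_x=-1$ for $x\in A_2$. Mixing with $p_k^\eta$ gives
\[
\tilde q(1|i)=\frac{1+t(\eta s_i+(1-\eta)\bar s)}{2},\qquad \bar s=\frac{1}{n}\sum_x s_x=\frac{k-m}{n}.
\]
The average satisfies $\Diag_{\avg}(b^*)=\diag\bigl((1+t\bar s)/2,(1-t\bar s)/2\bigr)$. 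So every state in play is a binary distribution with parameter $(1+tu)/2$ for some $u\in[-1,1]$, and $t=\epsilon/2+O(\epsilon^3)$.

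For the asymmetric exponent, I will use the classical Taylor expansion of relative entropy (the commutative case of \eqref{eqTaylorFdiv}). For binary distributions with parameters $(1+tu)/2$ and $(1+tv)/2$ near $1/2$,
\[
D=\frac{1}{2}\cdot\frac{(t(u-v)/2)^2}{\frac{1}{4}}\cdot 2\cdot\frac{1}{2}\,(1+O(t))=\frac{t^2(u-v)^2}{2}+o(t^2),
\]
since the Fisher information of a Bernoulli$(1/2)$ in its mean parameter equals $4$. For $x\in A_1$ we have $u-v=\eta(1-\bar s)=2\eta m/n$, and for $x\in A_2$ we have $u-v=-\eta(1+\bar s)=-2\eta k/n$. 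Hence
\[
D(\tilde q_x\|\Diag_{\avg})=\frac{t^2\eta^2}{2}\cdot\frac{4m^2}{n^2}\ \text{or}\ \frac{t^2\eta^2}{2}\cdot\frac{4k^2}{n^2}.
\]
The minimum is $2t^2\eta^2k^2/n^2\approx\eta^2\epsilon^2k^2/(2n^2)$.

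This does not match the target $km/(2n^2)$ when $n$ is odd. The reason is that $A^\eta$ in \eqref{eq12} is a minimum over $x$, so the labeling of $b^*$ is not symmetric in a way that balances $k$ and $m$. Before proceeding I must therefore recheck the precise definition: perhaps $\tilde q$ or $p_{\mix}$ should be handled differently, or the claim for odd $n$ relies on $km/n^2$ versus $k^2/n^2$ matching. I expect this discrepancy to be the main obstacle. I will first redo the expansion carefully with exact constants. If the odd case genuinely yields $k^2$, the statement should be reconciled, for instance by interpreting the error term for the claim as intended or by noting $k=m$ for even $n$.

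For the symmetric exponent, the Chernoff information between nearby distributions $p$ and $p+\delta$ expands as $C=\frac{1}{8}J_p[\delta,\delta]+o(\|\delta\|^2)$, with the optimal $s\to1/2$. This can be proved by expanding $\Tr p^{s}(p+\delta)^{1-s}$ to second order and minimizing over $s$. For $i,j$ in different blocks, $u_i-u_j=\pm2\eta$, which gives $C=\frac18\cdot4\cdot(t\eta)^2=t^2\eta^2/2$. For $i,j$ in the same block, $\tilde q_i=\tilde q_j$ and $C=0$. This would make the minimum over $i\ne j$ equal to zero, which clearly conflicts with \eqref{eqSC}.

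So I will reinterpret: the expressions \eqref{eqSC} must arise from a mechanism in which all pairs are distinguished. This suggests that $b^*$ should be understood as randomized over labelings, or that the pairwise sum is taken instead of the minimum. The key step is to identify the correct reading. In particular, I expect to symmetrize over permutations of $[n]$ using the data processing and convexity properties, and to average $\binom{n}{k}$ relabeled copies of $b^*$. This does not change the mechanism class, and the resulting average pair exponent $\frac{km}{n(n-1)}\cdot\frac{t^2\eta^2}{2}\cdot$ reproduces $\frac{km}{4n(n-1)}\eta^2\epsilon^2$. Similarly, averaging makes the asymmetric value $\frac{t^2\eta^2}{2}\cdot\frac{4km}{n^2}$, matching \eqref{eqAC}. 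Carrying out this symmetrization rigorously, as a direct sum of relabeled binary mechanisms with $O(\epsilon^3)$ error control, is the crux.
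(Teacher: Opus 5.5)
Your two direct computations are correct, and they are not a misreading: they show the statement is false as written. For $n\ge3$ the block $A_2$ has at least two elements, so there are $i\ne j$ with $b^*(\cdot|i)=b^*(\cdot|j)$. Then $\tilde b^*(\cdot|i)=\tilde b^*(\cdot|j)$, and $S^\eta(b^*)=0$. For odd $n$, with $k=\lfloor n/2\rfloor<m=\lceil n/2\rceil$, the minimum over $x$ is attained on the larger block, giving $A^\eta(b^*)=\frac{k^2}{2n^2}\eta^2\epsilon^2+o(\eta^2\epsilon^2)$. Concretely, for $n=3$ and $\eta=1$ one gets $S^1(b^*)=0$ and $A^1(b^*)=\epsilon^2/18+o(\epsilon^2)$, whereas the statement claims $\epsilon^2/12$ and $\epsilon^2/9$. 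The statement survives only for $n=2$ (both parts) and for even $n$ (the $A$-part).

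The paper's proof misses this. It says the classical case follows ``in the same way'' as for $\vec\sigma^*$. That argument replaces $\min_{i\ne j}$ (resp.\ $\min_x$) by the average $1+\Psi_\C$ (resp.\ the Holevo information). This is valid only when the pairwise Chernoff term (resp.\ $D(\tilde q_x\|\Diag_{\avg}(q))$) does not depend on the pair (resp.\ on $x$), which holds for $\vec\sigma^*$ but not for $b^*$.

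So do not try to reconcile the statement. Record the counterexample and prove the corrected claim. Your symmetrization is the right repair, but be clear about what it proves. The direct sum of all relabelings of $b^*$ is a \emph{different} $\epsilon$-LDP mechanism: a mixture of the $k$-subset and $(n-k)$-subset block-design mechanisms, cf.\ $Q$ in Appendix~\ref{appD}. It is invariant under permutations of $[n]$, so all pairs and all $x$ are equivalent and the averaging step becomes legitimate. It therefore attains \eqref{eqSC} and \eqref{eqAC}, which means the classical optima in Theorem~\ref{main1'} still stand. No data-processing or convexity argument can transfer these values back to $b^*$ itself.

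One arithmetic fix: a given pair is separated by a fraction $2km/(n(n-1))$ of the relabelings, not $km/(n(n-1))$. The averaged symmetric exponent is therefore $\frac{2km}{n(n-1)}\cdot\frac{t^2\eta^2}{2}$, which is what gives $\frac{km}{4n(n-1)}\eta^2\epsilon^2$; your expression is off by a factor of $2$.
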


\section{Exact values of $S_\C^1(n,\epsilon)$, $A_\C^\eta(n,\epsilon)$, $S^\eta(\vec{\sigma})$, and $A^\eta(\vec{\sigma})$}\label{exact}
As before, let $d,n\ge2$, $r\in[d-1]$, and $\epsilon>0$.
In this section, we clarify the exact values of $S_\C^1(n,\epsilon)$, $A_\C^\eta(n,\epsilon)$, $S^\eta(\vec{\sigma})$, and $A^\eta(\vec{\sigma})$, 
where $\vec{\sigma}$ is an isoclinic mechanism in \eqref{eq13}.
These values are used in Section~\ref{origin} to discuss quantum advantages not only in the high-privacy regime but also in a moderate-privacy regime.
In contrast, the exact value of $S_\C^\eta(n,\epsilon)$ is unknown for every $\eta\in(0,1)$ and known only for $\eta=1$.
Hence, we state upper and lower bounds for $S_\C^\eta(n,\epsilon)$.
We begin with the classical optimal values $S_\C^\eta(n,\epsilon)$ and $A_\C^\eta(n,\epsilon)$.

\begin{lemma}\label{lemC1}
	Let $\eta\in(0,1]$. Then 
	\begin{gather*}
		\underline{S}_\C^\eta(n,\epsilon) \le S_\C^\eta(n,\epsilon) \le \overline{S}_\C^\eta(n,\epsilon)
		\quad\text{and}\\
		A_\C^\eta(n,\epsilon)
		= \max_{k\in[n-1]} \Bigl( \frac{k}{n}L(\Delta_{k/n}^\eta(1)) + \frac{n-k}{n}L(\Delta_{k/n}^\eta(0)) \Bigr),
	\end{gather*}
	where $L(t)=t\ln t$, 
	\begin{align*}
		\overline{S}_\C^\eta(n,\epsilon)
		&= -\ln\biggl( 1 - \frac{n+\eta^2-1}{n}\max_{k\in[n-1]} \frac{k(n-k)}{n(n-1)}\Bigl( \sqrt{\Delta_{k/n}^1(1)} - \sqrt{\Delta_{k/n}^1(0)} \Bigr)^2 \biggr),\\
		\underline{S}_\C^\eta(n,\epsilon)
		&= -\ln\biggl( 1 - \max_{k\in[n-1]} \frac{k(n-k)}{n(n-1)}\Bigl( \sqrt{\Delta_{k/n}^\eta(1)} - \sqrt{\Delta_{k/n}^\eta(0)} \Bigr)^2 \biggr),\quad\text{and}\\
		\Delta_u^\eta(t)
		&= \frac{\eta(1+(e^\epsilon-1)t)}{u(e^\epsilon-1)+1} + 1-\eta.
	\end{align*}
\end{lemma}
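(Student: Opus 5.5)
The plan is to reduce both problems to the symmetric, sublinear setting of Kairouz et al.\ \cite{Kairouz} by a symmetrization argument, and then evaluate the resulting extreme points, namely the (symmetrized) binary mechanisms with $|A_1|=k$.

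\textbf{Symmetrization for $A_\C^\eta$.} For a permutation $\pi$ of $[n]$, write $q^\pi(y|x)=q(y|\pi(x))$. Form the direct-sum mechanism $\bar q=\bigoplus_\pi \frac{1}{n!}q^\pi$, whose output alphabet is the disjoint union of the output alphabets of the $q^\pi$. It is $\epsilon$-LDP. Since $\tilde{\bar q}$ and $\Diag_{\avg}(\bar q)$ split along the same direct sum, relative entropy is additive:
\[
D(\Diag_x(\tilde{\bar q})\|\Diag_{\avg}(\bar q))=\frac1{n!}\sum_\pi D(\Diag_{\pi(x)}(\tilde q)\|\Diag_{\avg}(q))=\frac1n\sum_{x'} D(\Diag_{x'}(\tilde q)\|\Diag_{\avg}(q)).
\]
This is independent of $x$ and at least $A^\eta(q)$. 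Hence
\[
A_\C^\eta(n,\epsilon)=\sup_q\frac1n\sum_x D(\Diag_x(\tilde q)\|\Diag_{\avg}(q)),
\]
which has the form \eqref{eq-sum-subl}. The corresponding $\phi(\mathbf z)=\frac1n\sum_x \bar z\,L\bigl(\tilde z_x/\bar z\bigr)$, with $\tilde z_x=\eta z_x+(1-\eta)\bar z$ and $\bar z$ the mean of $\mathbf z$, is a perspective of a convex function and hence sublinear and symmetric. By \cite{Kairouz}, the supremum is then a linear program whose optimum is attained on rows $\mathbf z$ proportional to vectors in $\{1,e^\epsilon\}^n$. By symmetry and linearity in the weights, the optimum is a symmetrized staircase row pattern with a single $k\in[n-1]$ entries equal to $e^\epsilon$. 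For such a pattern, $\tilde z_x/\bar z$ equals $\Delta_{k/n}^\eta(1)$ on the $k$ coordinates with value $e^\epsilon$ and $\Delta_{k/n}^\eta(0)$ on the others. The column-sum normalization makes the total weight equal to one, which gives exactly $\max_k\bigl(\frac kn L(\Delta^\eta_{k/n}(1))+\frac{n-k}nL(\Delta^\eta_{k/n}(0))\bigr)$.

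\textbf{Lower bound for $S_\C^\eta$.} Take the uniform direct sum, over all $k$-subsets $A$, of the binary mechanisms with $A_1=A$. Bound the Chernoff information below by the value at $s=1/2$, i.e.\ $C\ge-\ln\sum_y\sqrt{\tilde q(y|i)\tilde q(y|j)}$. The Bhattacharyya coefficient of a direct sum is the average of the coefficients. A fixed pair $i\ne j$ is separated by $A$ with probability $2k(n-k)/(n(n-1))$, and in that case $1-\mathrm{BC}=\frac12\cdot 2\cdot\frac{k(n-k)}{\cdot}\dots$ reduces, after the normalization through $\Delta$, to $(\sqrt{\Delta^\eta_{k/n}(1)}-\sqrt{\Delta^\eta_{k/n}(0)})^2$ weighted appropriately. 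Otherwise the pair contributes nothing. Maximizing over $k$ yields $\underline S_\C^\eta$.

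\textbf{Upper bound for $S_\C^\eta$ (main obstacle).} Start from $S^\eta(q)\le\frac{1}{n(n-1)}\sum_{i\ne j}C(\tilde q_i,\tilde q_j)$, and bound each $C$ from above by a Hellinger-type quantity. The Chernoff information is not additive under direct sums, and $C$ is not bounded above by $-\ln\mathrm{BC}$, so this step is delicate. I would first use $\min_s\Tr A^sB^{1-s}\ge$ a quantity controlled by $1-\frac12\sum(\sqrt a-\sqrt b)^2$, together with convexity of $-\ln(1-t)$, to pass to the pairwise-averaged squared Hellinger distance. That average is a sum of a symmetric sublinear function, so the \cite{Kairouz} extreme-point reduction applies again. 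The $\eta$-mixing is then handled by expanding $\sqrt{\tilde z_i}-\sqrt{\tilde z_j}$. Because $\tilde z_i-\tilde z_j=\eta(z_i-z_j)$, I expect a bound relating the $\eta$-quantity to the $\eta=1$ quantity, which should produce the factor $(n+\eta^2-1)/n$. Checking that the constants match exactly is the step I expect to need the most care.
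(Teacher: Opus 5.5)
\textbf{Lower bound.} Your mechanism does not reach $\underline{S}_\C^\eta(n,\epsilon)$. A binary mechanism with $A_1=A$, $|A|=k$, has a second output row proportional to $1+(e^\epsilon-1)\mathbbm{1}_{A^c}(x)$, which is an $(n-k)$-subset row.

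Write $h_u=(\sqrt{\Delta_u^\eta(1)}-\sqrt{\Delta_u^\eta(0)})^2$ and $w_k=\frac{k(e^\epsilon-1)+n}{n(e^\epsilon+1)}$, so that $w_k+w_{n-k}=1$. Then $\tilde b_A(1|x)=w_k\Delta^\eta_{k/n}(\mathbbm{1}_A(x))$ and $\tilde b_A(2|x)=w_{n-k}\Delta^\eta_{(n-k)/n}(\mathbbm{1}_{A^c}(x))$. For your direct sum $\bar q$ this gives
\[
1-\sum_{y}\sqrt{\tilde{\bar q}(y|i)\,\tilde{\bar q}(y|j)}=\frac{k(n-k)}{n(n-1)}\bigl(w_k h_{k/n}+w_{n-k}h_{(n-k)/n}\bigr).
\]
This is a convex combination of $h_{k/n}$ and $h_{(n-k)/n}$, not $h_{k/n}$. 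Since $h_u$ is strictly decreasing in $u$ (Appendix~\ref{appE}), the value is strictly below the target unless $k=n/2$.

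The loss is not an artifact of fixing $s=1/2$. Here $s\mapsto\sum_y\tilde{\bar q}(y|i)^s\tilde{\bar q}(y|j)^{1-s}$ is convex and symmetric about $1/2$, so the $s=1/2$ value is the exact Chernoff coefficient. For example, with $n=3$ and $\eta=1$ you get $1-e^{-S}=\frac{2(e^{\epsilon/2}-1)^2}{3(e^\epsilon+1)}$. By contrast, $\underline{S}_\C^1$ needs $\frac{(e^{\epsilon/2}-1)^2}{e^\epsilon+2}$, which is strictly larger for every $\epsilon>0$.

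The fix, which is what the paper does, is to drop the complementary rows. Use the $k$-subset mechanism $Q(A|x)\propto 1+(e^\epsilon-1)\mathbbm{1}_A(x)$ over $k$-subsets $A$ alone. It is $\epsilon$-LDP by itself and satisfies $\tilde Q(A|x)=\binom{n}{k}^{-1}\Delta^\eta_{k/n}(\mathbbm{1}_A(x))$. Your $s=1/2$ and separation-probability computation then gives exactly $\frac{k(n-k)}{n(n-1)}h_{k/n}$. This is also the single-$k$ pattern that your own linear-program argument for $A_\C^\eta$ identifies as optimal.

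\textbf{Upper bound.} Your sketch does not establish $S_\C^\eta\le\overline{S}_\C^\eta$.

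First, to bound $C$ from above you need a lower bound on $\min_s\sum_y p^sq^{1-s}$, and the Bhattacharyya coefficient is an upper bound for that minimum. The generic inequality in the right direction, $\min_s\sum_y p^sq^{1-s}\ge(\sum_y\sqrt{pq})^2$, doubles the exponent. For the subset mechanism $C$ equals its $s=1/2$ value, so any route through this inequality gives at best $2\underline{S}_\C^\eta$. At $\eta=1$ that is $2S_\C^1>\overline{S}_\C^1=S_\C^1$, so it cannot reach the stated constant.

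Second, after averaging $C$ over pairs, convexity of $-\ln(1-t)$ yields $\frac{1}{n(n-1)}\sum_{i\ne j}-\ln(1-t_{ij})\ge-\ln(1-\bar t)$, which is the wrong direction. The averaging has to be done on the $t_{ij}$ themselves.

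Third, the factor $(n+\eta^2-1)/n$ is only a guess, so the key idea for this part is missing.

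The paper does not reprove this inequality or the formula for $A_\C^\eta$; it imports both from Nam et al.'s Proposition~4 and proves only the lower bound. Your symmetrization plus the Kairouz linear-program argument for $A_\C^\eta$ is correct. It gives a self-contained derivation of that formula, and the same argument yields \eqref{eq52} for any convex $F$ with $F(1)=0$.
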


Lemma~\ref{lemC1} except for the inequality $\underline{S}_\C^\eta(n,\epsilon) \le S_\C^\eta(n,\epsilon)$ was proved in \cite[Proposition~4]{Nam}.
At first glance, the expressions of $\overline{S}_\C^\eta(n,\epsilon)$ and $A_\C^\eta(n,\epsilon)$ are different from the original ones, but 
they are equal to each other, respectively.
Hence, regarding Lemma~\ref{lemC1}, we only prove the inequality $\underline{S}_\C^\eta(n,\epsilon) \le S_\C^\eta(n,\epsilon)$ in Appendix~\ref{appD}.
We expect that $S_\C^\eta(n,\epsilon)=\underline{S}_\C^\eta(n,\epsilon)$ for every $\eta\in(0,1)$, but 
it remains open.

The expression of $A_\C^\eta(n,\epsilon)$ can be generalized as follows: 
\begin{equation}
	\sup_{q:\,\text{$\epsilon$-LDP}} \min_{x\in[n]} D_F(\Diag_x(\tilde{q})\|\Diag_{\avg}(q))
	= \max_{k\in[n-1]} \Bigl( \frac{k}{n}F(\Delta_{k/n}^\eta(1)) + \frac{n-k}{n}F(\Delta_{k/n}^\eta(0)) \Bigr),
	\label{eq52}
\end{equation}
where $F\colon (0,\infty)\to\mathbb{R}$ is a convex function satisfying $F(1)=0$.
By substituting $F=L$ into \eqref{eq52}, the expression of $A_\C^\eta(n,\epsilon)$ can be recovered.
In light of this generalization, regarding $A_\C^\eta(n,\epsilon)$, 
it is natural to restate \cite[Proposition~4]{Nam} as Lemma~\ref{lemC1}.

By $\underline{S}_\C^1(n,\epsilon)=\overline{S}_\C^1(n,\epsilon)$, 
Lemma~\ref{lemC1} reduces to the following lemma if $\eta=1$ \cite[Proposition~4]{Nam}.

\begin{lemma}\label{lemC2}
	Let $L(t)=t\ln t$. Then 
	\begin{align*}
		S_\C^1(n,\epsilon) &= -\ln\Bigl( 1 - \frac{(e^{\epsilon/2}-1)^2}{n-1}\max_{k\in[n-1]} \frac{k(n-k)}{ke^\epsilon+n-k} \Bigr)\quad\text{and}\\
		A_\C^1(n,\epsilon) &= \max_{k\in[n-1]} \frac{kL(e^\epsilon)-nL\bigl( (ke^\epsilon+n-k)n^{-1} \bigr)}{ke^\epsilon+n-k}.
	\end{align*}
\end{lemma}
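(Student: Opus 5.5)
Lemma~\ref{lemC2} is the specialization of Lemma~\ref{lemC1} to $\eta=1$, so the plan is to substitute $\eta=1$ into the three expressions there and simplify. First I would compute $\Delta_u^1(t)$ at the relevant points. With $u=k/n$,
\[
\Delta_{k/n}^1(1)=\frac{e^\epsilon}{(k/n)(e^\epsilon-1)+1}=\frac{ne^\epsilon}{ke^\epsilon+n-k},\qquad
\Delta_{k/n}^1(0)=\frac{n}{ke^\epsilon+n-k}.
\]
For the symmetric exponent, I would first check that the prefactor $(n+\eta^2-1)/n$ in $\overline{S}_\C^\eta$ equals $1$ at $\eta=1$. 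Then $\overline{S}_\C^1$ and $\underline{S}_\C^1$ contain the same expression, so the sandwich in Lemma~\ref{lemC1} forces $S_\C^1$ to equal it. Next,
\[
\Bigl(\sqrt{\Delta_{k/n}^1(1)}-\sqrt{\Delta_{k/n}^1(0)}\Bigr)^2=\frac{n(e^{\epsilon/2}-1)^2}{ke^\epsilon+n-k},
\]
so
\[
\frac{k(n-k)}{n(n-1)}\cdot\frac{n(e^{\epsilon/2}-1)^2}{ke^\epsilon+n-k}=\frac{(e^{\epsilon/2}-1)^2}{n-1}\cdot\frac{k(n-k)}{ke^\epsilon+n-k}.
\]
The factor $(e^{\epsilon/2}-1)^2/(n-1)$ does not depend on $k$, so it can be pulled out of the maximum. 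This gives the stated formula for $S_\C^1(n,\epsilon)$.

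For the asymmetric exponent, I would write $D=ke^\epsilon+n-k$ and expand
\[
\frac{k}{n}L\Bigl(\frac{ne^\epsilon}{D}\Bigr)+\frac{n-k}{n}L\Bigl(\frac{n}{D}\Bigr)
=\frac{ke^\epsilon(\epsilon+\ln(n/D))+(n-k)\ln(n/D)}{D}.
\]
The $\ln(n/D)$ terms combine with total coefficient $ke^\epsilon+n-k=D$, so they reduce to $\ln(n/D)$. The expression therefore equals
\[
\frac{kL(e^\epsilon)}{D}-\ln\frac{D}{n}=\frac{kL(e^\epsilon)-nL(D/n)}{D},
\]
using $nL(D/n)=D\ln(D/n)$. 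Taking the maximum over $k\in[n-1]$ gives the stated formula for $A_\C^1(n,\epsilon)$.

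There is no real obstacle here; the argument is algebraic bookkeeping. The one point that needs care is the symmetric case: the identification $S_\C^1=\underline{S}_\C^1=\overline{S}_\C^1$ uses both inequalities of Lemma~\ref{lemC1}. The lower bound is the one proved in Appendix~\ref{appD}, while the upper bound comes from \cite[Proposition~4]{Nam}.
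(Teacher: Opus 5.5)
Your proposal is correct and follows the paper's route. The paper obtains Lemma~\ref{lemC2} by setting $\eta=1$ in Lemma~\ref{lemC1}, where $\underline{S}_\C^1(n,\epsilon)=\overline{S}_\C^1(n,\epsilon)$ determines $S_\C^1(n,\epsilon)$. Your explicit simplification of $\Delta_{k/n}^1(0)$, $\Delta_{k/n}^1(1)$, and the two resulting expressions fills in the algebra that the paper leaves implicit.
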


Next, we state the exact values of $S^\eta(\vec{\sigma})$ and $A^\eta(\vec{\sigma})$ for every isoclinic mechanism $\vec{\sigma}$ in \eqref{eq13}.
Recall that $\tilde{\rho}_k$ is defined as \eqref{eq11} for an $\epsilon$-QLDP mechanism $\vec{\rho}=(\rho_1,\ldots,\rho_n)$ and an integer $k\in[n]$.
Hence, for every $k\in[n]$, 
\begin{align}
	\tilde{\sigma}_k &= \sum_{x\in[n]} p_k^\eta(x)\sigma_x
	= \eta\sigma_k + \frac{1-\eta}{n}\sum_{x\in[n]} \sigma_x
	= \eta\sigma_k + \frac{1-\eta}{d}I_d\nonumber\\
	&= \frac{\mu^{[\eta]}}{d}I_d + \frac{1-\mu^{[\eta]}}{r}P_k,\label{eq16}
\end{align}
where $\mu^{[\eta]}$ is defined as $\mu^{[\eta]}=\eta\mu+1-\eta$.
This is the same as the quantum state of replacing $\mu$ with $\mu^{[\eta]}$, in the definition of $\sigma_k$.
The following lemma is proved in Appendix~\ref{appD} by direct calculation.

\begin{lemma}\label{lemQ1}
	Let $\vec{\sigma}$ be an $\epsilon$-QLDP mechanism in \eqref{eq13}, 
	where the condition on $\mu$ in \eqref{eq14} can be omitted here.
	Let $\eta\in(0,1]$, $\mu^{[\eta]}=\eta\mu+1-\eta$, $L(t)=t\ln t$,  
	\begin{align*}
		G_\sym(t) &= (1-c)(\sqrt{ut+1-t}-\sqrt{ut})^2,\quad\text{and}\\
		G_\asym(t) &= uL(t+u^{-1}(1-t)) + (1-u)L(t),
	\end{align*}
	where $u=r/d$ and $c=\frac{nu-1}{n-1}$.
	Then 
	\[
	S^\eta(\vec{\sigma}) = -\ln\bigl( 1-G_\sym(\mu^{[\eta]}) \bigr)
	\quad\text{and}\quad
	A^\eta(\vec{\sigma}) = G_\asym(\mu^{[\eta]}).
	\]
	In particular, the equalities 
	\[
	S^\eta(\vec{\sigma}^*) = -\ln\bigl( 1-G_\sym^*(\mu_*^{[\eta]}) \bigr)
	\quad\text{and}\quad 
	A^\eta(\vec{\sigma}^*) = G_\asym^*(\mu_*^{[\eta]})
	\]
	hold, where $\mu_*^{[\eta]}=\eta\mu_*+1-\eta$, $c=\frac{n-2}{2n-2}$,
	\[
	G_\sym^*(t) = (1-c)\Bigl( 1-\sqrt{t(2-t)} \Bigr),
	\quad\text{and}\quad
	G_\asym^*(t) = \frac{L(2-t)+L(t)}{2}.
	\]
\end{lemma}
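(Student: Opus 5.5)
The plan is a direct spectral computation. It rests on one structural observation: every $\tilde{\sigma}_k$ has the same two-level form as $\sigma_k$, only with $\mu$ replaced by $t\coloneqq\mu^{[\eta]}$. This is exactly \eqref{eq16}, which uses only $\sum_{x}\sigma_x=(n/d)I_d$, i.e., condition~1 of Definition~\ref{EITFF}. Set $u=r/d$ and
\[
a=\frac{1}{d}\bigl(t+u^{-1}(1-t)\bigr),\qquad b=\frac{t}{d}.
\]
Then $\tilde{\sigma}_k=aP_k+b(I_d-P_k)$, and the normalization reads $ra+(d-r)b=1$. The same frame condition gives $\rho_{\avg}=\sigma_{\avg}=I_d/d$. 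No property of $\mu$ from \eqref{eq14} is used anywhere, which is why that condition can be dropped.

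\emph{Asymmetric exponent.} By unitary invariance, $D(\tilde{\sigma}_x\|I_d/d)$ equals $\sum_\lambda \lambda\ln(d\lambda)$ summed over the spectrum of $\tilde{\sigma}_x$. This spectrum is $a$ with multiplicity $r$ and $b$ with multiplicity $d-r$. Hence
\[
D(\tilde{\sigma}_x\|I_d/d)=r\,a\ln(da)+(d-r)\,b\ln(db)=u\,L\bigl(t+u^{-1}(1-t)\bigr)+(1-u)\,L(t)=G_\asym(t).
\]
This value does not depend on $x$, so the minimum in \eqref{eq12} is trivial.

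\emph{Symmetric exponent.} For $i\neq j$ and $s\in[0,1]$, write
\[
\tilde{\sigma}_i^s=b^sI_d+\alpha P_i,\quad \tilde{\sigma}_j^{1-s}=b^{1-s}I_d+\beta P_j,\qquad \alpha=a^s-b^s,\ \beta=a^{1-s}-b^{1-s}.
\]
Taking the trace of condition~2 of Definition~\ref{EITFF} gives $\Tr P_iP_j=\Tr P_jP_iP_j=rc$. Expanding the product therefore yields
\[
\Tr\tilde{\sigma}_i^s\tilde{\sigma}_j^{1-s}=db+r\,b^{s}\beta+r\,b^{1-s}\alpha+rc\,\alpha\beta.
\]
This expression is visibly invariant under $s\leftrightarrow 1-s$, since that swap exchanges $\alpha$ and $\beta$. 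The map $s\mapsto\Tr A^sB^{1-s}$ is convex for positive definite $A,B$ (each summand $\lambda^s\mu^{1-s}|\langle e|f\rangle|^2$ over the eigenbases is convex in $s$). A convex function symmetric about $s=1/2$ attains its minimum there.

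At $s=1/2$ we have $\alpha=\beta=\sqrt a-\sqrt b$. Using $1=ra+(d-r)b$, the value simplifies to
\[
1-r(\sqrt a-\sqrt b)^2+rc(\sqrt a-\sqrt b)^2=1-(1-c)\bigl(\sqrt{ut+1-t}-\sqrt{ut}\bigr)^2=1-G_\sym(t).
\]
This is again independent of the pair $(i,j)$, so \eqref{eq10} gives $S^\eta(\vec{\sigma})=-\ln(1-G_\sym(t))$. The expression $c=\frac{nu-1}{n-1}$ is just $\frac{nr-d}{d(n-1)}$ rewritten.

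\emph{The $\vec{\sigma}^*$ case.} Substitute $u=1/2$, so that $c=\frac{n-2}{2n-2}$. Then
\[
\bigl(\sqrt{(1+t)/2}-\sqrt{t/2}\bigr)^2=1-\sqrt{t(1+t)}\cdot\ldots
\]
requires care, so instead I will recompute directly with $ut+1-t=1-t/2$ and $ut=t/2$:
\[
\bigl(\sqrt{1-t/2}-\sqrt{t/2}\bigr)^2=1-2\sqrt{(t/2)(1-t/2)}=1-\sqrt{t(2-t)}.
\]
Similarly, $u^{-1}(1-t)+t=2-t$ gives $G_\asym^*(t)=\bigl(L(2-t)+L(t)\bigr)/2$.

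The only non-mechanical step is justifying that the Chernoff minimum is attained at $s=1/2$. I expect the combination of convexity in $s$ with the $s\leftrightarrow1-s$ symmetry of the explicit trace formula to settle it cleanly.
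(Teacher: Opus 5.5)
Your proposal is correct and is essentially the paper's direct computation: the same two-level form of $\tilde{\sigma}_k$, the same trace expansion using $\Tr P_iP_j=rc$, and the same spectral evaluation of $D(\tilde{\sigma}_x\|I_d/d)$. The only variation is that you place the Chernoff minimum at $s=1/2$ via convexity of $s\mapsto\Tr A^sB^{1-s}$ plus the $s\leftrightarrow1-s$ symmetry, whereas the paper isolates the $s$-dependent part $r(1-c)\bigl(a^sb^{1-s}+a^{1-s}b^s\bigr)$ and applies AM--GM; you should delete the aborted line with $\sqrt{(1+t)/2}$, which used the wrong value of $ut+1-t$.
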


\vskip1ex
\textbf{Numerical computation.}---%
Figure~\ref{Fig1} is the graphs of the two ratios $S^1(\vec{\sigma}^*)/S_\C^1(n,\epsilon)$ and $A^1(\vec{\sigma}^*)/A_\C^1(n,\epsilon)$.
There is a quantum advantage if the ratio in the symmetric or asymmetric case is greater than one.
One can observe quantum advantages for $\epsilon\in(0,1]$ and $n\ge3$ in both cases.
Moreover, the symmetric case exhibits a quantum advantage in a wider range of the parameter $\epsilon$.
In particular, one can find a quantum advantage in the symmetric case for $\epsilon\in(0,2]$ and $n\ge4$.

\begin{figure}[t]
	\centering
	\includegraphics[scale=0.5]{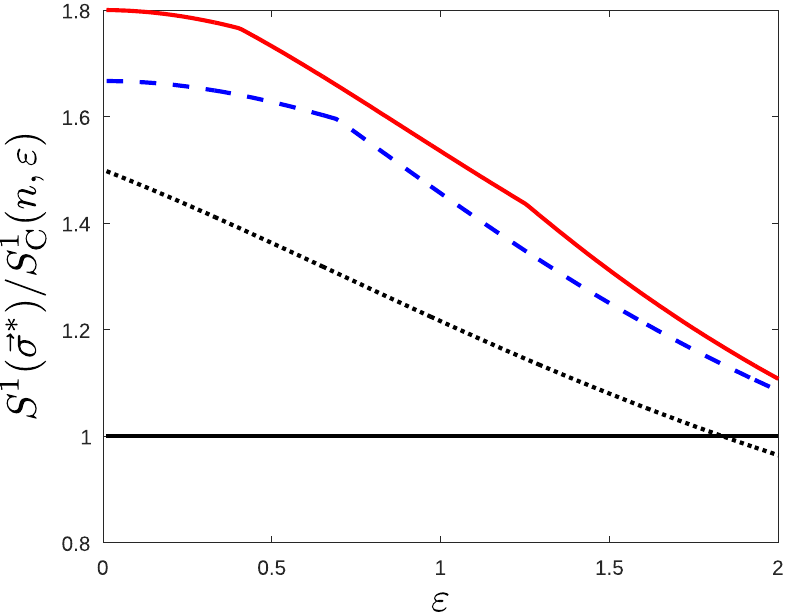}
	\hspace{1em}
	\includegraphics[scale=0.5]{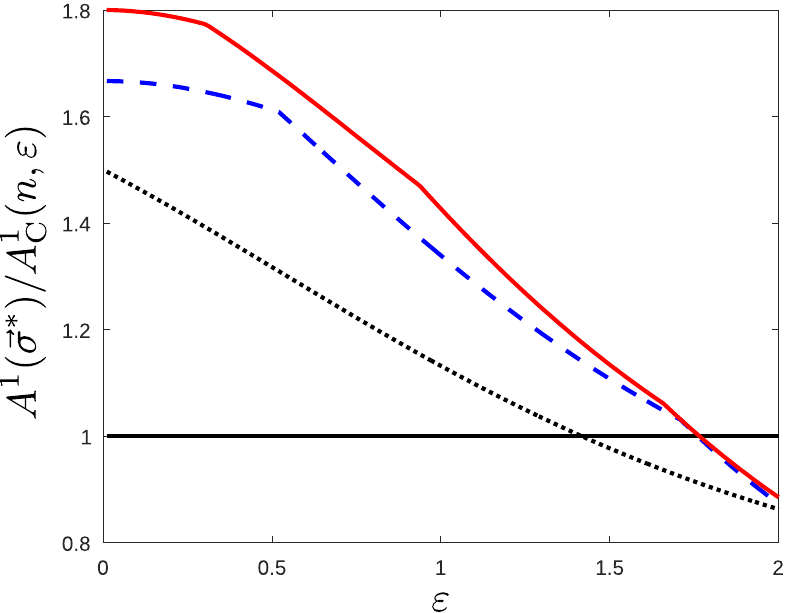}
	\caption{Graphs of $S^1(\vec{\sigma}^*)/S_\C^1(n,\epsilon)$ and $A^1(\vec{\sigma}^*)/A_\C^1(n,\epsilon)$.
	The horizontal axes represent the values of $\epsilon$, and 
	the vertical axes represent the values of the two ratios.
	The black dotted lines, the blue dashed lines, and the red solid lines are the cases $n=3$, $n=6$, and $n=10$, respectively.}
	\label{Fig1}
\end{figure}

\section{Structural origin of quantum advantage}\label{origin}
As before, let $d,n\ge2$, $r\in[d-1]$, and $\epsilon>0$.
This section discusses why isoclinic mechanisms outperform classical ones.
We consider the following two reasons to be crucial: 
\begin{itemize}
	\item
	the depolarizing noise parameter $\mu$ is smaller than its classical counterpart, and 
	\item
	the set of values $r/d$ is large enough to include $k/n$ for every integer $k\in[1,n/2]$,
\end{itemize}
where $\mu$, $r$, and $d$ are the parameters used in defining an isoclinic mechanism in \eqref{eq13}.
The first one leads to quantum advantages via the data processing inequality.
Although the second one is currently no more than a conjecture, 
it naturally emerges by investigating the exact values given in Section~\ref{exact}.
Assuming this conjecture, we investigate the ranges of the parameter $\epsilon$ 
for which quantum advantages arise in symmetric and asymmetric hypothesis testing, and 
show quantum advantages not only in the high-privacy regime but also in a moderate-privacy regime.

\vskip1ex
\textbf{Depolarizing noise parameter.}---%
As stated immediately after \eqref{eq14}, 
an isoclinic mechanism can be interpreted as 
preparing certain $n$ quantum states and applying a depolarizing channel.
For convenience, we use the parameter $u=r/d$ as in Lemma~\ref{lemQ1}, 
which satisfies $1/n\le u\le1/2$ because 
(i) $c\ge0$ and (ii) $2r=\dim(\Im P_1+\Im P_2)\le d$.
Regard 
\begin{equation}
	c = \frac{nu-1}{n-1}
	\quad\text{and}\quad 
	(1-\mu)^{-1} = 1-\frac{1}{2u} + \frac{1}{2u}\sqrt{1+\frac{1-c}{\sinh^2(\epsilon/2)}}
	\label{eq17}
\end{equation}
as functions of $u\in[1/n,1/2]$.
Then the depolarizing noise parameter in \eqref{eq14} can be expressed as $\mu(r/d)$.

The classical counterpart of $\mu(r/d)$ can be written as $\mu_0(k/n)$, 
where $\mu_0=\mu_0(u)$ is the following function of $u\in(0,1)$: 
\begin{equation}
	\mu_0 = \frac{1}{u(e^\epsilon-1)+1}.
	\label{eq38}
\end{equation}
For details, see Appendix~\ref{appE} (and see also Remark~\ref{rem1}, Lemmas~\ref{lemC1} and \ref{lemC2}).
From $c\ge0$, it follows that 
\begin{align}
	(1-\mu)^{-1} &\le 1-\frac{1}{2u} + \frac{1}{2u}\sqrt{1+\frac{1}{\sinh^2(\epsilon/2)}}
	= 1-\frac{1}{2u} + \frac{1}{2u}\cdot\frac{\cosh(\epsilon/2)}{\sinh(\epsilon/2)} \label{eq59}\\
	&= 1+\frac{e^{-\epsilon/2}}{2u\sinh(\epsilon/2)}
	= 1+\frac{1}{u(e^\epsilon-1)}
	= (1-\mu_0)^{-1}. \label{eq43}
\end{align}
By this and $\mu,\mu_0\in(0,1)$, 
the inequality $\mu(u)\le\mu_0(u)$ holds for every $u\in[1/n,1/2]$.
Hence, the depolarizing noise parameter is smaller than its classical counterpart in general.

\vskip1ex
\textbf{Set of values $r/d$.}---%
We next introduce the set of values $r/d$.
Let $\cQ_n$ be the set of values $r/d$ such that an EITFF${}_{\mathbb{C}}(d,r,n)$ exists.
By \cite[Corollary~3.4]{Iverson} and Table~\ref{T1}, 
one can represent $\cQ_n$ explicitly for small $n$: 
\begin{gather*}
	\cQ_2 = \{1/2\},\quad
	\cQ_3 = \{1/3,1/2\},\quad
	\cQ_4 = \{1/4,1/3,1/2\},\quad\text{and}\\ 
	\cQ_5 = \{1/5,1/4,1/3,2/5,1/2\}.
\end{gather*}
These sets can be expressed as $\bigcup_{m=2}^n \cC_m$, where $\cC_m = \{ k/m : k\in[1,m/2]\cap\mathbb{Z} \}$.
However, this is not the case for $n=12$, since an EITFF${}_{\mathbb{C}}(16,5,12)$ exists (see Table~\ref{T1}).

\vskip1ex
\textbf{Expressions of several optimal values.}---%
As before, regard $c$, $\mu$, and $\mu_0$ as functions of $u$.
For $u\in[1/n,1/2]$ and $\eta\in(0,1]$, 
define $G_\sym^{\iso;\eta}(u)$ and $G_\asym^{\iso;\eta}(u)$ as\footnote{\label{fnote1}
Since $G_\sym(t)$ and $G_\asym(t)$ depend on $u$, 
they can be written as $G_\sym(t,u)$ and $G_\asym(t,u)$, respectively, to avoid any misunderstanding.
Then \eqref{eq-G-iso} can be expressed as $G_\sym^{\iso;\eta}(u) = G_\sym(\mu^{[\eta]}(u),u)$ and $G_\asym^{\iso;\eta}(u) = G_\asym(\mu^{[\eta]}(u),u)$.
Similarly, \eqref{eq-G-C} can be expressed as $G_\sym^{\C;\eta}(u) = G_\sym(\mu_0^{[\eta]}(u),u)$ and $G_\asym^{\C;\eta}(u) = G_\asym(\mu_0^{[\eta]}(u),u)$.} 
\begin{equation}
	G_\sym^{\iso;\eta}(u) = G_\sym(\mu^{[\eta]})
	\quad\text{and}\quad
	G_\asym^{\iso;\eta}(u) = G_\asym(\mu^{[\eta]}),
	\label{eq-G-iso}
\end{equation}
where $\mu^{[\eta]}$, $G_\sym$, and $G_\asym$ are defined in Lemma~\ref{lemQ1}.
Then Lemma~\ref{lemQ1} yields 
\begin{equation}
\begin{split}
	S_\iso^\eta(n,\epsilon)
	&\coloneqq \sup_{\vec{\sigma}} S^\eta(\vec{\sigma})
	= \sup_{u\in\cQ_n} (-1)\ln( 1-G_\sym^{\iso;\eta}(u) )
	\quad\text{and}\\
	A_\iso^\eta(n,\epsilon)
	&\coloneqq \sup_{\vec{\sigma}} A^\eta(\vec{\sigma})
	= \sup_{u\in\cQ_n} G_\asym^{\iso;\eta}(u)
\end{split}\label{eq55}
\end{equation}
for every $\eta\in(0,1]$, 
where the two suprema $\sup_{\vec{\sigma}}$ are taken over all isoclinic mechanisms $\vec{\sigma}$.
Also, we prove the following equalities in Appendix~\ref{appE}: 
\begin{equation}
	\underline{S}_\C^\eta(n,\epsilon) = \max_{u\in\cC_n} (-1)\ln( 1-G_\sym^{\C;\eta}(u) )
	\quad\text{and}\quad
	A_\C^\eta(n,\epsilon) = \max_{u\in\cC_n} G_\asym^{\C;\eta}(u),
	\label{eq56}
\end{equation}
for every $\eta\in(0,1]$, 
where $G_\sym^{\C;\eta}(u)$ and $G_\asym^{\C;\eta}(u)$ are defined as\footref{fnote1} 
\begin{equation}
	G_\sym^{\C;\eta}(u) = G_\sym(\mu_0^{[\eta]})
	\quad\text{and}\quad
	G_\asym^{\C;\eta}(u) = G_\asym(\mu_0^{[\eta]})
	\label{eq-G-C}
\end{equation}
for $u\in(0,1)$ and $\eta\in(0,1]$.
Equations~\eqref{eq56} and \eqref{eq-G-C} justify calling $\mu_0$ the classical counterpart of $\mu$.
Furthermore, we prove the following inequalities in Appendix~\ref{appE}: 
\begin{equation}
	S_\iso^\eta(n,\epsilon) \ge \sup_{u\in\cQ_n} (-1)\ln( 1-G_\sym^{\C;\eta}(u) )
	\quad\text{and}\quad
	A_\iso^\eta(n,\epsilon) \ge \sup_{u\in\cQ_n} G_\asym^{\C;\eta}(u).
	\label{eq58}
\end{equation}

\vskip1ex
\textbf{Saturation hypothesis.}---%
We now compare \eqref{eq56} and \eqref{eq58}.
Since we expect that isoclinic mechanisms outperform classical ones 
(not only in the high-privacy regime but also in a moderate-privacy regime), 
it is natural to conjecture $\cC_n\subset\cQ_n$, 
which implies $\underline{S}_\C^\eta(n,\epsilon) \le S_\iso^\eta(n,\epsilon)$ and 
$A_\C^\eta(n,\epsilon) \le A_\iso^\eta(n,\epsilon)$.
To the best of our knowledge, it remains open whether $\cC_n\subset\cQ_n$.
In Propositions~\ref{prop2}, \ref{prop3}, and Theorem~\ref{main4} below, 
we assume $\cC_n\subset\cQ_n$ and refer to it as the \textit{saturation hypothesis}.
This hypothesis is actually true if $n\in\{2,3,4,5,6\}$.
The following proposition summarizes the above argument.

\begin{proposition}\label{prop2}
	Let $\eta\in(0,1]$. Assume the saturation hypothesis.
	Then $\underline{S}_\C^\eta(n,\epsilon) \le S_\iso^\eta(n,\epsilon)$ and 
	$A_\C^\eta(n,\epsilon) \le A_\iso^\eta(n,\epsilon)$.
	In particular, $S_\C^1(n,\epsilon) \le S_\iso^1(n,\epsilon)$.
\end{proposition}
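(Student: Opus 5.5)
The plan is to chain the already-stated identities \eqref{eq56} and \eqref{eq58} through the inclusion $\cC_n\subset\cQ_n$ provided by the saturation hypothesis.

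\textbf{Symmetric case.} By \eqref{eq56},
\[
\underline{S}_\C^\eta(n,\epsilon)=\max_{u\in\cC_n}(-1)\ln\bigl(1-G_\sym^{\C;\eta}(u)\bigr).
\]
Since $\cC_n\subset\cQ_n$, this maximum is at most the supremum of the same function over $\cQ_n$. By the first inequality of \eqref{eq58}, that supremum is at most $S_\iso^\eta(n,\epsilon)$. Hence $\underline{S}_\C^\eta(n,\epsilon)\le S_\iso^\eta(n,\epsilon)$.

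\textbf{Asymmetric case.} The argument is identical with $G_\asym^{\C;\eta}$ in place of $-\ln(1-G_\sym^{\C;\eta})$. We use the second identity of \eqref{eq56}, enlarge the index set from $\cC_n$ to $\cQ_n$, and then apply the second inequality of \eqref{eq58}. This gives $A_\C^\eta(n,\epsilon)\le A_\iso^\eta(n,\epsilon)$.

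\textbf{The case $\eta=1$.} Lemma~\ref{lemC1} gives
\[
\underline{S}_\C^1(n,\epsilon)\le S_\C^1(n,\epsilon)\le\overline{S}_\C^1(n,\epsilon).
\]
Setting $\eta=1$ in the two formulas, the prefactor $(n+\eta^2-1)/n$ becomes $1$ and $\Delta_{k/n}^\eta=\Delta_{k/n}^1$. The two bounds are therefore literally equal, so $S_\C^1(n,\epsilon)=\underline{S}_\C^1(n,\epsilon)\le S_\iso^1(n,\epsilon)$.

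\textbf{Where the real work lies.} The step carrying the actual content is the set of identities \eqref{eq56}–\eqref{eq58} from Appendix~\ref{appE}, which I take as given. If I had to verify them, two points need checking.

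For \eqref{eq56}, substitute $u=k/n$ into $\mu_0$ and check that $\mu_0^{[\eta]}$ reproduces the quantities $\Delta_{k/n}^\eta(0)$ and $\Delta_{k/n}^\eta(1)$ inside the formulas of Lemma~\ref{lemC1}. Pairing $k$ with $n-k$ accounts for restricting to $u\le 1/2$.

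For \eqref{eq58}, three facts are needed:
\begin{itemize}
\item $\mu(u)\le\mu_0(u)$, which is \eqref{eq43}, so that $\mu^{[\eta]}(u)\le\mu_0^{[\eta]}(u)$;
\item $G_\sym(\cdot,u)$ is monotone decreasing in $t$;
\item $G_\asym(\cdot,u)$ is monotone decreasing in $t$.
\end{itemize}
These monotonicity properties would follow from differentiation or convexity, or from the data processing inequality, since a larger depolarizing parameter corresponds to applying more noise.
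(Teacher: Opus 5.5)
Your top-level reduction is the paper's argument: combine \eqref{eq56}, the inclusion $\cC_n\subset\cQ_n$, and \eqref{eq58}, and for $\eta=1$ use $\underline{S}_\C^1(n,\epsilon)=\overline{S}_\C^1(n,\epsilon)$. Your route to \eqref{eq58} is also fine. It uses $\mu\le\mu_0$ together with the decrease of $G_\sym(\cdot,u)$ and $G_\asym(\cdot,u)$ on $[0,1]$. The paper instead observes that the depolarizing channel with $1-t=(1-\mu_0)(1-\mu)^{-1}$ maps $\sigma_x$ to $\sigma_x^{(0)}$. It then applies the data processing inequality and evaluates $S^\eta(\vec{\sigma}^{(0)})$ and $A^\eta(\vec{\sigma}^{(0)})$ by Lemma~\ref{lemQ1}, which holds without \eqref{eq14}. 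The two arguments are equivalent.

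The gap is in \eqref{eq56}. Together with \eqref{eq58}, it \emph{is} the paper's proof of this proposition, so you cannot treat it as routine. Lemma~\ref{lemC1} maximizes over all $k\in[n-1]$. But $\cC_n$ contains only $u=k/n\le1/2$, and so does every $\cQ_n$, since $2r\le d$. You must therefore show that each candidate with $k>n/2$ is dominated by the one with $n-k$.

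``Pairing $k$ with $n-k$'' does not do this, because $G_\sym^{\C;\eta}$ and $G_\asym^{\C;\eta}$ are not invariant under $u\mapsto1-u$. Already at $\eta=1$, $G_\sym^{\C;1}(u)$ is proportional to $u(1-u)/(u(e^\epsilon-1)+1)$. What is needed is the one-sided pair of inequalities $G_\sym^{\C;\eta}(u)\ge G_\sym^{\C;\eta}(1-u)$ and $G_\asym^{\C;\eta}(u)\ge G_\asym^{\C;\eta}(1-u)$ for $u\in[0,1/2]$, namely \eqref{eq45} and \eqref{eq46}.

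The symmetric inequality is easy. The factor $u(1-u)$ is symmetric, and $\sqrt{\Delta_u^\eta(1)}-\sqrt{\Delta_u^\eta(0)}$ is increasing in $\mu_0$, hence decreasing in $u$.

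The asymmetric inequality is the real content of Appendix~\ref{appE}. The paper rewrites the two sides as $D(\mathbb{P}_1^\eta\|\mathbb{P}_X)$ and $D(\mathbb{P}_0^\eta\|\mathbb{P}_X)$, where $\mathbb{P}_y^\eta$ are $\eta$-mixtures of posterior and prior for a binary symmetric channel with prior $(1-u,u)$. It then proves Lemma~\ref{lem4} by decomposing an operator convex $F$ into a linear part and the extremal functions $t^2$ and $-t(t+s)^{-1}$. Each piece is checked by direct computation using $Z(0)\ge Z(1)$ and $Z(0)Z(1)\ge e^\epsilon$. Nothing in your plan plays this role, so your verification of \eqref{eq56}, and hence of the asymmetric half of the proposition, is incomplete as sketched.
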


While we prove Proposition~\ref{prop2} (more precisely, \eqref{eq56} and \eqref{eq58}) in Appendix~\ref{appE}, 
the methodology suggests a further generalization.
Also, we do not know whether the equalities $S_\iso^\eta(n,\epsilon)=S_\Q^\eta(n,\epsilon)$ and $A_\iso^\eta(n,\epsilon)=A_\Q^\eta(n,\epsilon)$ hold.
Verifying these equalities, as well as the saturation hypothesis, is left as an important open problem.

\vskip1ex
\textbf{Range of parameter $\epsilon$ for which quantum advantages arise.}---%
Assuming the saturation hypothesis, 
we can investigate the ranges of the parameter $\epsilon$ such that 
$S_\C^1(n,\epsilon)<S_\iso^1(n,\epsilon)$ and $A_\C^1(n,\epsilon)<A_\iso^1(n,\epsilon)$ as follows.
Let $n\ge3$, $\eta\in(0,1]$, and $u\in(1/n,1/2]$ here.
Then $1-c<1$ and thus, inequality \eqref{eq59} is strict, i.e., $\mu<\mu_0$.
Since the functions $G_\sym(t)$ and $G_\asym(t)$ in Lemma~\ref{lemQ1} strictly decrease in $t\in[0,1]$, 
it turns out that 
\begin{equation}
	G_\sym^{\iso;\eta}(u) > G_\sym^{\C;\eta}(u)
	\quad\text{and}\quad
	G_\asym^{\iso;\eta}(u) > G_\asym^{\C;\eta}(u).
	\label{eq60}
\end{equation}
Let $u_*$ be the value of $u$ that achieves the first maximum in \eqref{eq56}.
If $u_*\not=1/n$, then the saturation hypothesis, \eqref{eq55}, and \eqref{eq60} imply 
\begin{equation*}
	S_\iso^\eta(n,\epsilon)
	\ge -\ln( 1-G_\sym^{\iso;\eta}(u_*) )
	> -\ln( 1-G_\sym^{\C;\eta}(u_*) )
	= \underline{S}_\C^\eta(n,\epsilon).
\end{equation*}
If $u_*=1/n$ and $G_\sym^{\iso;\eta}(1/(n-1)) > G_\sym^{\C;\eta}(1/n)$, 
then \eqref{eq55} and \eqref{eq60} imply 
\begin{equation*}
	S_\iso^\eta(n,\epsilon)
	\ge -\ln\bigl( 1-G_\sym^{\iso;\eta}(1/(n-1)) \bigr)
	> -\ln( 1-G_\sym^{\C;\eta}(1/n) )
	= \underline{S}_\C^\eta(n,\epsilon),
\end{equation*}
where the first inequality follows from $1/(n-1)\in\cQ_n$, 
which is based on the fact that there exist $n$ unit vectors in $\mathbb{R}^{n-1}$ forming the vertices of a regular simplex centered at the origin.
For the asymmetric case, the above argument also works well.
Consequently, we obtain the following proposition.

\begin{proposition}\label{prop3}
	Let $\eta\in(0,1]$. Assume $n\ge3$ and the saturation hypothesis.
	If $G_\sym^{\iso;\eta}(1/(n-1)) > G_\sym^{\C;\eta}(1/n)$, then $\underline{S}_\C^\eta(n,\epsilon)<S_\iso^\eta(n,\epsilon)$.
	If $G_\asym^{\iso;\eta}(1/(n-1)) > G_\asym^{\C;\eta}(1/n)$, then $A_\C^\eta(n,\epsilon)<A_\iso^\eta(n,\epsilon)$.
\end{proposition}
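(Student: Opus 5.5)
The plan is to formalize the case analysis sketched just before Proposition~\ref{prop3}, and to treat the symmetric and asymmetric statements in parallel.

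\textbf{Symmetric case.} First, by \eqref{eq56}, the maximum defining $\underline{S}_\C^\eta(n,\epsilon)$ is taken over the finite set $\cC_n$. Since $-\ln(1-t)$ is strictly increasing in $t\in[0,1)$, it is attained at some $u_*\in\cC_n$, and
\[
\underline{S}_\C^\eta(n,\epsilon) = -\ln\bigl( 1-G_\sym^{\C;\eta}(u_*) \bigr).
\]

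Second, I would establish the strict inequality \eqref{eq60} for every $u\in(1/n,1/2]$. For such $u$ we have $c=(nu-1)/(n-1)>0$, so $1-c<1$ and \eqref{eq59} becomes strict. Hence $\mu(u)<\mu_0(u)$, and therefore $\mu^{[\eta]}<\mu_0^{[\eta]}$, because $\eta>0$ and $t\mapsto\eta t+1-\eta$ is strictly increasing.

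It then remains to check that $t\mapsto G_\sym(t)$ and $t\mapsto G_\asym(t)$ are strictly decreasing on $[0,1]$ for fixed $u\in(0,1/2]$. For $G_\sym$, write $\sqrt{ut+1-t}-\sqrt{ut}=(1-t)/(\sqrt{ut+1-t}+\sqrt{ut})$; alternatively, differentiate $g(t)=\sqrt{1-(1-u)t}-\sqrt{ut}$. For $G_\asym$, compute the derivative
\[
\frac{\mathrm{d}}{\mathrm{d}t}G_\asym(t) = (u-1)\bigl(\ln(t+u^{-1}(1-t)) - \ln t\bigr),
\]
which is negative for $t<1$ since $u<1$. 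This derivative computation is the main technical check, though it is routine. One must also confirm $0<\mu^{[\eta]}\le 1$ so that the arguments lie in the domain.

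Third, I would split into the two cases.
\begin{itemize}
\item If $u_*\neq 1/n$, then $u_*\in(1/n,1/2]$. The saturation hypothesis gives $u_*\in\cQ_n$, so by \eqref{eq55} and \eqref{eq60},
\[
S_\iso^\eta(n,\epsilon)\ge -\ln\bigl(1-G_\sym^{\iso;\eta}(u_*)\bigr) > -\ln\bigl(1-G_\sym^{\C;\eta}(u_*)\bigr) = \underline{S}_\C^\eta(n,\epsilon).
\]
\item If $u_*=1/n$, I would use $1/(n-1)\in\cQ_n$, witnessed by the regular simplex ETF, i.e., an EITFF${}_{\mathbb{C}}(n-1,1,n)$. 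Then the hypothesis $G_\sym^{\iso;\eta}(1/(n-1))>G_\sym^{\C;\eta}(1/n)$ gives
\[
S_\iso^\eta(n,\epsilon)\ge -\ln\bigl(1-G_\sym^{\iso;\eta}(1/(n-1))\bigr) > \underline{S}_\C^\eta(n,\epsilon).
\]
Since $n\ge3$, we have $1/(n-1)\le 1/2$, so this point lies in the admissible range. I would note that the regular simplex ETF exists by the standard construction of $n$ unit vectors in $\mathbb{R}^{n-1}\subset\mathbb{C}^{n-1}$ with pairwise inner products $-1/(n-1)$; the resulting rank-one projections satisfy conditions $1$--$2$ of Definition~\ref{EITFF}.
\end{itemize}

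\textbf{Asymmetric case.} The same argument applies verbatim with $G_\asym$ in place of $G_\sym$ and without the outer $-\ln(1-\cdot)$. The maximum in \eqref{eq56} for $A_\C^\eta(n,\epsilon)$ is attained at some $u_*\in\cC_n$, and the identical two-case split yields $A_\C^\eta(n,\epsilon)<A_\iso^\eta(n,\epsilon)$.
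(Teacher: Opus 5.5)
Your proposal is correct and follows the paper's own argument essentially step for step. Strictness of \eqref{eq59} for $u\in(1/n,1/2]$ gives $\mu^{[\eta]}<\mu_0^{[\eta]}$ and hence \eqref{eq60}, and you split cases on whether the maximizer $u_*$ in \eqref{eq56} equals $1/n$, using the saturation hypothesis in one case and $1/(n-1)\in\cQ_n$ (the regular simplex) together with the stated hypothesis in the other. One small caution: the rationalized form $(1-t)/(\sqrt{ut+1-t}+\sqrt{ut})$ does not by itself show that $G_\sym$ is monotone, since the denominator is not monotone in $t$, so rely on your alternative. That is, the derivative of $\sqrt{1-(1-u)t}-\sqrt{ut}$ is negative, and this function is nonnegative on $[0,1]$, so its square is strictly decreasing.
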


If $\eta=1$, we can show the following theorem simpler than Proposition~\ref{prop3}, 
which is proved in Appendix~\ref{appF}.

\begin{theorem}\label{main4}
	Assume $n\ge3$ and the saturation hypothesis.
	If $\epsilon\le\ln(n-1)(n-2)$, then $S_\C^1(n,\epsilon)<S_\iso^1(n,\epsilon)$.
	If $\epsilon\le(1/2)\ln(n-1)(n-2)$, then $A_\C^1(n,\epsilon)<A_\iso^1(n,\epsilon)$.
\end{theorem}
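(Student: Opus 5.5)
The plan is to derive Theorem~\ref{main4} from Proposition~\ref{prop3} with $\eta=1$. By Lemma~\ref{lemC2} (equivalently $\underline{S}_\C^1=\overline{S}_\C^1$ in Lemma~\ref{lemC1}), $S_\C^1(n,\epsilon)=\underline{S}_\C^1(n,\epsilon)$. So under the saturation hypothesis, and for $n\ge3$, it suffices to verify two inequalities:
\[
G_\sym^{\iso;1}\bigl(1/(n-1)\bigr) > G_\sym^{\C;1}(1/n)\quad\text{for }\epsilon\le\ln(n-1)(n-2),
\]
\[
G_\asym^{\iso;1}\bigl(1/(n-1)\bigr) > G_\asym^{\C;1}(1/n)\quad\text{for }\epsilon\le\tfrac12\ln(n-1)(n-2).
\]
The case split on the maximizer $u_*$ was already absorbed into Proposition~\ref{prop3}, so only these two scalar inequalities remain.

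\textbf{Classical side.} At $u=1/n$ we have $c=0$ and $\mu_0=n/(e^\epsilon+n-1)$. Substituting into Lemma~\ref{lemQ1} gives
\[
G_\sym^{\C;1}(1/n)=\frac{(e^{\epsilon/2}-1)^2}{e^\epsilon+n-1},
\]
which matches the $k=1$ term of Lemma~\ref{lemC2}. Likewise,
\[
G_\asym^{\C;1}(1/n)=\frac{L(e^\epsilon)-nL\bigl((e^\epsilon+n-1)/n\bigr)}{e^\epsilon+n-1}.
\]

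\textbf{Quantum side.} At $u=1/(n-1)$ we have $c=1/(n-1)^2$, so $1-c=n(n-2)/(n-1)^2$. Then $\mu$ is given by \eqref{eq17} with $\sqrt{1+(1-c)/\sinh^2(\epsilon/2)}$. Since the two radicands in $G_\sym$ sum to one, we can write
\[
G_\sym(t)=(1-c)\Bigl(1-2\sqrt{ut\,(1-(1-u)t)}\Bigr).
\]
This makes the symmetric inequality explicit in $y=e^{\epsilon/2}$ and $s=\sinh(\epsilon/2)=(y-y^{-1})/2$.

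\textbf{Reducing the inequalities.} The main obstacle is this algebra. My plan is to avoid the nested radicals by bounding $\mu$ from above, which is legitimate because $G_\sym$ and $G_\asym$ are decreasing in $t$. The step I would try first is to replace $\sqrt{1+(1-c)/s^2}$ by its value $\coth(\epsilon/2)$ at $c=0$, corrected to first order in $c$. The hope is that this yields a rational upper bound on $\mu(1/(n-1))$, and that the target inequality then reduces to a polynomial inequality in $y$. I expect that polynomial to factor as a positive quantity times a term like $(n-1)(n-2)-y^2$, so that it holds exactly when $e^\epsilon\le(n-1)(n-2)$. If the crude bound is too lossy near the threshold, I will fall back on squaring and clearing the radicals directly, treating $n$ as a parameter.

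\textbf{Asymmetric case.} Here I would use $G_\asym(t)=uL(t+u^{-1}(1-t))+(1-u)L(t)$. By convexity of $L$, it is natural to compare the two sides through a Pinsker-type or chi-square comparison in which the quantum value is bounded below and the classical value above. This should explain why the admissible range shrinks to $e^{2\epsilon}\le(n-1)(n-2)$. I would derive that square-root threshold by running the same monotone comparison with the bound in the variable $e^\epsilon$ in place of $e^{\epsilon/2}$.
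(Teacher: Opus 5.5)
Your opening reduction (Proposition~\ref{prop3} with $\eta=1$, plus $S_\C^1=\underline{S}_\C^1$) is right, but the idea that makes the rest work is missing: the quantum side never has to be evaluated.

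For $n\ge3$ one has $c(1/(n-1))=1/(n-1)^2>0$. Hence $\mu(1/(n-1))<\mu_0(1/(n-1))$, and by strict monotonicity $G_\sym^{\iso;1}(1/(n-1))>G_\sym^{\C;1}(1/(n-1))$, likewise for $G_\asym$; this is just \eqref{eq60}. So it suffices to prove the purely classical comparisons $G^{\C;1}(1/(n-1))\ge G^{\C;1}(1/n)$. The quantum gain only supplies strictness, and the thresholds in the theorem are sufficient conditions for this classical comparison, not sharp thresholds for the quantum one.

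Your ``value at $c=0$'' replacement is exactly this step, but the first-order correction in $c$ works against you. At $t=\mu_0$ the radicand $ut(1-(1-u)t)$ equals $u^2e^\epsilon\mu_0^2$, a perfect square, and that is what makes everything rational. A corrected $\bar\mu$ generically reintroduces a radical. With $t=\mu_0$ one gets $G_\sym^{\C;1}(u)=\frac{n(e^{\epsilon/2}-1)^2}{n-1}\cdot\frac{u(1-u)}{u(e^\epsilon-1)+1}$. The comparison then reduces to $n(n-2)(e^\epsilon+n-1)\ge(n-1)^2(e^\epsilon+n-2)$, i.e.\ $e^\epsilon\le(n-1)(n-2)$. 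Note also that your rewriting $G_\sym(t)=(1-c)\bigl(1-2\sqrt{ut(1-(1-u)t)}\bigr)$ is false unless $u=1/2$. The two radicands sum to $1-(1-2u)t$, not $1$, and $1/(n-1)<1/2$ for $n\ge4$.

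The asymmetric part is where your plan would actually fail. As $\epsilon\to+0$, $G_\asym^{\iso;1}(1/(n-1))\sim\epsilon^2/(2n)$ and $G_\asym^{\C;1}(1/n)\sim(n-1)\epsilon^2/(2n^2)$, so the margin is only a factor $n/(n-1)$. Pinsker's inequality at $u=1/(n-1)$ already costs a factor $4u(1-u)=4(n-2)/(n-1)^2$. This destroys the margin for $n\ge5$ even if the classical side is kept exact. No lossy total-variation or $\chi^2$ comparison can reproduce the exact range $e^{2\epsilon}\le(n-1)(n-2)$, which extends up to $\epsilon\approx\ln n$. ``Running the same comparison in $e^\epsilon$ instead of $e^{\epsilon/2}$'' is not an argument either, since $G_\sym$ and $G_\asym$ have different structure.

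What is needed is an exact computation of the classical comparison. Let $\xi=e^\epsilon$ and $g(t)=1/((t+n-2)(t+n-1))$. Using $\ln\frac{\xi+n-2}{n-1}-\ln\frac{\xi+n-1}{n}=\int_1^\xi g(s)\,\mathrm{d}s$, the inequality $G_\asym^{\C;1}(1/(n-1))\ge G_\asym^{\C;1}(1/n)$ is equivalent to $G(\xi)\ge0$, where $G(t)=L(t)g(t)-\int_1^t g(s)\,\mathrm{d}s$. We have $G(1)=0$ and
\[
G'(t)=(\ln t)\,(tg(t))',\qquad (tg(t))'=\frac{(n-1)(n-2)-t^2}{(t+n-2)^2(t+n-1)^2}.
\]
Hence $G'\ge0$ on $[1,\xi]$ whenever $\xi^2\le(n-1)(n-2)$. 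The sign change of $(tg(t))'$ is the true origin of the halved range $\epsilon\le\frac12\ln(n-1)(n-2)$.
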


\vskip1ex
\textbf{Numerical computation.}---%
The two ratios 
$S_\iso^1(n,\epsilon)/S_\C^1(n,\epsilon)$ and $A_\iso^1(n,\epsilon)/A_\C^1(n,\epsilon)$ 
are plotted in Figure~\ref{Fig2} for small $n$.
Both ratios tend to decrease as $\epsilon$ increases, and they eventually stabilize at $1$.
Notably, the numerical results in Figure~\ref{Fig2} suggest that there is no quantum advantage in the low-privacy regime (i.e., for large $\epsilon$).
To rigorously establish the absence of quantum advantages in this regime, 
we need to prove the optimality of isoclinic mechanisms in this regime.

\begin{figure}[t]
	\centering
	\includegraphics[scale=0.5]{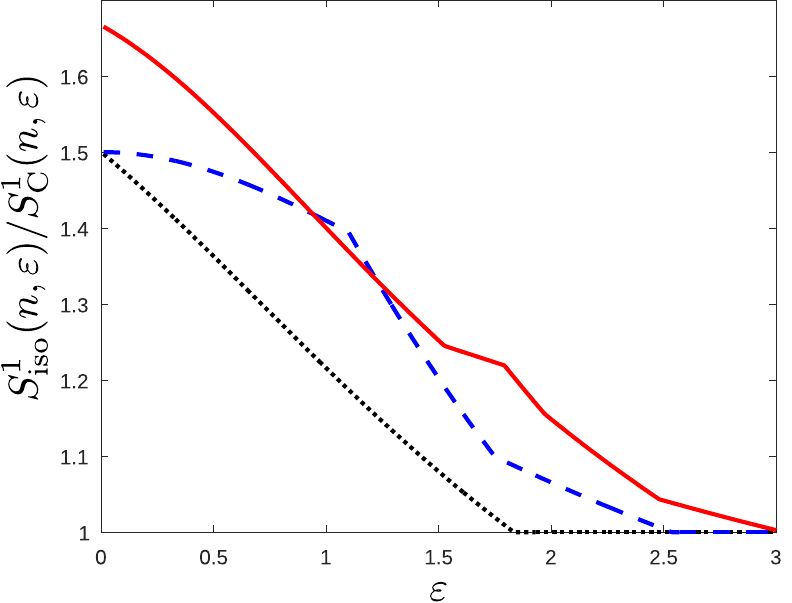}
	\hspace{1em}
	\includegraphics[scale=0.5]{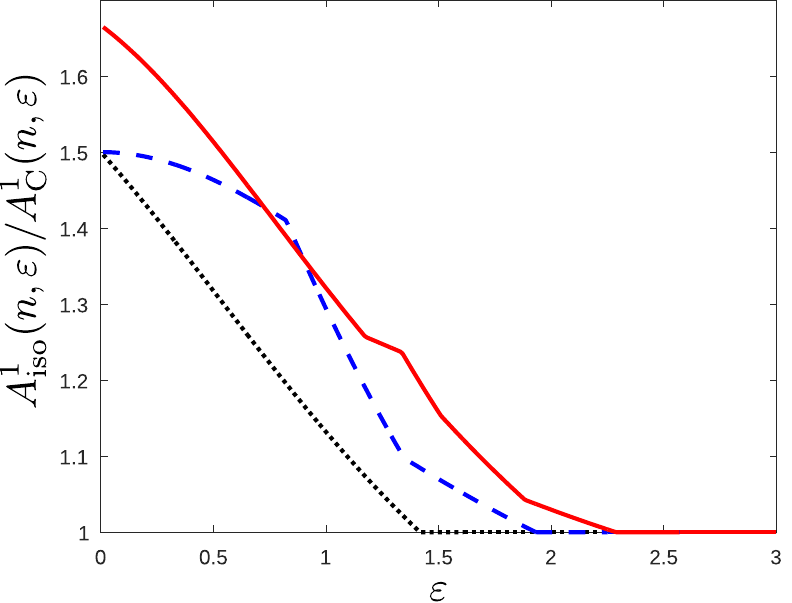}
	\caption{Graphs of $S_\iso^1(n,\epsilon)/S_\C^1(n,\epsilon)$ and $A_\iso^1(n,\epsilon)/A_\C^1(n,\epsilon)$.
	The horizontal axes represent the values of $\epsilon$, and 
	the vertical axes represent the values of the two ratios.
	The black dotted lines, the blue dashed lines, and the red solid lines are the cases $n=3$, $n=4$, and $n=5$, respectively.}
	\label{Fig2}
\end{figure}

\section{Conclusion and future work}\label{conc}
We have constructed the isoclinic mechanism $\vec{\sigma}^*$, and 
shown that it is an optimal $\epsilon$-QLDP mechanism achieving the quantum optimal value $\OPT_n(\epsilon; \Phi_\Q)$ in the high-privacy regime.
For the isoclinic mechanism $\vec{\sigma}^*$, the minimum dimension of the underlying space $\mathbb{C}^d$ is $2^{a+1}$, 
where $a=\max\{ 0,\lceil{n/2}\rceil-2 \}$.
We do not know whether this is the minimum dimension among the underlying spaces of all optimal $\epsilon$-QLDP mechanisms in this regime.
Determining this minimum dimension is an interesting direction for future work.
For comparison, the binary mechanism has the underlying space $\mathbb{R}^2$.
Since we have shown that the binary mechanism is an optimal $\epsilon$-LDP mechanism 
achieving the classical optimal value $\OPT_n(\epsilon; \Phi_\C)$ in this regime, 
the minimum dimension among the underlying spaces of all optimal $\epsilon$-LDP mechanisms in this regime is two.

We have derived the asymptotic ratio \eqref{eq02} of the classical and quantum optimal values.
This ratio is essentially the same as the lower bound in \cite[p.~5]{Yoshida}.
We explain the reason below.
According to \cite[p.~5]{Yoshida}, the inequality 
\begin{equation}
	\sqrt{\frac{n(n-1)}{2\lfloor{n/2}\rfloor\lceil{n/2}\rceil}} \le \frac{e^{\epsilon_{\inf}}-1}{e^\epsilon-1} \le n-1
	\label{eq18}
\end{equation}
holds, where $\epsilon_{\inf}=\epsilon_{\inf}(n,\epsilon)$ is the infimum of $\epsilon'>0$ satisfying the following condition: 
for every $\epsilon$-QLDP mechanism $\vec{\rho}$, there exist an $\epsilon'$-LDP mechanism $q$ and a quantum channel $\Lambda$ such that 
$\rho_x=\Lambda(\Diag_x(q))$ for every $x\in[n]$.
Assume that the utility functions $\Phi_\C$ and $\Phi_\Q$ satisfy the data processing inequality, 
all the assumptions of Theorem~\ref{main1}, and $\phi(\mathbbm{1}_n)=0$.
Then the first inequality of \eqref{eq18} follows from Theorems~\ref{main1} and \ref{main2} as $\epsilon\to+0$.
Indeed, the definition of $\epsilon_{\inf}$ and the data processing inequality imply\footnote{
Strictly speaking, Lemma~\ref{lemKairouz} is also needed to show \eqref{eq19}, 
since one can verify using this lemma that $\inf_{\epsilon'>\epsilon_{\inf}} \OPT_n(\epsilon'; \Phi_\C) = \OPT_n(\epsilon_{\inf}; \Phi_\C)$.} 
\begin{equation}
	\frac{\Phi_\Q(\vec{\sigma}^*)}{\OPT_n(\epsilon; \Phi_\C)}
	\le \frac{\OPT_n(\epsilon_{\inf}; \Phi_\C)}{\OPT_n(\epsilon; \Phi_\C)}.
	\label{eq19}
\end{equation}
Also, $\epsilon_{\inf}=O(\epsilon)$ as $\epsilon\to+0$ by the second inequality of \eqref{eq18}.
This, \eqref{eq19}, and Theorems~\ref{main1} and \ref{main2} imply 
\[
\sqrt{\frac{n(n-1)}{2\lfloor{n/2}\rfloor\lceil{n/2}\rceil}}
\le \liminf_{\epsilon\to+0} \frac{\epsilon_{\inf}}{\epsilon}
= \liminf_{\epsilon\to+0} \frac{e^{\epsilon_{\inf}}-1}{e^\epsilon-1}.
\]
Since the isoclinic mechanism $\vec{\sigma}^*$ is an optimal $\epsilon$-QLDP mechanism, 
this argument cannot yield a stronger lower bound.
Therefore, we conjecture that the lower bound is tight: 
\[
\liminf_{\epsilon\to+0} \frac{e^{\epsilon_{\inf}}-1}{e^\epsilon-1} = \sqrt{\frac{n(n-1)}{2\lfloor{n/2}\rfloor\lceil{n/2}\rceil}}.
\]
This stands in contrast to our previous expectation \cite[Section~6]{Yoshida}.
Proving or disproving the above tightness is also an interesting direction for future work.

While Kairouz et al.\ \cite{Kairouz} provided 
\renewcommand{\theenumi}{C\arabic{enumi}}
\begin{enumerate}
	\item\label{C1}
	a class of LDP mechanisms and 
	\item\label{C2}
	suitable assumptions on the utility function $\Phi_\C$ 
\end{enumerate}
such that a maximizer of $\Phi_\C$ under the LDP constraint exists in this class, 
our long-term goal is to establish a quantum counterpart of the results of \cite{Kairouz}.
To achieve this goal, we need to find 
\renewcommand{\theenumi}{Q\arabic{enumi}}
\begin{enumerate}
	\item\label{Q1}
	a class of QLDP mechanisms and 
	\item\label{Q2}
	suitable assumptions on the utility function $\Phi_\Q$ 
\end{enumerate}
such that a maximizer of $\Phi_\Q$ under the QLDP constraint exists in this class.
The class of all isoclinic mechanisms is a promising candidate for \eqref{Q1}.
However, \eqref{Q2} remains open and should be further studied.
We hope that our work will stimulate further research toward establishing a quantum counterpart of the results of \cite{Kairouz}, 
potentially leading to a breakthrough in the study of QLDP.

We have shown that isoclinic mechanisms outperform classical ones for the error exponents in symmetric and asymmetric hypothesis testing.
As seen in Figure~\ref{Fig2}, this advantage vanishes when $\epsilon$ is greater than or equal to a threshold $\epsilon_0>0$.
Interestingly, Wang et al.\ \cite{WangSS} have recently reported a similar phenomenon.
They considered a quantum-classical setting in which the input is a bipartite pure state and the output is a classical state, and 
revealed the following phase-transition phenomenon.
Suppose that the input pure state is constrained to have entanglement entropy at least $s$.
Then there exists a threshold $s_0>0$ such that 
the optimal privacy loss $\epsilon^\star(s)$ over all such inputs remains equal to the constant $\epsilon^\star(0)$ for $s\le s_0$, 
whereas $\epsilon^\star(s)$ starts to decrease once $s$ exceeds $s_0$.
It would be interesting to investigate whether these seemingly similar threshold phenomena arise from a common underlying mathematical structure.

\section*{Acknowledgments}
Although Section~\ref{origin} originally assumed a stronger hypothesis than the saturation hypothesis, 
this part was revised thanks to a counterexample based on \cite[Corollary~3.4]{Iverson} suggested by Prof.\ Joseph W.\ Iverson, 
to whom the author extends his sincere thanks.
The author is also grateful to Prof.\ Matthew Fickus for providing a valuable reference with his comment.
Special thanks are due to Dr.\ Daiki Suruga and Dr.\ Ziyu Liu for their assistance in finding relevant references.
The author was supported by JSPS KAKENHI Grant Numbers JP22J00339 and JP22KJ1621 during the initial phase of this work.

\section*{Author contributions}
Large language models were used for some English language editing and for some literature searches.
The author independently verified references and related information identified through these searches 
before including them in the paper.
All mathematical ideas, results, and proofs presented in this paper were developed solely by the author.

\appendix
\section*{Appendices}
\section{Proposition and lemmas for isoclinic mechanisms}\label{appA}
This appendix proves that isoclinic mechanisms are $\epsilon$-QLDP.
As before, let $n,d\ge2$, $r\in[d-1]$, and $\epsilon>0$, 
representing cardinality, dimension, rank, and privacy parameter, respectively.
We begin with the following preliminary lemmas.

\begin{lemma}[Jordan's lemma {\cite[Lemma~A.1]{Nagaj}, \cite[Corollary~2.2]{Bottcher}}]\label{lemJordan}
	For all orthogonal projections $P$ and $Q$ on $\mathbb{C}^d$, 
	there exists an orthogonal decomposition 
	\begin{equation}
		\mathbb{C}^d = \bigoplus_{i,j\in\{0,1\}} \cH_{ij} \oplus \bigoplus_{k\in[m]} \cK_k
		\label{eq20}
	\end{equation}
	of $\mathbb{C}^d$ satisfying the following conditions.
	\renewcommand{\labelenumi}{$\arabic{enumi}$.}
	\begin{enumerate}
		\item
		All $\cH_{ij}$ and $\cK_k$ are invariant subspaces of both $P$ and $Q$.
		\item
		$\dim\cK_k=2$ for every $k\in[m]$.
		\item
		For each $i,j\in\{0,1\}$, 
		the restrictions of $P$ and $Q$ to $\cH_{ij}$ are equal to $iI_{ij}$ and $jI_{ij}$, respectively, 
		where $I_{ij}$ is the identity operator on $\cH_{ij}$.
		\item
		For every $k\in[m]$, the restrictions of $P$ and $Q$ to $\cK_k$ are rank-one orthogonal projections.
	\end{enumerate}
\end{lemma}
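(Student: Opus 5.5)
The plan is the standard Halmos two-subspace argument, carried out in the Hermitian matrix $A=P+Q$ (equivalently the pair $PQP$, $QPQ$), which commutes with suitable pieces of both projections. First I define the four "trivial" subspaces directly:
\[
\cH_{11}=\Im P\cap\Im Q,\quad \cH_{10}=\Im P\cap\ker Q,\quad \cH_{01}=\ker P\cap\Im Q,\quad \cH_{00}=\ker P\cap\ker Q .
\]
These are mutually orthogonal, invariant under both $P$ and $Q$, and on each of them $P$ and $Q$ act as the required scalars $i$ and $j$. Let $\cK$ be the orthogonal complement of their sum. Since $P$ and $Q$ are self-adjoint and leave each $\cH_{ij}$ invariant, they also leave $\cK$ invariant. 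This reduces the problem to the pair $P|_\cK$, $Q|_\cK$, which is in "generic position": each of the four intersections of ranges and kernels is $\{0\}$ inside $\cK$.

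Next I decompose $\cK$ into two-dimensional blocks using the self-adjoint operator $T=PQP|_{\Im P\cap\cK}$. It satisfies $0\le T\le I$. Generic position excludes the eigenvalue $1$, since $PQPv=v$ with $v\in\Im P$ forces $\|Qv\|=\|v\|$ and hence $v\in\Im Q$. It also excludes the eigenvalue $0$, since $Qv=0$ would put $v\in\ker Q$. So I take an orthonormal eigenbasis $v_1,\ldots,v_m$ of $T$ with eigenvalues $\lambda_k\in(0,1)$, and set
\[
\cK_k=\operatorname{span}\{v_k,\,Qv_k\}.
\]
I then check two things.
\begin{itemize}
\item $\cK_k$ is two-dimensional, because $Qv_k\notin\operatorname{span}\{v_k\}$: otherwise $v_k$ would lie in $\Im Q$ or $\ker Q$.
\item $\cK_k$ is invariant under $Q$ (clear) and under $P$, because $PQv_k=\lambda_k v_k$.
\end{itemize}
For orthogonality, $\langle v_k, Qv_l\rangle=\langle v_k,PQPv_l\rangle=\lambda_l\delta_{kl}$ and $\langle Qv_k,Qv_l\rangle=\lambda_l\delta_{kl}$, so distinct blocks are orthogonal.

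On each block, $P$ restricts to the rank-one projection onto $v_k$, and $Q$ restricts to the rank-one projection onto $Qv_k/\|Qv_k\|$. The remaining point is to show that the blocks $\cK_k$ exhaust $\cK$. This is the only step that takes care, and I expect it to be the main obstacle. I will argue by contradiction. Let $W$ be the orthogonal complement of $\bigoplus_k\cK_k$ inside $\cK$; it is invariant under $P$ and $Q$. Any $w\in W\cap\Im P$ is orthogonal to all the $v_k$, and those span $\Im P\cap\cK$, so $W\cap\Im P=\{0\}$. Since $P$ leaves $W$ invariant, this means $P|_W=0$, that is, $W\subset\ker P$. The same argument with the roles of $P$ and $Q$ swapped, using the vectors $Qv_k$, which span $Q(\Im P\cap\cK)$, would give $W\subset\ker Q$. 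That would put $W$ inside $\cH_{00}\cap\cK=\{0\}$.

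The swapped step needs $Q(\Im P\cap\cK)$ to equal $\Im Q\cap\cK$. I will get this from a dimension count. The map $Q\colon\Im P\cap\cK\to\Im Q\cap\cK$ is injective, and symmetrically $P\colon\Im Q\cap\cK\to\Im P\cap\cK$ is injective, by the generic-position conditions. So the two spaces have equal dimension and the injective map is a bijection.

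Once this is in place, the orthogonal decomposition \eqref{eq20} holds with properties 1--4. The decomposition is finite-dimensional, so no spectral-theorem subtleties arise.
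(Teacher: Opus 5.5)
Your proof is correct and complete. The paper gives no proof of its own for this lemma, only citing Nagaj et al.\ and B\"ottcher--Spitkovsky, and your argument is exactly the standard Halmos/Jordan argument behind those references: split off the four intersections $\cH_{ij}$, diagonalize the compression $PQP$ on $\Im P\cap\cK$, and take the blocks $\operatorname{span}\{v_k,Qv_k\}$. Two small simplifications are available. The opening mention of $A=P+Q$ is never used and can be dropped. The final exhaustion step needs no dimension count: once $W\subseteq\ker P$, the $Q$-invariance of $W$ gives $Q(W)\subseteq\ker P\cap\Im Q\cap\cK=\{0\}$, so $W\subseteq\ker P\cap\ker Q\cap\cK=\{0\}$.
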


\begin{lemma}\label{lem1}
	Assume that an orthogonal decomposition of $\mathbb{C}^d$ 
	with respect to orthogonal projections $P$ and $Q$ on $\mathbb{C}^d$ is given as \eqref{eq20}.
	If $\cH_{01}=\cH_{10}=\cH_{11}=\{0\}$, 
	then the maximum (resp.\ minimum) eigenvalue $\lambda_{+}$ (resp.\ $\lambda_{-}$) of $e^\epsilon P-Q$ is 
	\[
	\lambda_\pm = e^{\epsilon/2}\Bigl( \sinh(\epsilon/2) \pm \max_{k\in[m]} \sqrt{\sinh^2(\epsilon/2)+1-c_k} \Bigr),
	\]
	where $c_k$ is the trace of the product of the restrictions of $P$ and $Q$ to $\cK_k$: 
	$c_k=\Tr(P|_{\cK_k})(Q|_{\cK_k})$.
	Moreover, we have $\sum_{k\in[m]} c_k = \Tr PQ$.
\end{lemma}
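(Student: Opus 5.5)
The plan is to work block by block through the Jordan decomposition of Lemma~\ref{lemJordan}. Since $\cH_{01}=\cH_{10}=\cH_{11}=\{0\}$, we have $\mathbb{C}^d=\cH_{00}\oplus\bigoplus_{k\in[m]}\cK_k$. The operator $e^\epsilon P-Q$ is block diagonal with respect to this decomposition. On $\cH_{00}$ it is zero, and on each two-dimensional block $\cK_k$ it equals $e^\epsilon P_k-Q_k$, where $P_k=P|_{\cK_k}$ and $Q_k=Q|_{\cK_k}$ are rank-one projections. Its spectrum is therefore $\{0\}$ (if $\cH_{00}\neq\{0\}$) together with the eigenvalues of the $2\times2$ blocks. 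The whole argument reduces to computing those $2\times2$ eigenvalues and then taking the extremes over $k$.

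For a fixed block, write $P_k=\ketbra{a}{a}$ and $Q_k=\ketbra{b}{b}$ with unit vectors $a,b\in\cK_k$. Then $c_k=\Tr P_kQ_k=\abs{\braket{a|b}}^2\in[0,1]$. Condition~3 of the lemma fails on $\cK_k$, since otherwise that block would belong to some $\cH_{ij}$; so $a$ and $b$ are neither parallel nor orthogonal, giving $0<c_k<1$, though only $c_k\le1$ is needed. The block $e^\epsilon P_k-Q_k$ has
\[
\Tr(e^\epsilon P_k-Q_k)=e^\epsilon-1,\qquad
\det(e^\epsilon P_k-Q_k)=-e^\epsilon(1-c_k).
\]
The determinant follows from $\det(\alpha\ketbra{a}{a}+\beta\ketbra{b}{b})=\alpha\beta(1-\abs{\braket{a|b}}^2)$ in two dimensions. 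The eigenvalues are then
\[
\frac{e^\epsilon-1}{2}\pm\sqrt{\frac{(e^\epsilon-1)^2}{4}+e^\epsilon(1-c_k)}
= e^{\epsilon/2}\Bigl(\sinh(\epsilon/2)\pm\sqrt{\sinh^2(\epsilon/2)+1-c_k}\Bigr),
\]
using $(e^\epsilon-1)/2=e^{\epsilon/2}\sinh(\epsilon/2)$.

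Each block contributes one eigenvalue $\ge0$ and one $\le0$, so the extremes come from the largest square root over $k$, i.e., from $\max_k$. The zero eigenvalue coming from $\cH_{00}$ lies between these extremes, so it affects neither the maximum nor the minimum. This gives $\lambda_\pm$ as stated. One implicit caveat: we need $m\ge1$, which holds whenever $P\neq0$, because with all $\cH_{ij}$ other than $\cH_{00}$ trivial, $P\neq0$ forces some $\cK_k$ to exist.

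Finally, for the identity $\sum_{k\in[m]}c_k=\Tr PQ$, note that $PQ$ is block diagonal. Its restriction to $\cH_{00}$ is $0$, so its trace is the sum of $\Tr P_kQ_k=c_k$ over the blocks. No step here is a serious obstacle; the only care needed is handling the $\cH_{00}$ block and the ordering of eigenvalues when taking the maximum and minimum.
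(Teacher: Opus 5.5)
Your proof is correct and follows the same route as the paper. Both restrict $e^\epsilon P-Q$ to each two-dimensional block $\cK_k$, compute its trace $e^\epsilon-1$ and determinant $-e^\epsilon(1-c_k)$, and take the extremes over $k$. You are a bit more careful than the paper, which silently drops the $\cH_{00}$ block and never writes out $\sum_k c_k=\Tr PQ$. One small point: your side claim $0<c_k<1$ does not follow from the conditions of Lemma~\ref{lemJordan} as stated, but as you note the formula holds for every $c_k\in[0,1]$, so nothing depends on it.
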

\begin{proof}
	By Lemma~\ref{lemJordan} and the assumption $\cH_{01}=\cH_{10}=\cH_{11}=\{0\}$, 
	the eigenvalue $\lambda_{+}$ (resp.\ $\lambda_{-}$) of $e^\epsilon P-Q$ is equal to 
	the maximum (resp.\ minimum) of the maximum (resp.\ minimum) eigenvalues of $e^\epsilon P|_{\cK_k}-Q|_{\cK_k}$, $k\in[m]$.
	Since $P|_{\cK_k}$ and $Q|_{\cK_k}$ are rank-one orthogonal projections, 
	they can be expressed as 
	\[
	U_k(P|_{\cK_k})U_k^\dag = 
	\begin{bmatrix}
		1&0\\
		0&0
	\end{bmatrix}
	\quad\text{and}\quad
	U_k(Q|_{\cK_k})U_k^\dag = 
	\begin{bmatrix}
		c_k&\sqrt{c_k(1-c_k)}\\
		\sqrt{c_k(1-c_k)}&1-c_k
	\end{bmatrix}
	\]
	by some unitary matrix $U_k$.
	Thus, the determinant $D_k$ and the trace $T_k$ of $e^\epsilon P|_{\cK_k}-Q|_{\cK_k}$ are 
	\[
	D_k = -e^\epsilon(1-c_k)
	\quad\text{and}\quad
	T_k = e^\epsilon-1 = 2e^{\epsilon/2}\sinh(\epsilon/2),
	\]
	respectively. Therefore, 
	\begin{align*}
		\lambda_\pm &= T_k/2 \pm \max_{k\in[m]} \sqrt{(T_k/2)^2-D_k}\\
		&= e^{\epsilon/2}\Bigl( \sinh(\epsilon/2) \pm \max_{k\in[m]} \sqrt{\sinh^2(\epsilon/2)+1-c_k} \Bigr).
	\end{align*}
\end{proof}

\begin{lemma}\label{lem2}
	Assume that an orthogonal decomposition of $\mathbb{C}^d$ 
	with respect to orthogonal projections $P$ and $Q$ on $\mathbb{C}^d$ is given as \eqref{eq20}; 
	assume that $PQP=cP$ and $QPQ=cQ$ for some $0<c<1$.
	Then $\cH_{01}=\cH_{10}=\cH_{11}=\{0\}$ and $\Tr(P|_{\cK_k})(Q|_{\cK_k})=c$ for every $k\in[m]$.
\end{lemma}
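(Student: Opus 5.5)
The plan is to apply the identities $PQP=cP$ and $QPQ=cQ$ to each summand of the Jordan decomposition \eqref{eq20} given by Lemma~\ref{lemJordan}. Since every $\cH_{ij}$ and every $\cK_k$ is invariant under both $P$ and $Q$, both identities restrict to each summand separately. So I will check them block by block.

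\textbf{The blocks $\cH_{ij}$.} On $\cH_{ij}$ we have $P|_{\cH_{ij}}=iI_{ij}$ and $Q|_{\cH_{ij}}=jI_{ij}$. Restricting $PQP=cP$ to $\cH_{ij}$ therefore gives $i^2j\,I_{ij}=ci\,I_{ij}$, and restricting $QPQ=cQ$ gives $j^2i\,I_{ij}=cj\,I_{ij}$.

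1. For $\cH_{11}$ (so $i=j=1$), the first identity reads $I_{11}=cI_{11}$. Since $c<1$, this forces $\cH_{11}=\{0\}$.
2. For $\cH_{10}$ (so $i=1$, $j=0$), the first identity reads $0=cI_{10}$. Since $c>0$, this forces $\cH_{10}=\{0\}$.
3. For $\cH_{01}$, the symmetric argument uses the second identity, which reads $0=cI_{01}$. Again $c>0$ forces $\cH_{01}=\{0\}$.

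The block $\cH_{00}$ is unconstrained, and the statement does not require it to vanish.

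\textbf{The blocks $\cK_k$.} On each $\cK_k$, both $P|_{\cK_k}$ and $Q|_{\cK_k}$ are rank-one projections, say $\ketbra{u}{u}$ and $\ketbra{v}{v}$ with unit vectors $u,v\in\cK_k$. Then
\[
P|_{\cK_k}Q|_{\cK_k}P|_{\cK_k}=\abs{\braket{u|v}}^2\ketbra{u}{u}.
\]
Restricting $PQP=cP$ to $\cK_k$ gives $\abs{\braket{u|v}}^2=c$, because $\ketbra{u}{u}\neq0$. Hence $\Tr(P|_{\cK_k})(Q|_{\cK_k})=\abs{\braket{u|v}}^2=c$. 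Alternatively, one can use the $2\times2$ matrix form from the proof of Lemma~\ref{lem1}, in which $PQP$ restricts to $c_k$ times $P|_{\cK_k}$, so $c_k=c$.

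No step is a real obstacle. The only point needing care is that the two identities must be restricted to the invariant subspaces, which is legitimate because each summand is invariant under both $P$ and $Q$.
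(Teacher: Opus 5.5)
Your proof is correct and takes essentially the same route as the paper. You restrict $PQP=cP$ and $QPQ=cQ$ to each invariant summand of the Jordan decomposition, use $0<c<1$ to force $\cH_{01}=\cH_{10}=\cH_{11}=\{0\}$, and read off $\Tr(P|_{\cK_k})(Q|_{\cK_k})=c$ from the fact that $P|_{\cK_k}$ is a nonzero rank-one projection. The only difference is that the paper applies the identities to a nonzero vector $u\in\cH_{ij}$ to get $i(c-ij)=0=j(c-ij)$, which is the same computation as your case-by-case check.
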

\begin{proof}
	We show $\cH_{01}=\cH_{10}=\cH_{11}=\{0\}$.
	Let $(i,j)\in\{0,1\}$ and $u\in\cH_{ij}\setminus\{0\}$.
	By $i^2ju=PQPu=cPu=ciu$ and $ij^2u=QPQu=cQu=cju$, 
	we have $i(c-ij)=0$ and $j(c-ij)=0$.
	These and the assumption $0<c<1$ yield $i=j=0$.
	Therefore, $\cH_{01}=\cH_{10}=\cH_{11}=\{0\}$.
	\par
	Let $k\in[m]$.
	We show $\Tr(P|_{\cK_k})(Q|_{\cK_k})=c$.
	Since $\cK_k$ is an invariant subspace of both $P$ and $Q$, 
	the assumption $PQP=cP$ implies $(P|_{\cK_k})(Q|_{\cK_k})(P|_{\cK_k})=cP|_{\cK_k}$.
	Since $P|_{\cK_k}$ is a rank-one orthogonal projection, 
	it follows that $\Tr(P|_{\cK_k})(Q|_{\cK_k})=c$.
\end{proof}

\begin{lemma}\label{lem3}
	Let $\{P_i\}_{i\in[n]}$ be an EITFF${}_{\mathbb{C}}(d,r,n)$ satisfying $n>d/r$.
	Then, for all distinct integers $i,j\in[n]$, 
	the maximum (resp.\ minimum) eigenvalue of $e^\epsilon P_i-P_j$ is equal to 
	the following $\lambda_{+}$ (resp.\ $\lambda_{-}$): 
	\[
	\lambda_\pm = e^{\epsilon/2}\Bigl( \sinh(\epsilon/2) \pm \sqrt{\sinh^2(\epsilon/2)+1-c} \Bigr),
	\]
	where $c=\frac{nr-d}{d(n-1)}$.
\end{lemma}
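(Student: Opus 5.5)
The plan is to combine Lemmas~\ref{lemJordan}, \ref{lem1}, and \ref{lem2}. Fix distinct $i,j\in[n]$ and set $P=P_i$, $Q=P_j$. By condition~2 of Definition~\ref{EITFF}, applied to both ordered pairs $(i,j)$ and $(j,i)$, we have $PQP=cP$ and $QPQ=cQ$ with $c=\frac{nr-d}{d(n-1)}$. To invoke Lemma~\ref{lem2} we need $0<c<1$.

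The positivity $c>0$ is exactly the hypothesis $n>d/r$, since it is equivalent to $nr-d>0$. For $c<1$, note that $nr-d<d(n-1)$ is equivalent to $n(r-d)<0$, and this holds because $r\le d-1$. Lemma~\ref{lem2} then gives $\cH_{01}=\cH_{10}=\cH_{11}=\{0\}$ and $c_k=\Tr(P|_{\cK_k})(Q|_{\cK_k})=c$ for every $k\in[m]$.

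Next we check that $m\ge1$, i.e.\ that some two-dimensional block $\cK_k$ exists. This is needed so that the maximum over $k$ in Lemma~\ref{lem1} is taken over a nonempty set. By the ``moreover'' part of Lemma~\ref{lem1},
\[
\sum_{k\in[m]} c_k=\Tr PQ .
\]
From condition~2 of Definition~\ref{EITFF},
\[
\Tr PQ=\Tr PQP=c\Tr P=cr>0 ,
\]
so $m=r\ge1$.

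Finally, Lemma~\ref{lem1} applied to $P,Q$ gives the extreme eigenvalues of $e^\epsilon P-Q$ as
\[
e^{\epsilon/2}\Bigl(\sinh(\epsilon/2)\pm\max_{k}\sqrt{\sinh^2(\epsilon/2)+1-c_k}\Bigr).
\]
Since every $c_k$ equals $c$, the maximum over $k$ disappears and we obtain the stated $\lambda_\pm$.

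There is one subtlety. The summand $\cH_{00}$ may be nonzero, and on it $e^\epsilon P-Q$ has eigenvalue $0$. I expect the main check to be whether Lemma~\ref{lem1} as stated accounts for this block. If it does not, one compares directly: $\lambda_+>0>\lambda_-$, because $\sqrt{\sinh^2(\epsilon/2)+1-c}>\sinh(\epsilon/2)$ when $c<1$. Hence an eigenvalue $0$ on $\cH_{00}$ changes neither the maximum nor the minimum. With this observation the argument is complete.
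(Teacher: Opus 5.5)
Your proposal is correct and follows the paper's proof. The paper likewise checks $0<c<1$ from $n>d/r$ and $r\le d-1$, and then invokes Lemmas~\ref{lem1} and~\ref{lem2}. Your extra checks fill in details the paper leaves implicit in the proof of Lemma~\ref{lem1}: that $m=r\ge1$, and that the zero eigenvalue of $e^\epsilon P-Q$ on $\cH_{00}$ lies strictly between $\lambda_-$ and $\lambda_+$.
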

\begin{proof}
	By $c=\frac{nr-d}{d(n-1)}$, $n>d/r$, and $r\in[d-1]$, we have $0<c<1$.
	Thus, the assertion follows from Lemmas~\ref{lem1} and \ref{lem2}.
\end{proof}

We now prove that isoclinic mechanisms are $\epsilon$-QLDP, using Lemma~\ref{lem3}.

\begin{proposition}\label{prop1}
	Let $\{P_i\}_{i\in[n]}$ be an EITFF${}_{\mathbb{C}}(d,r,n)$, and 
	$\mu$ be a real number in the interval $[0,d/(d-r)]$.
	For $x\in[n]$, set the density matrix $\sigma_x$ as \eqref{eq13} regardless of \eqref{eq14}.
	Then the mechanism $\vec{\sigma}=(\sigma_1,\ldots,\sigma_n)$ is $\epsilon$-QLDP  
	if and only if  
	\begin{equation}
		\frac{1}{1-h_{+}} \le 1-\mu \le \frac{1}{1-h_{-}},
		\label{eq21}
	\end{equation}
	where 
	\[
	h_\pm = \frac{d}{2r}\biggl( 1\pm\sqrt{1+\frac{1-c}{\sinh^2(\epsilon/2)}} \biggr)
	\quad\text{and}\quad
	c=\frac{nr-d}{d(n-1)}.
	\]
	In particular, the isoclinic mechanism $\vec{\sigma}$ in \eqref{eq13} is $\epsilon$-QLDP.
\end{proposition}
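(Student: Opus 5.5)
The plan is to reduce the QLDP condition $\sigma_{x'}\le e^\epsilon\sigma_x$ to a statement about the extreme eigenvalues of $e^\epsilon P_x-P_{x'}$, which Lemma~\ref{lem3} computes. For $x=x'$ the condition is trivial, so fix distinct $i,j\in[n]$. Writing $\sigma_k=\frac{\mu}{d}I_d+\frac{1-\mu}{r}P_k$, we get
\[
e^\epsilon\sigma_i-\sigma_j=\frac{\mu}{d}(e^\epsilon-1)I_d+\frac{1-\mu}{r}\bigl(e^\epsilon P_i-P_j\bigr).
\]
So positivity is equivalent to
\[
\frac{\mu}{d}(e^\epsilon-1)+\frac{1-\mu}{r}\lambda\ge0
\]
for every eigenvalue $\lambda$ of $e^\epsilon P_i-P_j$. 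Since this expression is affine in $\lambda$, it suffices to impose it at $\lambda_+$ and at $\lambda_-$. This covers both signs of $1-\mu$; the range $\mu\in[0,d/(d-r)]$ is exactly what makes $\sigma_x\ge0$, and there $1-\mu$ may be negative. By symmetry, the same pair of conditions arises for $(j,i)$.

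First I treat the generic case $n>d/r$, where Lemma~\ref{lem3} gives
\[
\lambda_\pm=e^{\epsilon/2}\bigl(s\pm\sqrt{s^2+1-c}\bigr),\qquad s=\sinh(\epsilon/2).
\]
Put $t=1-\mu$ and use $e^\epsilon-1=2e^{\epsilon/2}s$. Dividing by $e^{\epsilon/2}s/r>0$ turns the two conditions into
\[
\frac{2r}{d}(1-t)+t\bigl(1\pm\sqrt{1+(1-c)/s^2}\bigr)\ge0 .
\]
Equivalently, $(1-t)+t\,h_\pm\ge0$, i.e.\ $1-t(1-h_\pm)\ge0$, with $h_\pm$ as in the statement.

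Next I use the signs of the $h_\pm$. Since $h_+>1$, we have $1-h_+<0$, so the "$+$" condition reads $t\ge 1/(1-h_+)$. Since $h_-<0$, we have $1-h_->0$, so the "$-$" condition reads $t\le 1/(1-h_-)$. This is exactly \eqref{eq21}. This sign bookkeeping is where I expect the only real care to be needed.

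The degenerate case $n=d/r$ has $c=0$ and $P_iP_j=0$ for $i\ne j$. Then $e^\epsilon P_i-P_j$ has eigenvalues $e^\epsilon$, $-1$ and $0$. In the $c=0$ formula, $e^{\epsilon/2}(s\pm\cosh(\epsilon/2))$ equals $e^\epsilon$ and $-1$, so $\lambda_\pm$ are the same values. The extra eigenvalue $0$ lies between them, so by the affine argument it adds no new constraint and the same computation applies. Note that $n<d/r$ is impossible, because $c\ge0$.

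Finally, for the "in particular" claim, \eqref{eq14} gives $(1-\mu)^{-1}=1-h_-$, so $1-\mu=1/(1-h_-)$. This value is positive and lies at the upper end of \eqref{eq21}, while $1/(1-h_+)<0$, so \eqref{eq21} holds. It also gives $\mu\in(0,1)\subset[0,d/(d-r)]$. Hence the isoclinic mechanism in \eqref{eq13} is $\epsilon$-QLDP.
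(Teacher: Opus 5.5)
Your proof is correct and follows the paper's argument. As in the paper, you reduce positivity of $e^\epsilon\sigma_i-\sigma_j$ via Lemma~\ref{lem3} to the two affine conditions at $\lambda_\pm$, and then rewrite them using $h_\pm=\frac{d}{r}\cdot\frac{\lambda_\pm}{e^\epsilon-1}$.

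The differences are minor. For the case $n=d/r$, you check that the $c=0$ formula still gives the extreme eigenvalues $e^\epsilon$ and $-1$, whereas the paper compares the diagonal entries of the commuting $\sigma_x$ directly. You also use the known signs of $1-h_\pm$ instead of the paper's split into $\mu\le1$ and $\mu>1$, which is slightly cleaner.
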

\begin{proof}
	First, assume $n=d/r$.
	Then $c=0$ and $h_\pm = (n/2)( 1\pm\coth(\epsilon/2) )$.
	Thus, \eqref{eq21} is equivalent to 
	\begin{equation}
		\frac{n}{n-1+e^\epsilon} \le \mu \le \frac{n}{n-1+e^{-\epsilon}}.
		\label{eq22}
	\end{equation}
	Since $\sigma_1,\ldots,\sigma_n$ are commutative, we have the following equivalences: 
	\begin{align*}
		\text{$\vec{\sigma}$ is $\epsilon$-QLDP}
		&\iff \frac{\mu}{d}e^{-\epsilon} \le \frac{\mu}{d}+\frac{1-\mu}{r} \le \frac{\mu}{d}e^\epsilon\\
		&\iff \mu e^{-\epsilon} \le \mu+n(1-\mu) \le \mu e^\epsilon
		\iff \text{\eqref{eq22}}.
	\end{align*}
	Therefore, the desired equivalence holds.
	\par
	Next, assume $n>d/r$.
	By Lemma~\ref{lem3}, the mechanism $\vec{\sigma}$ is $\epsilon$-QLDP 
	if and only if  
	\begin{equation}
		(e^\epsilon-1)\frac{\mu}{d} + \frac{1-\mu}{r}\lambda_\pm \ge 0
		\label{eq23}
	\end{equation}
	for both signs, where $\lambda_\pm$ are defined in Lemma~\ref{lem3}.
	Note that \eqref{eq23} implies $\mu\in[0,d/(d-r)]$ because $\lambda_{-}<0$ and $\lambda_{+}>e^\epsilon-1$.
	Also, for both signs, 
	\[
	h_\pm = \frac{d}{r}\cdot\frac{\lambda_\pm}{e^\epsilon-1}.
	\]
	Considering the two cases $\mu\le1$ and $\mu>1$, we obtain the desired equivalence.
\end{proof}

\section{Proofs of Theorems~$\ref{main1}$, $\ref{main2}$, and $\ref{main3}$}\label{appB}
This appendix proves Theorems~\ref{main1}, \ref{main2}, and \ref{main3}.
As before, let $n,d\ge2$, $r\in[d-1]$, and $\epsilon>0$, 
representing cardinality, dimension, rank, and privacy parameter, respectively.
We use the following technical lemma \cite[Theorem~4]{Kairouz}.

\begin{lemma}\label{lemKairouz}
	The classical optimal value $\OPT_n(\epsilon; \Phi_\C)$ is equal to 
	the maximum of the sum 
	\begin{equation}
		\sum_{\mathbf{z}\in\{0,1\}^n} \phi(\mathbbm{1}_n+(e^\epsilon-1)\mathbf{z})\alpha_{\mathbf{z}}
		\label{eq24}
	\end{equation}
	subject to the following constraints.
	\renewcommand{\labelenumi}{$\arabic{enumi}$.}
	\begin{enumerate}
		\item
		$\alpha_{\mathbf{z}}\ge0$ for every $\mathbf{z}\in\{0,1\}^n$.
		\item
		$\sum_{\mathbf{z}\in\{0,1\}^n} \alpha_{\mathbf{z}}(\mathbbm{1}_n+(e^\epsilon-1)\mathbf{z})=\mathbbm{1}_n$.
	\end{enumerate}
\end{lemma}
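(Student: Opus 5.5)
The plan is to follow Kairouz et al.'s reduction: reformulate the supremum over $\epsilon$-LDP mechanisms as a linear program over weights on the "staircase" vectors $\mathbbm{1}_n+(e^\epsilon-1)\mathbf{z}$, $\mathbf{z}\in\{0,1\}^n$.

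\textbf{Step 1 (the LP value is an upper bound).} Take any $\epsilon$-LDP $q$ with positive entries. Set $m_y=\min_x q(y|x)$, so each column $\mathbf{v}_y=(q(y|1),\ldots,q(y|n))$ satisfies $\mathbf{v}_y\in m_y[1,e^\epsilon]^n$. The cube $[1,e^\epsilon]^n$ is the convex hull of its vertices $\mathbbm{1}_n+(e^\epsilon-1)\mathbf{z}$. Hence $\mathbf{v}_y=\sum_{\mathbf{z}}\alpha_{y,\mathbf{z}}(\mathbbm{1}_n+(e^\epsilon-1)\mathbf{z})$ with $\alpha_{y,\mathbf{z}}\ge0$. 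Positive homogeneity and subadditivity of $\phi$ (repeated finitely many times, using $\phi(\alpha\mathbf{w})=\alpha\phi(\mathbf{w})$ for $\alpha>0$ and discarding zero weights) give
\[
\phi(\mathbf{v}_y)\le\sum_{\mathbf{z}}\alpha_{y,\mathbf{z}}\,\phi(\mathbbm{1}_n+(e^\epsilon-1)\mathbf{z}).
\]
Now set $\alpha_{\mathbf{z}}=\sum_y\alpha_{y,\mathbf{z}}$. Summing over $y$ shows $\Phi_\C(q)$ is at most \eqref{eq24}. Constraint~2 holds because $\sum_y\mathbf{v}_y=\mathbbm{1}_n$, since $q$ is column-stochastic. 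Zero entries cannot occur, because all entries in a row share the same support under LDP.

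\textbf{Step 2 (every feasible point is achieved).} Given feasible $(\alpha_{\mathbf{z}})$, define a mechanism with one output $y=\mathbf{z}$ for each $\mathbf{z}$ with $\alpha_{\mathbf{z}}>0$, by
\[
q(\mathbf{z}|x)=\alpha_{\mathbf{z}}\bigl(1+(e^\epsilon-1)z_x\bigr).
\]
Column-stochasticity is exactly constraint~2. Within each row, the ratio of entries is at most $e^\epsilon$, so $q$ is $\epsilon$-LDP. Homogeneity gives $\Phi_\C(q)$ equal to \eqref{eq24}. The dimension $d$ is unrestricted in the definition of $\OPT_n$, so this $q$ is admissible.

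\textbf{Step 3 (supremum is a maximum).} The feasible set is a compact polytope: every coordinate is bounded, since $\alpha_{\mathbf{z}}\le1$ from any coordinate of constraint~2. The objective is linear. Hence the supremum is attained, and combining Steps~1 and~2 gives $\OPT_n(\epsilon;\Phi_\C)$ equal to this maximum.

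\textbf{Main obstacle.} The only delicate point is in Step~1: decomposing each column into cube vertices and applying sublinearity correctly, including the degenerate case where all coordinates of $\mathbf{v}_y$ are equal. That case is still inside the cube, so it is not special, but some care is needed in handling zero coefficients. Everything else is bookkeeping.
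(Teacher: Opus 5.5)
Your proof is correct, and it is essentially the argument behind the result the paper invokes. The paper gives no proof of its own but cites Kairouz et al.\ \cite[Theorem~4]{Kairouz}, whose reduction is the same as yours: write each row of $q$ as a conic combination of the staircase vectors $\mathbbm{1}_n+(e^\epsilon-1)\mathbf{z}$, apply sublinearity of $\phi$ to get the upper bound, and conversely realize any feasible $(\alpha_{\mathbf{z}})$ as a staircase mechanism.

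Two small slips, neither affecting validity. First, rows of $q$ containing a zero can occur; by LDP such rows vanish identically, so they contribute nothing to $\Phi_\C(q)$ or to $\sum_y\mathbf{v}_y$. Second, $\mathbf{v}_y$ is a row, not a column, of $q\in\mathbb{R}^{d\times n}$.
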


We now prove Theorems~\ref{main1}, \ref{main2}, and \ref{main3}.
Assume all the assumptions of Theorem~\ref{main1} here and below.
We first prove the upper bounds for the classical and quantum optimal values, namely, 
\begin{align}
	\OPT_n(\epsilon; \Phi_\C) &\le \phi(\mathbbm{1}_n) + \frac{\lfloor{n/2}\rfloor\lceil{n/2}\rceil}{n-1}\cdot\frac{\beta_0}{2}\epsilon^2 + o(\epsilon^2)
	\quad\text{and}\quad\label{eq-opt-c-ub}\\
	\OPT_n(\epsilon; \Phi_\Q) &\le \phi(\mathbbm{1}_n) + \frac{\beta_0n}{4}\epsilon^2 + o(\epsilon^2),
	\label{eq-opt-q-ub}
\end{align}
and then prove the lower bounds for them, namely, 
\begin{align}
	\OPT_n(\epsilon; \Phi_\C) &\ge \Phi_\C(b^*) \ge \phi(\mathbbm{1}_n) + \frac{\lfloor{n/2}\rfloor\lceil{n/2}\rceil}{n-1}\cdot\frac{\beta_0}{2}\epsilon^2 + o(\epsilon^2)
	\quad\text{and}\quad\label{eq-opt-c-lb}\\
	\OPT_n(\epsilon; \Phi_\Q) &\ge \Phi_\Q(\vec{\sigma}^*) \ge \phi(\mathbbm{1}_n) + \frac{\beta_0n}{4}\epsilon^2 + o(\epsilon^2).
	\label{eq-opt-q-lb}
\end{align}
These inequalities imply Theorems~\ref{main1}, \ref{main2}, and \ref{main3}.

\begin{proof}[Proofs of $\eqref{eq-opt-c-ub}$ and $\eqref{eq-opt-q-ub}$]
	By assumption~1 of Theorem~\ref{main1}, 
	the Hesse matrix $\mathbf{H}_\phi(\mathbbm{1}_n)=[\partial_i\partial_j\phi(\mathbbm{1}_n)]_{i,j\in[n]}$ can be expressed as 
	$\mathbf{H}_\phi(\mathbbm{1}_n)=\beta_1I_n+\beta_2\mathbbm{1}_n\mathbbm{1}_n^\top$ 
	for some real numbers $\beta_1$ and $\beta_2$.
	Euler's homogeneous function theorem implies 
	\begin{equation}
		\sum_{j\in[n]} z_j\partial_j\phi(\mathbf{z})=\phi(\mathbf{z}).
		\label{eq25}
	\end{equation}
	Differentiating both sides partially with respect to $z_i$, and 
	substituting $\mathbf{z}=\mathbbm{1}_n$, 
	we obtain $\sum_{j\in[n]} \partial_i\partial_j\phi(\mathbbm{1}_n)=0$ for every $i\in[n]$, 
	which implies $0=\beta_1+n\beta_2$.
	Thus, the Hesse matrix $\mathbf{H}_\phi(\mathbbm{1}_n)$ is a positive constant multiple of $nI_n-\mathbbm{1}_n\mathbbm{1}_n^\top$: 
	for all $i,j\in[n]$, 
	\begin{equation}
		\partial_i\partial_j\phi(\mathbbm{1}_n)=\beta_0\gamma_{i,j}
		\quad\text{and}\quad
		\gamma_{i,j} = 
		\begin{cases}
			1 & \text{if }i=j,\\
			-1/(n-1) & \text{if }i\not=j,
		\end{cases}
		\label{eq26}
	\end{equation}
	where $\beta_0=\partial_1^2\phi(\mathbbm{1}_n)>0$ by assumption~2 of Theorem~\ref{main1}.
	\par
	\textbf{Proof of $\eqref{eq-opt-c-ub}$.}
	By Lemma~\ref{lemKairouz}, it suffices to show that 
	\begin{equation}
		\text{\eqref{eq24}}
		\le \phi(\mathbbm{1}_n) + \frac{\lfloor{n/2}\rfloor\lceil{n/2}\rceil}{n-1}\cdot\frac{\beta_0}{2}\epsilon^2 + o(\epsilon^2)
		\label{eq27}
	\end{equation}
	as $\epsilon\to+0$, subject to constraints~1--2 in Lemma~\ref{lemKairouz}.
	The second-order Taylor expansion of $\phi(\mathbbm{1}_n+(e^\epsilon-1)\mathbf{z})$ is 
	\begin{align*}
		\phi(\mathbbm{1}_n+(e^\epsilon-1)\mathbf{z})
		&= \phi(\mathbbm{1}_n) + (e^\epsilon-1)\sum_{i\in[n]} \partial_i\phi(\mathbbm{1}_n)z_i\\
		&\quad+ \frac{(e^\epsilon-1)^2}{2}\sum_{i,j\in[n]} \partial_i\partial_j\phi(\mathbbm{1}_n)z_iz_j
		+ o( (e^\epsilon-1)^2 ).
	\end{align*}
	Thus, 
	\begin{equation}
	\begin{split}
		\text{\eqref{eq24}} &= \phi(\mathbbm{1}_n)\sum_{\mathbf{z}\in\{0,1\}^n} \alpha_{\mathbf{z}}
		+ (e^\epsilon-1)\sum_{\mathbf{z}\in\{0,1\}^n} \sum_{i\in[n]} \partial_i\phi(\mathbbm{1}_n)z_i\alpha_{\mathbf{z}}\\
		&\quad+ \frac{(e^\epsilon-1)^2}{2}\sum_{\mathbf{z}\in\{0,1\}^n} \sum_{i,j\in[n]} \partial_i\partial_j\phi(\mathbbm{1}_n)z_iz_j\alpha_{\mathbf{z}}
		+ o( (e^\epsilon-1)^2 ).
	\end{split}\label{eq28}
	\end{equation}
	Substituting $\mathbf{z}=\mathbbm{1}_n$ into \eqref{eq25}, we have 
	\begin{equation}
		\sum_{i\in[n]} \partial_i\phi(\mathbbm{1}_n) = \phi(\mathbbm{1}_n).
		\label{eq29}
	\end{equation}
	By this and constraint~2 in Lemma~\ref{lemKairouz}, the linear term of \eqref{eq28} is 
	\begin{align*}
		(e^\epsilon-1)\sum_{\mathbf{z}\in\{0,1\}^n} \sum_{i\in[n]} \partial_i\phi(\mathbbm{1}_n)z_i\alpha_{\mathbf{z}}
		&= \sum_{i\in[n]} \partial_i\phi(\mathbbm{1}_n)\biggl( 1 - \sum_{\mathbf{z}\in\{0,1\}^n} \alpha_{\mathbf{z}} \biggr)\\
		&= \phi(\mathbbm{1}_n)\biggl( 1 - \sum_{\mathbf{z}\in\{0,1\}^n} \alpha_{\mathbf{z}} \biggr).
	\end{align*}
	This, \eqref{eq28}, and \eqref{eq26} yield 
	\begin{align}
		\text{\eqref{eq24}}
		&= \phi(\mathbbm{1}_n)
		+ \frac{(e^\epsilon-1)^2}{2}\sum_{\mathbf{z}\in\{0,1\}^n} \sum_{i,j\in[n]} \partial_i\partial_j\phi(\mathbbm{1}_n)z_iz_j\alpha_{\mathbf{z}}
		+ o( (e^\epsilon-1)^2 )\nonumber\\
		&= \phi(\mathbbm{1}_n)
		+ \frac{(e^\epsilon-1)^2}{2}\sum_{\mathbf{z}\in\{0,1\}^n} \sum_{i,j\in[n]} \beta_0\gamma_{i,j}z_iz_j\alpha_{\mathbf{z}}
		+ o( (e^\epsilon-1)^2 ). \label{eq30}
	\end{align}
	\par
	Let us bound the sum 
	\begin{equation}
		\sum_{\mathbf{z}\in\{0,1\}^n} \sum_{i,j\in[n]} \gamma_{i,j}z_iz_j\alpha_{\mathbf{z}}
		= \sum_{k=0}^n \sum_{\substack{\mathbf{z}\in\{0,1\}^n \\ \|\mathbf{z}\|_1=k}} \sum_{i,j\in[n]} \gamma_{i,j}z_iz_j\alpha_{\mathbf{z}}
		\label{eq31}
	\end{equation}
	from above, subject to constraints~1--2 in Lemma~\ref{lemKairouz}.
	If $\|\mathbf{z}\|_1=k$ and $\mathbf{z}\in\{0,1\}^n$, then 
	\[
	\sum_{i,j\in[n]} \gamma_{i,j}z_iz_j
	= k - \frac{k(k-1)}{n-1}
	= \frac{k(n-k)}{n-1},
	\]
	which achieves the maximum at $k\in\{ \lfloor{n/2}\rfloor,\lceil{n/2}\rceil \}$.
	This and constraints~1--2 in Lemma~\ref{lemKairouz} imply 
	\[
	\text{\eqref{eq31}} \le \frac{\lfloor{n/2}\rfloor\lceil{n/2}\rceil}{n-1}\sum_{\mathbf{z}\in\{0,1\}^n} \alpha_{\mathbf{z}}
	\le \frac{\lfloor{n/2}\rfloor\lceil{n/2}\rceil}{n-1}.
	\]
	By this, \eqref{eq30}, and assumption~2 of Theorem~\ref{main1}, 
	\[
	\text{\eqref{eq24}}
	\le \phi(\mathbbm{1}_n)
	+ \frac{\beta_0}{2}\cdot\frac{\lfloor{n/2}\rfloor\lceil{n/2}\rceil}{n-1}(e^\epsilon-1)^2
	+ o( (e^\epsilon-1)^2 ),
	\]
	which shows \eqref{eq27}.
	\par
	\textbf{Proof of $\eqref{eq-opt-q-ub}$.}
	Recall that $\rho_{\avg}$ is defined as $\rho_{\avg}=(1/n)\sum_{x\in[n]} \rho_x$.
	Set $\delta\rho_x=\rho_x-\rho_{\avg}$ for $x\in[n]$.
	By assumption~4 of Theorem~\ref{main1}, the function $\Phi_\Q(\vec{\rho})$ has the second-order Taylor expansion around $(\rho_{\avg},\ldots,\rho_{\avg})$: 
	\begin{gather}
		\Phi_\Q(\vec{\rho}) = \phi(\mathbbm{1}_n) + \Phi_\Q^{(1)}(\rho_{\avg}; \delta\vec{\rho}) + \Phi_\Q^{(2)}(\rho_{\avg}; \delta\vec{\rho})
		+ o\Bigl( \max_{i\in[n]} \|\delta\rho_i\|^2 \Bigr), \label{eq67}\\
		\Phi_\Q^{(1)}(\rho_{\avg}; \delta\vec{\rho}) = \sum_{i\in[n]} \cA_{i;\rho_{\avg}}[\delta\rho_i], \quad\text{and}\quad
		\Phi_\Q^{(2)}(\rho_{\avg}; \delta\vec{\rho}) = \frac{\beta_0n}{2(n-1)}\sum_{i\in[n]} J_{\rho_{\avg}}[\delta\rho_i,\delta\rho_i], \nonumber
	\end{gather}
	where $\cA_{i;\rho_{\avg}}$ is the first-order partial Fr\'echet derivative 
	at $(\rho_{\avg},\ldots,\rho_{\avg})$ with respect to the $i$th variable, and 
	we have used Lemma~\ref{lemFormulation} to obtain the constant term.
	By assumption~3 of Theorem~\ref{main1}, $\cA_{i;\rho_{\avg}}=\cA_{j;\rho_{\avg}}$ for all $i,j\in[n]$.
	Thus, the linear term vanishes: 
	\begin{equation}
		\Phi_\Q^{(1)}(\rho_{\avg}; \delta\vec{\rho})
		= \cA_{1;\rho_{\avg}}\biggl[ \sum_{i\in[n]} \delta\rho_i \biggr]
		= 0. \label{eq32}
	\end{equation}
	Also, it can easily be checked that $\|\delta\rho_x\|=O(\epsilon)$ as $\epsilon\to+0$, for every $x\in[n]$.
	\par
	We show \eqref{eq-opt-q-ub}.
	By \eqref{eq32}, it suffices to show that 
	\begin{equation}
		\Phi_\Q^{(2)}(\rho_{\avg}; \delta\vec{\rho})
		\le \frac{\beta_0n}{4}\epsilon^2 + o(\epsilon^2)
		\label{eq33}
	\end{equation}
	as $\epsilon\to+0$.
	Set $\Delta_x=(\rho_{\avg})^{-1/2}\delta\rho_x(\rho_{\avg})^{-1/2}$ for $x\in[n]$.
	By \eqref{eq-SLD-RLD} and assumption~2 of Theorem~\ref{main1}, 
	\begin{equation}
	\begin{split}
		\Phi_\Q^{(2)}(\rho_{\avg}; \delta\vec{\rho})
		&\le \frac{\beta_0n}{2(n-1)}\sum_{i\in[n]} J_{\rho_{\avg}}^{\RLD}[\delta\rho_i,\delta\rho_i]
		= \frac{\beta_0n}{2(n-1)}\sum_{i\in[n]} \Tr(\delta\rho_i)^2(\rho_{\avg})^{-1}\\
		&= \frac{\beta_0n}{2(n-1)}\sum_{i\in[n]} \Tr\Delta_i^2\rho_{\avg}
		= \frac{\beta_0}{4(n-1)}\sum_{i,j\in[n]} \Tr(\Delta_i-\Delta_j)^2\rho_{\avg},
	\end{split}\label{eq34}
	\end{equation}
	where we have used $\sum_{i\in[n]} \Delta_i=0$ to obtain the last equality.
	By \eqref{eq01}, 
	the following inequalities hold: 
	\begin{gather*}
		e^{-\epsilon}\rho_{\avg}\le\rho_i\le e^\epsilon\rho_{\avg},\quad
		(e^{-\epsilon}-1)I_d \le \Delta_i \le (e^\epsilon-1)I_d,
		\quad\text{and}\\
		(e^{-\epsilon}-1)(I_d+\Delta_j) \le \Delta_i-\Delta_j \le (e^\epsilon-1)(I_d+\Delta_j)
	\end{gather*}
	for all $i,j\in[n]$.
	Since these yield that $\|\Delta_i\|\le\epsilon+o(\epsilon)$ and $\|\Delta_i-\Delta_j\|\le\epsilon+o(\epsilon)$, 
	it follows that 
	\[
	\Tr(\Delta_i-\Delta_j)^2\rho_{\avg}
	\le \|(\Delta_i-\Delta_j)^2\|\,\|\rho_{\avg}\|_1
	\le \|\Delta_i-\Delta_j\|^2
	\le \epsilon^2+o(\epsilon^2)
	\]
	for all $i,j\in[n]$.
	Therefore, 
	\[
	\sum_{i,j\in[n]} \Tr(\Delta_i-\Delta_j)^2\rho_{\avg}
	\le \sum_{i\not=j} (\epsilon^2+o(\epsilon^2))
	= n(n-1)\epsilon^2+o(\epsilon^2)
	\]
	for all $i,j\in[n]$.
	This, \eqref{eq34}, and assumption~2 of Theorem~\ref{main1} yield \eqref{eq33}.
\end{proof}

\begin{proof}[Proof of $\eqref{eq-opt-q-lb}$]
	For simplicity, write $\sigma_x^*$ and $\mu_*$ as $\sigma_x$ and $\mu$, respectively, in the following calculation.
	Set $\sigma_{\avg}=(1/n)\sum_{k\in[n]} \sigma_k=d^{-1}I_d$, $\nu=1-\mu$, and 
	$\delta\sigma_x=\sigma_x-\sigma_{\avg}=\nu(r^{-1}P_x-d^{-1}I_d)$ for $x\in[n]$.
	By assumption~4 of Theorem~\ref{main1} and \eqref{eq-SLD-RLD}, 
	\[
	\Phi_\Q^{(2)}(\sigma_{\avg}; \delta\vec{\sigma})
	\ge \frac{\beta_0n}{2(n-1)}\sum_{i\in[n]} J_{\sigma_{\avg}}^{\SLD}[\delta\sigma_i,\delta\sigma_i].
	\]
	Each $J_{\sigma_{\avg}}^{\SLD}[\delta\sigma_i,\delta\sigma_i]$ can be calculated as follows: 
	\begin{align*}
		J_{\sigma_{\avg}}^{\SLD}[\delta\sigma_i,\delta\sigma_i]
		&= \int_0^\infty \Tr e^{-(t/2)\sigma_{\avg}}\delta\sigma_ie^{-(t/2)\sigma_{\avg}}\delta\sigma_i\,\mathrm{d}t\\
		&= \biggl( \int_0^\infty e^{-t/d}\,\mathrm{d}t \biggr)\Tr(\delta\sigma_i)^2
		= d\Tr(\delta\sigma_i)^2\\
		&= d\nu^2\Tr(r^{-1}P_i-d^{-1}I_d)^2
		= d\nu^2(r^{-1}-d^{-1}).
	\end{align*}
	Thus, 
	\begin{equation}
		\Phi_\Q^{(2)}(\sigma_{\avg}; \delta\vec{\sigma})
		\ge \frac{\beta_0n}{2(n-1)}\sum_{i\in[n]} d\nu^2(r^{-1}-d^{-1})
		= \frac{\beta_0n^2}{2(n-1)}\cdot\frac{d-r}{r}\nu^2.
		\label{eq35}
	\end{equation}
	By \eqref{eq15}, 
	\begin{equation*}
		\nu^2 = \biggl( 1+\frac{1-c}{\sinh^2(\epsilon/2)} \biggr)^{-1}
		= \frac{\epsilon^2}{4(1-c)}+o(\epsilon^2)
	\end{equation*}
	as $\epsilon\to+0$.
	By \eqref{eq35}, $d=2r$, and $1-c=\frac{n}{2n-2}$, 
	\begin{equation*}
	\begin{split}
		\Phi_\Q^{(2)}(\sigma_{\avg}; \delta\vec{\sigma})
		&\ge \frac{\beta_0n^2}{2(n-1)}\nu^2
		= \frac{\beta_0n^2}{2(n-1)}\cdot\frac{\epsilon^2}{4(1-c)} + o(\epsilon^2)\\
		&= \frac{\beta_0n^2}{2(n-1)}\cdot\frac{\epsilon^2}{4}\cdot\frac{2n-2}{n} + o(\epsilon^2)
		= \frac{\beta_0n}{4}\epsilon^2 + o(\epsilon^2)
	\end{split}
	\end{equation*}
	as $\epsilon\to+0$.
	Substituting $\vec{\rho}=\vec{\sigma}$ into \eqref{eq67} and \eqref{eq32}, 
	we obtain 
	\begin{equation*}
		\Phi_\Q(\vec{\sigma}) = \phi(\mathbbm{1}_n) + \Phi_\Q^{(2)}(\sigma_{\avg}; \delta\vec{\sigma}) + o(\epsilon^2).
	\end{equation*}
	Therefore, the second inequality of \eqref{eq-opt-q-lb} holds.
	The first inequality of \eqref{eq-opt-q-lb} is clear.
\end{proof}

\begin{proof}[Proof of $\eqref{eq-opt-c-lb}$]
	Set $k=\lfloor{n/2}\rfloor$.
	By assumptions~1--2 of Theorem~\ref{main1} and Taylor's theorem, 
	\begin{align*}
		\Phi_\C(b^*) &= \phi( b^*(1|1),\ldots,b^*(1|n) ) + \phi( b^*(2|1),\ldots,b^*(2|n) )\\
		&= \frac{1}{e^\epsilon+1}\biggl( \phi(\underbrace{e^\epsilon,\ldots,e^\epsilon}_{k}, \underbrace{1,\ldots,1}_{n-k})
		+ \phi(\underbrace{1,\ldots,1}_{k}, \underbrace{e^\epsilon,\ldots,e^\epsilon}_{n-k}) \biggr)\\
		&= \frac{1}{e^\epsilon+1}\biggl( 2\phi(\mathbbm{1}_n) + \sum_{i\in[n]} \partial_i\phi(\mathbbm{1}_n)(e^\epsilon-1)\\
		&\quad+ \frac{1}{2}\sum_{i,j\in[k]} \partial_i\partial_j\phi(\mathbbm{1}_n)(e^\epsilon-1)^2
		+ \frac{1}{2}\sum_{i,j\in[n]\setminus[k]} \partial_i\partial_j\phi(\mathbbm{1}_n)(e^\epsilon-1)^2 \biggr)\\
		&\quad+ o( (e^\epsilon-1)^2 ).
	\end{align*}
	By \eqref{eq29} and \eqref{eq26}, 
	\begin{align*}
		\Phi_\C(b^*)
		&= \frac{1}{e^\epsilon+1}\biggl( 2\phi(\mathbbm{1}_n) + \phi(\mathbbm{1}_n)(e^\epsilon-1)\\
		&\quad+ \frac{1}{2}\sum_{i,j\in[k]} \beta_0\gamma_{i,j}(e^\epsilon-1)^2
		+ \frac{1}{2}\sum_{i,j\in[n]\setminus[k]} \beta_0\gamma_{i,j}(e^\epsilon-1)^2 \biggr)\\
		&\quad+ o( (e^\epsilon-1)^2 )\\
		&= \phi(\mathbbm{1}_n) + \frac{\beta_0(e^\epsilon-1)^2}{e^\epsilon+1}\cdot\frac{k(n-k)}{n-1} + o( (e^\epsilon-1)^2 )\\
		&= \phi(\mathbbm{1}_n) + \frac{\beta_0\epsilon^2}{2}\cdot\frac{k(n-k)}{n-1} + o( \epsilon^2 )
	\end{align*}
	as $\epsilon\to+0$, which shows the second inequality of \eqref{eq-opt-c-lb}.
	The first inequality of \eqref{eq-opt-c-lb} is clear.
\end{proof}

\section{Proofs of Theorems~$\ref{main1'}$, $\ref{main2'}$, and $\ref{main3'}$}\label{appC}
This appendix proves Theorems~\ref{main1'}, \ref{main2'}, and \ref{main3'}.
As before, let $n,d\ge2$ and $\epsilon>0$, 
representing cardinality, dimension, and privacy parameter, respectively.
We begin with the following preliminary lemmas.

\begin{lemma}[cf.\ {\cite[Theorem~4]{Grace}}]\label{lemChernoff}
	Let $\rho_0,\rho_1,\rho_2$ be full-rank quantum states on $\mathbb{C}^d$.
	Set $\delta\rho_k=\rho_k-\rho_0$ for $k\in[2]$.
	Then the Chernoff information $C(\rho_1,\rho_2)$ has the second-order Taylor expansion around $(\rho_0,\rho_0)$: 
	\[
	C(\rho_1,\rho_2)
	= \frac{1}{8}J_{\rho_0}^{\WYD(1/2)}[\delta\rho_1-\delta\rho_2,\delta\rho_1-\delta\rho_2]
	+ o( \max\{ \|\delta\rho_1\|^2,\|\delta\rho_2\|^2 \} )
	\]
	as $\|\delta\rho_1\|,\|\delta\rho_2\|\to0$.
\end{lemma}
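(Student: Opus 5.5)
The plan is to Taylor-expand the function $Q_s(\rho_1,\rho_2)=\Tr\rho_1^s\rho_2^{1-s}$ in both arguments, uniformly in $s\in[0,1]$, and then optimize over $s$. Writing $\rho_k=\rho_0+\delta\rho_k$, I would first use $\Tr\rho_1^s\rho_2^{1-s}$ evaluated at $\rho_1=\rho_2=\rho$ equal to $1$. This shows that along the diagonal direction $\delta\rho_1=\delta\rho_2$ the value stays $1$ to all orders. Combined with the symmetry $s\leftrightarrow 1-s$ and a first-order calculation, it also shows that the linear term vanishes. Indeed, the derivative of $\Tr\rho_1^s\rho_0^{1-s}$ in direction $X$ is $s\Tr\rho_0^{s-1}\rho_0^{1-s}X=s\Tr X$, and likewise $(1-s)\Tr X$ for the second argument. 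Both vanish because the perturbations are traceless.

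For the second-order term I would work in the eigenbasis $\rho_0=\sum_i\lambda_i\ketbra{i}{i}$ and use the Daleckii--Krein formulas for the first and second Fréchet derivatives of $t\mapsto t^s$. The Hessian then has the form
\[
Q_s(\rho_1,\rho_2)=1-\tfrac12\sum_{i,j}K_s(\lambda_i,\lambda_j)\,|\braket{i|\delta\rho_1-\delta\rho_2|j}|^2+o(\max_k\|\delta\rho_k\|^2).
\]
It depends only on $\delta\rho_1-\delta\rho_2$, because the Hessian is a quadratic form vanishing on the diagonal, and it is symmetric under exchanging the two arguments, which gives invariance under $s\leftrightarrow1-s$. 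The kernel can be read off from the mixed term $\partial_1\partial_2$, which is the product of two first derivatives:
\[
-\sum_{i,j}(t^s)^{[1]}(\lambda_i,\lambda_j)\,(t^{1-s})^{[1]}(\lambda_j,\lambda_i)\,\braket{i|X|j}\braket{j|Y|i}.
\]
This identifies $K_s(\lambda_i,\lambda_j)=\frac{(\lambda_i^s-\lambda_j^s)(\lambda_i^{1-s}-\lambda_j^{1-s})}{(\lambda_i-\lambda_j)^2}$, which is $s(1-s)$ times the $\WYD(s)$ kernel. So the quadratic term equals $-\tfrac{s(1-s)}{2}J^{\WYD(s)}_{\rho_0}[\Delta,\Delta]$ with $\Delta=\delta\rho_1-\delta\rho_2$. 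The remainder is uniform in $s\in[0,1]$ because $t^s$ is smooth on a compact neighbourhood of the spectrum of $\rho_0$, which stays away from $0$ since $\rho_0$ is full rank.

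Next comes the optimization over $s$. We have $-\ln\min_s Q_s=\max_s(-\ln Q_s)=\max_s\frac{s(1-s)}{2}J^{\WYD(s)}_{\rho_0}[\Delta,\Delta]+o(\cdot)$. Uniformity of the remainder lets me swap the max with the expansion. I then need the maximizer of $g(s)=s(1-s)J^{\WYD(s)}[\Delta,\Delta]=\sum_{i,j}K_s(\lambda_i,\lambda_j)|\Delta_{ij}|^2$ to be $s=1/2$. For $\lambda_i\neq\lambda_j$, set $a=\ln(\lambda_i/\lambda_j)$; then
\[
(\lambda_i^s-\lambda_j^s)(\lambda_i^{1-s}-\lambda_j^{1-s})=\lambda_i+\lambda_j-\lambda_i^s\lambda_j^{1-s}-\lambda_i^{1-s}\lambda_j^s=\lambda_i+\lambda_j-2\sqrt{\lambda_i\lambda_j}\cosh\bigl((s-\tfrac12)a\bigr).
\]
This is maximized at $s=1/2$ for every pair. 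For $\lambda_i=\lambda_j$ the kernel is $s(1-s)/\lambda_i$, also maximized at $s=1/2$. So $s=1/2$ maximizes every term simultaneously, and the value is $\frac{1}{8}J^{\WYD(1/2)}_{\rho_0}[\Delta,\Delta]$, since $K_{1/2}=\frac14\cdot\frac{4}{(\sqrt{\lambda_i}+\sqrt{\lambda_j})^2}$. Finally, $-\ln(1-x)=x+O(x^2)$ handles the logarithm.

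I expect the main obstacle to be the uniform second-order remainder in $s$ near the endpoints $s\in\{0,1\}$, together with verifying that the diagonal Hessian terms combine with the mixed term exactly into a function of $\Delta$. For the latter I would rely on the vanishing along $\delta\rho_1=\delta\rho_2$ rather than computing $\partial_1^2$ directly. For the former I would use a contour-integral (holomorphic functional calculus) representation of $\rho^s$, which gives $C^3$ bounds uniform in $s$.
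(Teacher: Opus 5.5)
Your route differs from the paper's and is essentially sound. However, the step asserting that the Hessian of $Q_s$ depends only on $\Delta=\delta\rho_1-\delta\rho_2$ is not justified by what you cite.

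\textbf{The gap.} A quadratic form on pairs that vanishes on the diagonal need not be a function of the difference: $x^2-y^2$ vanishes at $x=y$. You also cannot invoke symmetry under exchanging the arguments. The exchange sends $Q_s$ to $Q_{1-s}$, so for fixed $s$ no such symmetry is available a priori.

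\textbf{The fix.} The missing ingredient is that each first partial derivative vanishes at \emph{every} diagonal point, and your own computation already gives this. For every full-rank $\rho$ one has $\partial_1Q_s(\rho,\rho)[Z]=s\Tr Z$ and $\partial_2Q_s(\rho,\rho)[Z]=(1-s)\Tr Z$, so both vanish for traceless $Z$. Let $M(X,Y)=\sum_{i,j}K_s(\lambda_i,\lambda_j)\braket{i|X|j}\braket{j|Y|i}$ denote the mixed derivative $\partial_1\partial_2Q_s$. Differentiating the two vanishing identities along the diagonal gives $\partial_1^2Q_s[X,X]=-M(X,X)$ and $\partial_2^2Q_s[Y,Y]=-M(Y,Y)$. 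Since $M$ is symmetric, the quadratic term is $\frac12(-M(X,X)+2M(X,Y)-M(Y,Y))=-\frac12M(\Delta,\Delta)$. This is exactly the device the paper uses for $F$-divergences in Proposition~\ref{propTaylorFdiv}.

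\textbf{Minor points.} $M$ enters with a plus sign, since it is a product of two first derivatives. The minus in your display belongs to the quadratic form, not to $\partial_1\partial_2$. Your worry about the endpoints $s\in\{0,1\}$ is unnecessary: $(s,t)\mapsto t^s=e^{s\ln t}$ is jointly analytic on $[0,1]\times(0,\infty)$, so Taylor remainders in $(\delta\rho_1,\delta\rho_2)$ are uniform in $s$.

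\textbf{Comparison with the paper.} With that repair, your argument is a self-contained derivation. The paper's proof is short because it outsources the work. It quotes the expansion of $C(\rho_1,\rho_2)$ around $(\rho_1,\rho_1)$ from \cite{Audenaert,Calsamiglia}, where the optimization over $s$ is already done. It then moves the base point from $\rho_1$ to $\rho_0$ via Lipschitz continuity of $\rho\mapsto J_\rho^{\WYD(1/2)}$. That base-point shift tacitly needs the quoted $o(\|\rho_2-\rho_1\|^2)$ remainder to be uniform in $\rho_1$ near $\rho_0$, which the paper does not address. Expanding directly at the fixed point $(\rho_0,\rho_0)$, as you do, avoids the issue. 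Your $\cosh$ argument also makes the optimization over $s$ elementary, since each kernel entry is maximized at $s=1/2$. Finally, the same computation yields Lemma~\ref{lemTr}, uniformly in $s$.
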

\begin{proof}
	By \cite[p.~4]{Audenaert}, \cite[Eq.~(46)]{Calsamiglia}, 
	the Chernoff information $C(\rho_1,\rho_2)$ has the second-order Taylor expansion around $(\rho_1,\rho_1)$: 
	\begin{equation}
		C(\rho_1,\rho_2)
		= \frac{1}{8}J_{\rho_1}^{\WYD(1/2)}[\rho_2-\rho_1,\rho_2-\rho_1]
		+ o( \|\rho_2-\rho_1\|^2 ).
		\label{eq40}
	\end{equation}
	Since the map $\rho\mapsto J_{\rho}^{\WYD(1/2)}$ is smooth, 
	it is Lipschitz continuous on every compact neighborhood of $\rho_0$.
	Thus, there exist positive numbers $C$ and $\delta_0$ such that 
	\[
	\abs{J_{\rho_1}^{\WYD(1/2)}[X,X] - J_{\rho_0}^{\WYD(1/2)}[X,X]}
	\le C\,\|\delta\rho_1\|\,\|X\|^2
	\]
	for every Hermitian matrix $X$ and every traceless Hermitian matrix $\delta\rho_1$ satisfying $\|\delta\rho_1\|\le\delta_0$.
	This, $\rho_2-\rho_1=\delta\rho_2-\delta\rho_1$, and \eqref{eq40} yield that 
	\begin{align*}
		C(\rho_1,\rho_2)
		&= \frac{1}{8}J_{\rho_0}^{\WYD(1/2)}[\rho_2-\rho_1,\rho_2-\rho_1]
		+ O(\|\delta\rho_1\|\,\|\rho_2-\rho_1\|^2)
		+ o( \|\rho_2-\rho_1\|^2 )\\
		&= \frac{1}{8}J_{\rho_0}^{\WYD(1/2)}[\delta\rho_2-\delta\rho_1,\delta\rho_2-\delta\rho_1]
		+ o( \max\{ \|\delta\rho_1\|^2,\|\delta\rho_2\|^2 \} )
	\end{align*}
	as $\|\delta\rho_1\|,\|\delta\rho_2\|\to0$.
\end{proof}

\begin{lemma}[cf.\ {\cite[Theorem~4]{Grace}}]\label{lemTr}
	Let $\rho_0,\rho_1,\rho_2$ be full-rank quantum states on $\mathbb{C}^d$.
	Set $\delta\rho_k=\rho_k-\rho_0$ for $k\in[2]$.
	Then, for every $s\in(0,1)$, 
	the function $\Tr\rho_1^s\rho_2^{1-s}$ has the second-order Taylor expansion around $(\rho_0,\rho_0)$: 
	\[
	\Tr\rho_1^s\rho_2^{1-s}
	= 1 - \frac{s(1-s)}{2}J_{\rho_0}^{\WYD(s)}[\delta\rho_1-\delta\rho_2,\delta\rho_1-\delta\rho_2]
	+ o( \max\{ \|\delta\rho_1\|^2,\|\delta\rho_2\|^2 \} )
	\]
	as $\|\delta\rho_1\|,\|\delta\rho_2\|\to0$.
\end{lemma}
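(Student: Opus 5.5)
The plan is to reduce Lemma~\ref{lemTr} to a one-point expansion around $(\rho_1,\rho_1)$, in the same way Lemma~\ref{lemChernoff} was obtained from \eqref{eq40}. Then I would transport the base point from $\rho_1$ to $\rho_0$ by Lipschitz continuity of the metric.

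\textbf{Step 1: expansion at $(\rho_1,\rho_1)$.} Fix $\rho_1$ and regard $g(\rho_2)=\Tr\rho_1^s\rho_2^{1-s}$ as a function of $\rho_2=\rho_1+X$, with $X$ traceless Hermitian. Diagonalize $\rho_1=\sum_i\lambda_i\ketbra{i}{i}$ and use the Daleckii--Krein second-order expansion of the matrix function $t\mapsto t^{1-s}$, as in the trace-function expansion cited before \eqref{eq07}. The zeroth-order term is $\Tr\rho_1=1$. The first-order term is $\Tr\rho_1^s\,D(t^{1-s})(\rho_1)[X]=\sum_i\lambda_i^s(1-s)\lambda_i^{-s}X_{ii}=(1-s)\Tr X=0$. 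The second-order term is a weighted sum $\sum_{i,j}w_{ij}\abs{X_{ij}}^2$.

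\textbf{Step 2: identify the weights.} Compute $w_{ij}$ from the second divided differences of $t^{1-s}$, contracted with $\lambda_k^s$. By symmetry this gives $w_{ij}=-\frac{(\lambda_i^s-\lambda_j^s)(\lambda_i^{1-s}-\lambda_j^{1-s})}{2(\lambda_i-\lambda_j)^2}$ for $\lambda_i\neq\lambda_j$, and its continuous extension otherwise. Comparing with the definition of $J^{\WYD(s)}$, the quadratic term equals $-\frac{s(1-s)}{2}J_{\rho_1}^{\WYD(s)}[X,X]$.

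I expect this to be the main obstacle: checking the divided-difference bookkeeping and reducing the second divided difference to the stated symmetric form. As a shortcut, I would first verify the result in the commuting case, where a scalar Taylor expansion of $a^sb^{1-s}$ gives $-\frac{s(1-s)}{2}(b-a)^2/a$. For the off-diagonal entries, I would use the two-dimensional reduction, or the integral representation of $t^{1-s}$, to confirm the weight. Alternatively, I could cite \cite[Theorem~4]{Grace}, or the Lesniewski--Ruskai identification of $f$-divergence Hessians (the Petz $F$-divergence with $F(t)=-t^s$ up to normalization, via \eqref{eq05}). Since the metric $J^{\WYD(s)}$ is defined exactly by that function, this check would amount to verifying \eqref{eq05} for that $F$. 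Step 1 thus gives $g(\rho_2)=1-\frac{s(1-s)}{2}J_{\rho_1}^{\WYD(s)}[X,X]+o(\|X\|^2)$.

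\textbf{Step 3: move the base point.} The remainder in Step 1 is uniform for $\rho_1$ in a compact neighborhood of $\rho_0$, since $t^{1-s}$ is smooth on a compact interval of positive reals containing all the spectra. The map $\rho\mapsto J_\rho^{\WYD(s)}$ is smooth near the full-rank state $\rho_0$, hence locally Lipschitz. Then, exactly as in the proof of Lemma~\ref{lemChernoff},
\[
\abs{J_{\rho_1}^{\WYD(s)}[X,X]-J_{\rho_0}^{\WYD(s)}[X,X]}=O(\|\delta\rho_1\|\,\|X\|^2).
\]
Substituting $X=\rho_2-\rho_1=\delta\rho_2-\delta\rho_1$, and noting that $\|X\|^2=O(\max\{\|\delta\rho_1\|^2,\|\delta\rho_2\|^2\})$, yields the claimed expansion. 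The metric is unchanged when $X$ is replaced by $-X$.
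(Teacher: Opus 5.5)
Your proposal is correct and follows the paper's route. You expand around $(\rho_1,\rho_1)$ and then move the base point to $\rho_0$ using local Lipschitz continuity of $\rho\mapsto J_\rho^{\WYD(s)}$, exactly as in the proof of Lemma~\ref{lemChernoff}. The paper obtains the one-point expansion by citing \cite[Eq.~(46)]{Calsamiglia} for $-\ln\Tr\rho_1^s\rho_2^{1-s}$ and exponentiating, whereas you derive it directly from the Daleckii--Krein formula (your weights $w_{ij}$ are correct), and you make explicit the uniformity of the remainder in $\rho_1$ that the paper uses tacitly. Your fallback also works if you take $F(t)=1-t^s$ rather than $-t^s$: then $D_F(\rho_1\|\rho_2)=1-\Tr\rho_1^s\rho_2^{1-s}$ and \eqref{eq05} gives $J_{\rho_0}^f=s(1-s)J_{\rho_0}^{\WYD(s)}$, so \eqref{eqTaylorFdiv} gives the expansion at $(\rho_0,\rho_0)$ directly.
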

\begin{proof}
	Let $s\in(0,1)$.
	By \cite[Eq.~(46)]{Calsamiglia}, 
	the function $-\ln\Tr\rho_1^s\rho_2^{1-s}$ has the second-order Taylor expansion around $(\rho_1,\rho_1)$: 
	\[
	-\ln\Tr\rho_1^s\rho_2^{1-s}
	= \frac{s(1-s)}{2}J_{\rho_1}^{\WYD(s)}[\rho_2-\rho_1,\rho_2-\rho_1]
	+ o(\|\rho_2-\rho_1\|^2).
	\]
	Thus, 
	\begin{align*}
		\Tr\rho_1^s\rho_2^{1-s}
		&= \exp(\ln\Tr\rho_1^s\rho_2^{1-s})
		= 1 + \ln\Tr\rho_1^s\rho_2^{1-s} + o(\|\rho_2-\rho_1\|^2)\\
		&= 1 - \frac{s(1-s)}{2}J_{\rho_1}^{\WYD(s)}[\rho_2-\rho_1,\rho_2-\rho_1] + o(\|\rho_2-\rho_1\|^2).
	\end{align*}
	The remainder of the proof is the same as the proof of Lemma~\ref{lemChernoff}.
\end{proof}

\begin{lemma}\label{lemQLDP}
	Let $\vec{\rho}$ be an $\epsilon$-QLDP mechanism, and let $\eta\in(0,1]$.
	For $k\in[n]$, define $\tilde{\rho}_k$ as \eqref{eq11}, i.e., 
	$\tilde{\rho}_k=\eta\rho_k+(1-\eta)\rho_{\avg}$.
	Then the collection of $\tilde{\rho}_1,\ldots,\tilde{\rho}_n$ is $\eta\epsilon(1+\sqrt{\epsilon})$-QLDP as $\epsilon\to+0$.
\end{lemma}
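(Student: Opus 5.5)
The plan is to verify the QLDP inequality $\tilde{\rho}_{j} \le e^{\epsilon'}\tilde{\rho}_k$ directly, with $\epsilon'=\eta\epsilon(1+\sqrt{\epsilon})$, using only the operator inequalities that come from \eqref{eq01}. Fix distinct $j,k\in[n]$. Since $\tilde{\rho}_k=\eta\rho_k+(1-\eta)\rho_{\avg}$, we have
\[
\tilde{\rho}_j-\tilde{\rho}_k=\eta(\rho_j-\rho_k).
\]
So it suffices to show $\eta(\rho_j-\rho_k)\le (e^{\epsilon'}-1)\tilde{\rho}_k$.

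\textbf{Step 1 (upper bound on the difference).} By \eqref{eq01}, $\rho_j\le e^\epsilon\rho_k$, hence $\eta(\rho_j-\rho_k)\le \eta(e^\epsilon-1)\rho_k$.

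\textbf{Step 2 (lower bound on $\tilde{\rho}_k$ in terms of $\rho_k$).} Averaging $\rho_k\le e^\epsilon\rho_x$ over $x\in[n]$ gives $\rho_{\avg}\ge e^{-\epsilon}\rho_k$. Therefore
\[
\tilde{\rho}_k\ge \bigl(\eta+(1-\eta)e^{-\epsilon}\bigr)\rho_k\ge e^{-\epsilon}\rho_k,
\]
and so $\rho_k\le e^{\epsilon}\tilde{\rho}_k$. Combining with Step 1 yields
\[
\tilde{\rho}_j-\tilde{\rho}_k\le \eta(e^\epsilon-1)e^\epsilon\,\tilde{\rho}_k .
\]

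\textbf{Step 3 (scalar comparison, uniform in $\eta$).} It remains to check $\eta(e^\epsilon-1)e^\epsilon\le e^{\eta\epsilon(1+\sqrt{\epsilon})}-1$ for all sufficiently small $\epsilon$, uniformly in $\eta\in(0,1]$.

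For the right-hand side, $e^t-1\ge t$ gives
\[
e^{\eta\epsilon(1+\sqrt{\epsilon})}-1\ge \eta\epsilon(1+\sqrt{\epsilon}).
\]
For the left-hand side, $(e^\epsilon-1)e^\epsilon=\epsilon+\tfrac{3}{2}\epsilon^2+O(\epsilon^3)\le \epsilon(1+2\epsilon)$ for small $\epsilon$, so the left side is at most $\eta\epsilon(1+2\epsilon)$. Since $2\epsilon\le\sqrt{\epsilon}$ once $\epsilon\le 1/4$, the inequality holds for all sufficiently small $\epsilon$, independently of $\eta$.

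Hence $\tilde{\rho}_j\le e^{\eta\epsilon(1+\sqrt{\epsilon})}\tilde{\rho}_k$ for all $j,k$, which is the assertion. The only delicate point is Step 2: one must bound $\tilde{\rho}_k$ from below by a multiple of $\rho_k$, which needs the averaged inequality $\rho_{\avg}\ge e^{-\epsilon}\rho_k$. The resulting extra factor $e^\epsilon$ costs only $O(\epsilon^2)$, and the $\sqrt{\epsilon}$ slack in $\epsilon'$ absorbs it.
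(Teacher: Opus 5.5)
Your proof is correct, but it takes a different route from the paper's.

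\textbf{The paper's route.} The paper writes
$\tilde{\rho}_i=\eta e^{\epsilon'-\epsilon}\rho_i+\eta(1-e^{\epsilon'-\epsilon})\rho_i+(1-\eta)\rho_{\avg}$ with $\epsilon'=\eta\epsilon(1+\sqrt{\epsilon})$. It bounds the first piece via $\rho_i\le e^\epsilon\rho_j$ and the second via $\rho_i\le e^\epsilon\rho_{\avg}$. This requires $1-e^{\epsilon'-\epsilon}\ge 0$, so the paper needs a case split: for $\eta\ge(1+\sqrt{\epsilon})^{-1}$ it simply uses that $\tilde{\rho}_1,\ldots,\tilde{\rho}_n$ is already $\epsilon$-QLDP. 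It then finishes with an asymptotic comparison of $\eta(e^\epsilon-e^{\epsilon'})$ with $(1-\eta)(e^{\epsilon'}-1)$.

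\textbf{Your route.} You work with the difference $\tilde{\rho}_j-\tilde{\rho}_k=\eta(\rho_j-\rho_k)$. You need only the lower bound $\tilde{\rho}_k\ge e^{-\epsilon}\rho_k$, which rests on the same averaged inequality $\rho_k\le e^\epsilon\rho_{\avg}$. This removes the case split and gives an explicit non-asymptotic statement: $\tilde{\rho}_1,\ldots,\tilde{\rho}_n$ is $\ln\bigl(1+\eta e^\epsilon(e^\epsilon-1)\bigr)$-QLDP for every $\epsilon>0$ and $\eta\in(0,1]$. Your threshold on $\epsilon$ is visibly uniform in $\eta$ and in $\vec{\rho}$, which is exactly the uniformity the later applications need.

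\textbf{What each buys.} Your price is the extra factor $e^\epsilon$, an $O(\eta\epsilon^2)$ loss in the exponent that the $\sqrt{\epsilon}$ slack absorbs. The paper's decomposition is slightly sharper if you solve its scalar condition exactly instead of expanding. The condition reduces to $e^{\epsilon'}\ge 1+\eta(e^\epsilon-1)$, i.e., the parameter $\ln\bigl(1+\eta(e^\epsilon-1)\bigr)$. Since this parameter is always at most $\epsilon$, that choice would also make the paper's case split unnecessary.
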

\begin{proof}
	Since the collection of $\tilde{\rho}_1,\ldots,\tilde{\rho}_n$ is $\epsilon$-QLDP, 
	the assertion is clear whenever $\eta\ge(1+\sqrt{\epsilon})^{-1}$.
	Assume $\eta<(1+\sqrt{\epsilon})^{-1}$ below.
	By \eqref{eq01}, 
	\begin{align*}
		\tilde{\rho}_i
		&= \eta e^{\eta\epsilon(1+\sqrt{\epsilon})-\epsilon}\rho_i + \eta(1-e^{\eta\epsilon(1+\sqrt{\epsilon})-\epsilon})\rho_i + (1-\eta)\rho_{\avg}\\
		&\le \eta e^{\eta\epsilon(1+\sqrt{\epsilon})}\rho_j + \eta(e^\epsilon-e^{\eta\epsilon(1+\sqrt{\epsilon})})\rho_{\avg} + (1-\eta)\rho_{\avg}.
	\end{align*}
	Thus, it suffices to show that 
	$\eta(e^\epsilon-e^{\eta\epsilon(1+\sqrt{\epsilon})}) < (1-\eta)(e^{\eta\epsilon(1+\sqrt{\epsilon})}-1)$ 
	as $\epsilon\to+0$.
	Since 
	\begin{align*}
		\eta(e^\epsilon-e^{\eta\epsilon(1+\sqrt{\epsilon})})
		&= \eta(\epsilon-\eta\epsilon(1+\sqrt{\epsilon})) + O(\eta\epsilon^2)\\
		&= \eta(1-\eta)\epsilon - \eta^2\epsilon^{3/2} + O(\eta\epsilon^2)\quad\text{and}\\
		(1-\eta)(e^{\eta\epsilon(1+\sqrt{\epsilon})}-1)
		&= (1-\eta)\eta\epsilon(1+\sqrt{\epsilon}) + O(\eta\epsilon^2)\\
		&= \eta(1-\eta)\epsilon + \eta(1-\eta)\epsilon^{3/2} + O(\eta\epsilon^2)
	\end{align*}
	as $\epsilon\to+0$, we obtain 
	$\eta(e^\epsilon-e^{\eta\epsilon(1+\sqrt{\epsilon})}) < (1-\eta)(e^{\eta\epsilon(1+\sqrt{\epsilon})}-1)$ 
	as $\epsilon\to+0$.
\end{proof}

We now prove Theorem~\ref{main1'} in two steps: 
first for the asymmetric case, and then for the symmetric case.
Theorems~\ref{main2'} and \ref{main3'} are proved in the course of the proof of Theorem~\ref{main1'}.

\begin{proof}[Proof of Theorem~$\ref{main1'}$ for $A_\C^\eta(n,\epsilon)$ and $A_\Q^\eta(n,\epsilon)$]
	Let $\vec{\rho}$ be an $\epsilon$-QLDP mechanism.
	Recall that $\rho_{\avg}$ is defined as $\rho_{\avg}=(1/n)\sum_{x\in[n]} \rho_x$.
	Set $\delta\rho_x=\rho_x-\rho_{\avg}$ for $x\in[n]$.
	Define the utility function $\Psi_\Q$ as 
	\begin{equation*}
		\Psi_\Q(\vec{\rho})
		= \frac{1}{n}\sum_{x\in[n]} D( \rho_x \| \rho_{\avg} )
		= \chi(p_{\mix}; \vec{\rho}),
	\end{equation*}
	where $\chi(p_{\mix}; \vec{\rho})$ is the Holevo information \eqref{eq09}.
	\par
	For $k\in[n]$, define $\tilde{\rho}_k$ as \eqref{eq11}, i.e., 
	$\tilde{\rho}_k=\eta\rho_k+(1-\eta)\rho_{\avg}$.
	Lemma~\ref{lemQLDP} implies that the collection of $\tilde{\rho}_1,\ldots,\tilde{\rho}_n$ is $\eta\epsilon(1+\sqrt{\epsilon})$-QLDP as $\epsilon\to+0$.
	By $(1/n)\sum_{x\in[n]} \tilde{\rho}_x=\rho_{\avg}$, we have 
	\[
	A^\eta(\vec{\rho})
	= \min_{x\in[n]} D( \tilde{\rho}_x \| \rho_{\avg} )
	\le \Psi_\Q(\tilde{\rho}_1,\ldots,\tilde{\rho}_n).
	\]
	Therefore, Corollary~\ref{coro1} implies that 
	\begin{equation*}
		A_\Q^\eta(n,\epsilon)
		\le \OPT_n(\eta\epsilon(1+\sqrt{\epsilon}); \Psi_\Q)
		= \frac{n-1}{4n}\eta^2\epsilon^2 + o(\eta^2\epsilon^2)
	\end{equation*}
	as $\epsilon\to+0$.
	Also, for the isoclinic mechanism $\vec{\sigma}^*$, 
	it can easily be checked that 
	(i) $(1/n)\sum_{x\in[n]} \sigma_x^*=d^{-1}I_d$, and 
	(ii) $D( \tilde{\sigma}_x^* \| d^{-1}I_d )$ is independent of $x\in[n]$.
	These and Corollary~\ref{coro2} imply that 
	\[
	A_\Q^\eta(n,\epsilon)
	\ge A^\eta(\vec{\sigma}^*)
	= \Psi_\Q(\tilde{\sigma}_1^*,\ldots,\tilde{\sigma}_n^*)
	= \frac{n-1}{4n}\eta^2\epsilon^2 + o(\eta^2\epsilon^2)
	\]
	as $\epsilon\to+0$.
	Therefore, \eqref{eqAQ} holds.
	We also obtain \eqref{eqAC} in the same way.
\end{proof}

\begin{proof}[Proof of Theorem~$\ref{main1'}$ for $S_\C^\eta(n,\epsilon)$ and $S_\Q^\eta(n,\epsilon)$]
	Let $\vec{\rho}$ be an $\epsilon$-QLDP mechanism.
	Recall that $\rho_{\avg}$ is defined as $\rho_{\avg}=(1/n)\sum_{x\in[n]} \rho_x$.
	Set $\delta\rho_x=\rho_x-\rho_{\avg}$ for $x\in[n]$.
	Define the utility functions $\Psi_\Q,\Psi_\C$ and the sublinear function $\psi$ as 
	\begin{align*}
		\Psi_\Q(\vec{\rho}) &= \frac{-1}{n(n-1)}\sum_{i\not=j} \Tr\rho_i^{1/2}\rho_j^{1/2},\\
		\Psi_\C(q) &= \sum_{y:\,q(y|1),\ldots,q(y|n)>0} \psi(q(y|1),\ldots,q(y|n)),\quad\text{and}\\
		\psi(\mathbf{z}) &= \frac{-1}{n(n-1)}\sum_{i\not=j} \sqrt{z_iz_j},
	\end{align*}
	where $z_{\avg}=(1/n)\sum_{i\in[n]} z_i$.
	Then $\Psi_\Q$ is a quantum extension of $\Psi_\C$ (but it does not necessarily satisfy the data processing inequality).
	By \eqref{eq01}, it follows that $\|\delta\rho_x\|=O(\epsilon)$ for every $x\in[n]$.
	By Lemma~\ref{lemTr}, the utility function $\Psi_\Q$ has the second-order Taylor expansion around $(\rho_{\avg},\ldots,\rho_{\avg})$: 
	\begin{align*}
		\Psi_\Q(\vec{\rho})
		&= \frac{-1}{n(n-1)}\sum_{i\not=j} \biggl( 1-\frac{1}{8}J_{\rho_0}^{\WYD(1/2)}[\delta\rho_i-\delta\rho_j,\delta\rho_i-\delta\rho_j]
		+ o\Bigl( \max_{i\in[n]} \|\eta\delta\rho_i\|^2 \Bigr) \biggr)\\
		&= -1 + \frac{1}{8n(n-1)}\sum_{i\not=j} J_{\rho_0}^{\WYD(1/2)}[\delta\rho_i-\delta\rho_j,\delta\rho_i-\delta\rho_j]
		+ o(\epsilon^2)\\
		&= -1 + \frac{1}{4(n-1)}\sum_{i\in[n]} J_{\rho_0}^{\WYD(1/2)}[\delta\rho_i,\delta\rho_i]
		+ o(\epsilon^2),
	\end{align*}
	where we have used $\sum_{x\in[n]} \delta\rho_x=0$ to obtain the last equality.
	By this and $\partial_1^2\psi(\mathbbm{1}_n)=(2n)^{-1}>0$, 
	the functions $\Psi_\Q$ and $\psi$ satisfy all the assumptions of Theorem~\ref{main1}.
	\par
	For $k\in[n]$, define $\tilde{\rho}_k$ as \eqref{eq11}, i.e., 
	$\tilde{\rho}_k=\eta\rho_k+(1-\eta)\rho_{\avg}$.
	Lemma~\ref{lemQLDP} implies that the collection of $\tilde{\rho}_1,\ldots,\tilde{\rho}_n$ is $\eta\epsilon(1+\sqrt{\epsilon})$-QLDP as $\epsilon\to+0$.
	By Lemmas~\ref{lemChernoff} and \ref{lemTr}, 
	\begin{align*}
		S^\eta(\vec{\rho})
		&= \min_{i\not=j} C( \tilde{\rho}_i, \tilde{\rho}_j )
		= \min_{i\not=j} \Bigl( 1-\Tr\tilde{\rho}_i^{1/2}\tilde{\rho}_j^{1/2} \Bigr)
		+ o\Bigl( \max_{i\in[n]} \|\eta\delta\rho_i\|^2 \Bigr)\\
		&\le 1+\Psi_\Q(\tilde{\rho}_1,\ldots,\tilde{\rho}_n) + o(\eta^2\epsilon^2)
	\end{align*}
	as $\epsilon\to+0$.
	Therefore, Theorem~\ref{main1} implies that 
	\begin{align*}
		S_\Q^\eta(n,\epsilon)
		&\le 1+\OPT_n(\eta\epsilon(1+\sqrt{\epsilon}); \Psi_\Q) + o(\eta^2\epsilon^2)
		= \frac{n}{4}\cdot\frac{1}{2n}\eta^2\epsilon^2 + o(\eta^2\epsilon^2)\\
		&= \frac{1}{8}\eta^2\epsilon^2 + o(\eta^2\epsilon^2)
	\end{align*}
	as $\epsilon\to+0$.
	Also, for the isoclinic mechanism $\vec{\sigma}^*$, 
	it can easily be checked that 
	$\Tr(\tilde{\sigma}_i^*)^{1/2}(\tilde{\sigma}_j^*)^{1/2}$ is independent of $i\not=j$.
	This, Theorem~\ref{main2}, Lemmas~\ref{lemChernoff} and \ref{lemTr} imply that 
	\begin{align*}
		S_\Q^\eta(n,\epsilon)
		&\ge S^\eta(\vec{\sigma}^*)
		= \min_{i\not=j} \Bigl( 1-\Tr(\tilde{\sigma}_i^*)^{1/2}(\tilde{\sigma}_j^*)^{1/2} \Bigr) + o(\eta^2\epsilon^2)\\
		&= 1+\Psi_\Q(\tilde{\sigma}_1^*,\ldots,\tilde{\sigma}_n^*) + o(\eta^2\epsilon^2)
		= \frac{1}{8}\eta^2\epsilon^2 + o(\eta^2\epsilon^2)
	\end{align*}
	as $\epsilon\to+0$.
	Therefore, \eqref{eqSQ} holds.
	We also obtain \eqref{eqSC} in the same way.
\end{proof}

\begin{proof}[Proofs of Theorems~$\ref{main2'}$ and $\ref{main3'}$]
	The desired equalities have already been proved in the proof of Theorem~\ref{main1'}.
\end{proof}

\section{Proofs of Lemmas~$\ref{lemC1}$ and $\ref{lemQ1}$}\label{appD}
This appendix proves Lemmas~\ref{lemC1} and \ref{lemQ1}.
As before, let $n,d\ge2$, $r\in[d-1]$, and $\epsilon>0$, 
representing cardinality, dimension, rank, and privacy parameter, respectively.
As stated in Section~\ref{exact}, regarding Lemma~\ref{lemC1}, 
we only prove the inequality $\underline{S}_\C^\eta(n,\epsilon) \le S_\C^\eta(n,\epsilon)$.

\begin{proof}[Proof of $\underline{S}_\C^\eta(n,\epsilon) \le S_\C^\eta(n,\epsilon)$]
	Fix an integer $k\in[n-1]$, and let $\cS(n,k)$ be the set of all sets consisting of $k$ distinct elements in $[n]$.
	Define the $\epsilon$-LDP mechanism $Q=[Q(A|x)]_{A\in\cS(n,k),x\in[n]}$ as 
	\[
	Q(A|x) = \frac{1+(e^\epsilon-1)\mathbbm{1}_A(x)}{\binom{n-1}{k-1}e^\epsilon + \binom{n-1}{k}},
	\]
	where $\mathbbm{1}_A$ denotes the indicator function of a set $A$.
	Recall that $\tilde{q}$ is defined as $\tilde{q}=q[p_1^\eta,\ldots,p_n^\eta]$.
	From this, $\binom{n-1}{k}=\frac{n-k}{k}\binom{n-1}{k-1}$, and $\frac{n}{k}\binom{n-1}{k-1}=\binom{n}{k}$, 
	it follows that 
	\begin{equation}
	\begin{split}
		\tilde{Q}(A|x')
		&= \sum_{x\in[n]} Q(A|x)p_{x'}^\eta(x)
		= \eta Q(A|x') + \frac{1-\eta}{n}\sum_{x\in[n]} Q(A|x)\\
		&= \eta Q(A|x') + \frac{1-\eta}{n}\cdot\frac{n+(e^\epsilon-1)k}{\binom{n-1}{k-1}e^\epsilon + \binom{n-1}{k}}\\
		&= \eta Q(A|x') + (1-\eta)\binom{n}{k}^{-1}
		= \Delta_{k/n}^\eta(\mathbbm{1}_A(x'))\binom{n}{k}^{-1}
	\end{split}\label{eq37}
	\end{equation}
	for every $x'\in[n]$ and every $A\in\cS(n,k)$.
	Thus, 
	\begin{align*}
		\Tr\Diag_i(\tilde{Q})^s\Diag_j(\tilde{Q})^{1-s}
		&= \sum_{A\in\cS(n,k)} \tilde{Q}(A|i)^s\tilde{Q}(A|j)^{1-s}\\
		&= \binom{n}{k}^{-1}\biggl( \sum_{\substack{A\in\cS(n,k) \\ A\ni i,j}} \Delta_{k/n}^\eta(1)
		+ \sum_{\substack{A\in\cS(n,k) \\ A\not\ni i,j}} \Delta_{k/n}^\eta(0)\\
		&\qquad+ \sum_{\substack{A\in\cS(n,k) \\ A\ni i,A\not\ni j}} \Delta_{k/n}^\eta(1)^s\Delta_{k/n}^\eta(0)^{1-s}
		+ \sum_{\substack{A\in\cS(n,k) \\ A\not\ni i,A\ni j}} \Delta_{k/n}^\eta(0)^s\Delta_{k/n}^\eta(1)^{1-s} \biggr)
	\end{align*}
	for all $i\not=j$ and all $s\in[0,1]$.
	By the AM-GM inequality, the above function of $s\in[0,1]$ achieves the minimum at $s=1/2$. 
	This and \eqref{eq37} yield that 
	\begin{align*}
		\min_{s\in[0,1]} \Tr\Diag_i(\tilde{Q})^s\Diag_j(\tilde{Q})^{1-s}
		&= \sum_{A\in\cS(n,k)} \tilde{Q}(A|i)^{1/2}\tilde{Q}(A|j)^{1/2}\\
		&= 1 - \frac{1}{2}\sum_{A\in\cS(n,k)} \Bigl( \sqrt{\tilde{Q}(A|i)} - \sqrt{\tilde{Q}(A|j)} \Bigr)^2\\
		&= 1 - \frac{1}{2}\binom{n}{k}^{-1}\sum_{\substack{A\in\cS(n,k) \\ \mathbbm{1}_A(i)+\mathbbm{1}_A(j)=1}} \Bigl( \sqrt{\Delta_{k/n}^\eta(1)} - \sqrt{\Delta_{k/n}^\eta(0)} \Bigr)^2\\
		&= 1 - \frac{1}{2}\binom{n}{k}^{-1}\cdot2\binom{n-2}{k-1}\Bigl( \sqrt{\Delta_{k/n}^\eta(1)} - \sqrt{\Delta_{k/n}^\eta(0)} \Bigr)^2\\
		&= 1 - \frac{k(n-k)}{n(n-1)}\Bigl( \sqrt{\Delta_{k/n}^\eta(1)} - \sqrt{\Delta_{k/n}^\eta(0)} \Bigr)^2
	\end{align*}
	for all $i\not=j$.
	By this and \eqref{eq10'}, we obtain 
	\begin{align*}
		S_\C^\eta(n,\epsilon)
		&\ge S^\eta(Q) = S(Q;p_1^\eta,\ldots,p_n^\eta)\\
		&= -\ln\biggl( 1 - \frac{k(n-k)}{n(n-1)}\Bigl( \sqrt{\Delta_{k/n}^\eta(1)} - \sqrt{\Delta_{k/n}^\eta(0)} \Bigr)^2 \biggr).
	\end{align*}
	Since $k\in[n-1]$ is arbitrary, the desired inequality holds.
\end{proof}

We next prove Lemma~\ref{lemQ1} in two steps: 
first for the symmetric case, and then for the asymmetric case.

\begin{proof}[Proof of Lemma~$\ref{lemQ1}$ for $S^\eta(\vec{\sigma})$]
	For simplicity, write $\tilde{\sigma}_k$ and $\mu^{[\eta]}$ as $\sigma_k$ and $\mu$, respectively, in the following calculation.
	Let us calculate 
	\[
	\max_{i\not=j} \min_{s\in[0,1]} \Tr(\sigma_i^s\sigma_j^{1-s}).
	\]
	By \eqref{eq16}, 
	\begin{align*}
		\sigma_i^s &= \Bigl( \frac{\mu}{d} \Bigr)^sI_d + \Bigl( \Bigl( \frac{\mu}{d}+\frac{1-\mu}{r} \Bigr)^s - \Bigl( \frac{\mu}{d} \Bigr)^s \Bigr)P_i
		\quad\text{and}\\
		\sigma_j^{1-s} &=  \Bigl( \frac{\mu}{d} \Bigr)^{1-s}I_d + \Bigl( \Bigl( \frac{\mu}{d}+\frac{1-\mu}{r} \Bigr)^{1-s} - \Bigl( \frac{\mu}{d} \Bigr)^{1-s} \Bigr)P_j.
	\end{align*}
	This yields 
	\begin{align*}
		\Tr(\sigma_i^s\sigma_j^{1-s})
		&= \frac{\mu}{d}\Tr I_d
		+ \Bigl( \frac{\mu}{d} \Bigr)^{1-s}\Bigl( \Bigl( \frac{\mu}{d}+\frac{1-\mu}{r} \Bigr)^s - \Bigl( \frac{\mu}{d} \Bigr)^s \Bigr)\Tr P_i\\
		&\quad+ \Bigl( \frac{\mu}{d} \Bigr)^s\Bigl( \Bigl( \frac{\mu}{d}+\frac{1-\mu}{r} \Bigr)^{1-s} - \Bigl( \frac{\mu}{d} \Bigr)^{1-s} \Bigr)\Tr P_j\\
		&\quad+ \Bigl( \Bigl( \frac{\mu}{d}+\frac{1-\mu}{r} \Bigr)^s - \Bigl( \frac{\mu}{d} \Bigr)^s \Bigr)
		\Bigl( \Bigl( \frac{\mu}{d}+\frac{1-\mu}{r} \Bigr)^{1-s} - \Bigl( \frac{\mu}{d} \Bigr)^{1-s} \Bigr)\Tr P_iP_j.
	\end{align*}
	By $\Tr P_i=\Tr P_j=r$ and $\Tr P_iP_j=rc$, we have 
	\begin{align*}
		\Tr(\sigma_i^s\sigma_j^{1-s})
		&= (d-2r)\frac{\mu}{d}
		+ r\Bigl( \frac{\mu}{d} \Bigr)^{1-s}\Bigl( \frac{\mu}{d}+\frac{1-\mu}{r} \Bigr)^s
		+ r\Bigl( \frac{\mu}{d} \Bigr)^s\Bigl( \frac{\mu}{d}+\frac{1-\mu}{r} \Bigr)^{1-s}\\
		&\quad+ rc\Bigl( \frac{2\mu}{d}+\frac{1-\mu}{r} \Bigr)
		- rc\Bigl( \frac{\mu}{d} \Bigr)^{1-s}\Bigl( \frac{\mu}{d}+\frac{1-\mu}{r} \Bigr)^s
		- rc\Bigl( \frac{\mu}{d} \Bigr)^s\Bigl( \frac{\mu}{d}+\frac{1-\mu}{r} \Bigr)^{1-s}\\
		&= (1-2u)\mu + c(2u\mu+1-\mu)\\
		&\quad+ r(1-c)\Bigl( \frac{\mu}{d} \Bigr)^{1-s}\Bigl( \frac{\mu}{d}+\frac{1-\mu}{r} \Bigr)^s
		+ r(1-c)\Bigl( \frac{\mu}{d} \Bigr)^s\Bigl( \frac{\mu}{d}+\frac{1-\mu}{r} \Bigr)^{1-s},
	\end{align*}
	where $u=r/d$.
	By the AM-GM inequality, the right-hand side achieves the minimum at $s=1/2$: 
	\begin{equation*}
	\begin{split}
		\min_{s\in[0,1]} \Tr(\sigma_i^s\sigma_j^{1-s})
		&= (1-2u)\mu + c(2u\mu+1-\mu)
		+ 2r(1-c)\Bigl( \frac{\mu}{d} \Bigr)^{1/2}\Bigl( \frac{\mu}{d}+\frac{1-\mu}{r} \Bigr)^{1/2}\\
		&= (1-2u)\mu + c(2u\mu+1-\mu) + 2(1-c)\sqrt{u\mu(u\mu+1-\mu)}\\
		&= 1-G_\sym(\mu).
	\end{split}
	\end{equation*}
	Since the right-hand side is independent of $i\not=j$, 
	it turns out that 
	\[
	\max_{i\not=j} \min_{s\in[0,1]} \Tr(\sigma_i^s\sigma_j^{1-s})
	= 1-G_\sym(\mu).
	\]
	This and \eqref{eq10} imply the desired equality.
\end{proof}

\begin{proof}[Proof of Lemma~$\ref{lemQ1}$ for $A^\eta(\vec{\sigma})$]
	Let us calculate $A^\eta(\vec{\sigma})$.
	By \eqref{eq12} and $\sigma_{\avg}=d^{-1}I_d$, 
	\[
	A^\eta(\vec{\sigma}) = \min_{x\in[n]} D(\tilde{\sigma}_x\|d^{-1}I_d).
	\]
	Since $\tilde{\sigma}_x$ and $d^{-1}I_d$ are commutative, 
	the unitary invariance of the Umegaki relative entropy and $\tilde{\sigma}_x=\mu^{[\eta]}d^{-1}I_d+(1-\mu^{[\eta]})r^{-1}P_x$ imply 
	\[
	A^\eta(\vec{\sigma}) = D\Bigl( \frac{\mu^{[\eta]}}{d}\mathbbm{1}_d+\frac{1-\mu^{[\eta]}}{r}(\mathbf{e}_1+\cdots+\mathbf{e}_r) \Big\| \frac{1}{d}\mathbbm{1}_d \Bigr),
	\]
	where $(\mathbf{e}_i)_{i\in[d]}$ is the standard basis of $\mathbb{R}^d$.
	For simplicity, write $\mu^{[\eta]}$ as $\mu$ in the following calculation.
	Then 
	\begin{align*}
		A^\eta(\vec{\sigma})
		&= r\Bigl( \frac{\mu}{d}+\frac{1-\mu}{r} \Bigr)\Bigl( \ln\Bigl( \frac{\mu}{d}+\frac{1-\mu}{r} \Bigr)-\ln\frac{1}{d} \Bigr)
		+ (d-r)\frac{\mu}{d}\Bigl( \ln\frac{\mu}{d}-\ln\frac{1}{d} \Bigr)\\
		&= \ln d + rL\Bigl( \frac{\mu}{d}+\frac{1-\mu}{r} \Bigr) + (d-r)L\Bigl( \frac{\mu}{d} \Bigr).
	\end{align*}
	This implies 
	\begin{align*}
		A^\eta(\vec{\sigma}) &= \ln d + r\Bigl( \frac{\mu}{d}+\frac{1-\mu}{r} \Bigr)\log\Bigl( \frac{\mu}{d}+\frac{1-\mu}{r} \Bigr)
		+ (d-r)\Bigl( \frac{\mu}{d} \Bigr)\ln\Bigl( \frac{\mu}{d} \Bigr)\\
		&= (u\mu+1-\mu)\log(\mu+u^{-1}(1-\mu)) + (1-u)\mu\ln\mu\\
		&= uL(\mu+u^{-1}(1-\mu)) + (1-u)L(\mu)
		= G_\asym(\mu),
	\end{align*}
	where $u=r/d$.
\end{proof}

\section{Proof of Proposition~$\ref{prop2}$}\label{appE}
This appendix proves Proposition~\ref{prop2}, more precisely, \eqref{eq56} and \eqref{eq58}.
First, we show \eqref{eq58}.
Let $c=c(u)$, $\mu=\mu(u)$, and $\mu_0=\mu_0(u)$ be the functions of $u$ in \eqref{eq17} and \eqref{eq38}.
Let $\vec{\sigma}$ be an isoclinic mechanism in \eqref{eq13}, where $c$ and $\mu$ are evaluated at $u=r/d$.
For $x\in[n]$, define the quantum state $\sigma_x^{(0)}$ as 
\[
\sigma_x^{(0)} = \frac{\mu_0}{d}I_d + \frac{1-\mu_0}{r}P_x,
\]
where $\mu_0$ is evaluated at $u=r/d$.
By $\mu\le\mu_0<1$, the depolarizing channel $\Lambda(X)=t(\Tr X)d^{-1}I_d+(1-t)X$ with $1-t=(1-\mu_0)(1-\mu)^{-1}$ satisfies that 
$\Lambda(\sigma_x)=\sigma_x^{(0)}$ for every $x\in[n]$.
This and the data processing inequality yield 
\begin{equation}
\begin{split}
	S^\eta(\vec{\sigma}^{(0)}) &\le S^\eta(\vec{\sigma}) \le S_\iso^\eta(n,\epsilon)
	\quad\text{and}\\
	A^\eta(\vec{\sigma}^{(0)}) &\le A^\eta(\vec{\sigma}) \le A_\iso^\eta(n,\epsilon).
\end{split}\label{eq41}
\end{equation}
By Lemma~\ref{lemQ1} and \eqref{eq-G-C}, 
\begin{equation}
\begin{split}
	S^\eta(\vec{\sigma}^{(0)})
	&= -\ln\bigl( 1-G_\sym(\mu_0^{[\eta]}(r/d)) \bigr)
	= -\ln\bigl( 1-G_\sym^{\C;\eta}(r/d) \bigr)
	\quad\text{and}\\
	A^\eta(\vec{\sigma}^{(0)})
	&= G_\asym(\mu_0^{[\eta]}(r/d))
	= G_\asym^{\C;\eta}(r/d).
\end{split}\label{eq42}
\end{equation}
By \eqref{eq41} and \eqref{eq42}, we obtain \eqref{eq58}.
\par
Next, we show the first equality in \eqref{eq56}.
By \eqref{eq38} and \eqref{eq43}, 
\begin{equation}
\begin{split}
	\mu_0^{[\eta]}
	&= \eta\mu_0+1-\eta
	= \frac{\eta}{u(e^\epsilon-1)+1}+1-\eta
	= \Delta_u^\eta(0)
	\quad\text{and}\\
	\mu_0^{[\eta]}+u^{-1}(1-\mu_0^{[\eta]})
	&= \mu_0^{[\eta]}+u^{-1}\eta(1-\mu_0)
	= \mu_0^{[\eta]} + \frac{\eta(e^\epsilon-1)}{u(e^\epsilon-1)+1}
	= \Delta_u^\eta(1),
\end{split}\label{eq39}
\end{equation}
where $\Delta_u^\eta(t)$ and $\mu_0^{[\eta]}$ are defined in Lemmas~\ref{lemC1} and \ref{lemQ1}, respectively.
Lemma~\ref{lemQ1}, \eqref{eq-G-C}, \eqref{eq17}, and \eqref{eq39} yield 
\begin{equation}
\begin{split}
	G_\sym^{\C;\eta}(u)
	&= G_\sym(\mu_0^{[\eta]})
	= (1-c)\Bigl( \sqrt{u\mu_0^{[\eta]}+1-\mu_0^{[\eta]}}-\sqrt{u\mu_0^{[\eta]}} \Bigr)^2\\
	&= \frac{nu(1-u)}{n-1}\Bigl( \sqrt{\Delta_u^\eta(1)}-\sqrt{\Delta_u^\eta(0)} \Bigr)^2.
\end{split}\label{eq44}
\end{equation}
By \eqref{eq38}, we have 
$\Delta_u^\eta(1)=\eta e^\epsilon\mu_0+1-\eta$ and $\Delta_u^\eta(0)=\eta\mu_0+1-\eta$.
Since $\sqrt{\Delta_u^\eta(1)}-\sqrt{\Delta_u^\eta(0)}$ can be written as 
\[
\sqrt{\Delta_u^\eta(1)}-\sqrt{\Delta_u^\eta(0)}
= \frac{\eta(e^\epsilon-1)\sqrt{\mu_0}}{\sqrt{\eta e^\epsilon+(1-\eta)\mu_0^{-1}}+\sqrt{\eta+(1-\eta)\mu_0^{-1}}},
\]
it increases in $\mu_0$, i.e, decreases in $u$.
Thus, 
\begin{equation}
	G_\sym^{\C;\eta}(u) \ge G_\sym^{\C;\eta}(1-u)
	\label{eq45}
\end{equation}
for every $u\in[0,1/2]$.
Lemma~\ref{lemC1}, \eqref{eq44}, and \eqref{eq45} imply 
\begin{equation*}
	\underline{S}_\C^\eta(n,\epsilon)
	= \max_{k\in[n-1]} (-1)\ln\bigl( 1-G_\sym^{\C;\eta}(k/n) \bigr)
	= \max_{k\in[1,n/2]\cap\mathbb{Z}} (-1)\ln\bigl( 1-G_\sym^{\C;\eta}(k/n) \bigr).
\end{equation*}
Therefore, the first equality in \eqref{eq56} holds.
\par
Finally, we show the second equality in \eqref{eq56}.
It suffices to show the counterparts of \eqref{eq44} and \eqref{eq45}.
Lemma~\ref{lemQ1}, \eqref{eq-G-C}, and \eqref{eq39} yield 
\begin{equation}
\begin{split}
	G_\asym^{\C;\eta}(u)
	&= G_\asym(\mu_0^{[\eta]})
	= uL(\mu_0^{[\eta]}+u^{-1}(1-\mu_0^{[\eta]})) + (1-u)L(\mu_0^{[\eta]})\\
	&= uL(\Delta_u^\eta(1)) + (1-u)L(\Delta_u^\eta(0)).
\end{split}\label{eq57}
\end{equation}
By Lemma~\ref{lemC1}, this is the counterpart of \eqref{eq44}.
The remainder is the counterpart of \eqref{eq45}, namely, 
\begin{equation}
	G_\asym^{\C;\eta}(u) \ge G_\asym^{\C;\eta}(1-u)
	\label{eq46}
\end{equation}
for every $u\in[0,1/2]$.
While the following formulation is not strictly necessary to prove \eqref{eq46}, 
it provides useful intuitive insight.
Let $\mathbb{P}_X=[\mathbb{P}_X(0),\mathbb{P}_X(1)]^\top=[1-u,u]^\top$ and 
\[
\mathbb{P}_{Y|X} = [\mathbb{P}_{Y|X}(y|x)]_{y,x\in\{0,1\}} = \frac{1}{e^\epsilon+1}
\begin{bmatrix}
	e^\epsilon&1\\
	1&e^\epsilon
\end{bmatrix}.
\]
Set $Z(k)=\mathbb{P}_X(k)(e^\epsilon-1)+1$ for $k\in\{0,1\}$. Then 
\begin{gather*}
	\mathbb{P}_{X|Y=0} = \frac{1}{Z(0)}
	\begin{bmatrix}
		(1-u)e^\epsilon\\
		u
	\end{bmatrix}
	,\quad
	\mathbb{P}_{X|Y=1} = \frac{1}{Z(1)}
	\begin{bmatrix}
		1-u\\
		ue^\epsilon
	\end{bmatrix}
	,\\
	\mathbb{P}_0^\eta \coloneqq \eta\mathbb{P}_{X|Y=0}+(1-\eta)\mathbb{P}_X = 
	\begin{bmatrix}
		(1-u)\Delta_{1-u}^\eta(1)\\
		u\Delta_{1-u}^\eta(0)
	\end{bmatrix}
	,\quad\text{and}\\
	\mathbb{P}_1^\eta \coloneqq \eta\mathbb{P}_{X|Y=1}+(1-\eta)\mathbb{P}_X = 
	\begin{bmatrix}
		(1-u)\Delta_u^\eta(0)\\
		u\Delta_u^\eta(1)
	\end{bmatrix}.
\end{gather*}
Moreover, 
\begin{equation*}
	G_\asym^{\C;\eta}(u) = D(\mathbb{P}_1^\eta \| \mathbb{P}_X)
	\quad\text{and}\quad
	G_\asym^{\C;\eta}(1-u) = D(\mathbb{P}_0^\eta \| \mathbb{P}_X).
\end{equation*}
This and Lemma~\ref{lem4} (which is proved below) imply \eqref{eq46} for every $u\in[0,1/2]$.\qed

\begin{lemma}\label{lem4}
	Let $F\colon (0,\infty)\to\mathbb{R}$ be an operator convex function satisfying $F(1)=0$, 
	and let $u\in[0,1/2]$.
	Then $D_F(\mathbb{P}_1^\eta \| \mathbb{P}_X) \ge D_F(\mathbb{P}_0^\eta \| \mathbb{P}_X)$ holds 
	for the above probability distributions $\mathbb{P}_X$, $\mathbb{P}_0^\eta$, and $\mathbb{P}_1^\eta$.
	In particular, $D(\mathbb{P}_1^\eta \| \mathbb{P}_X) \ge D(\mathbb{P}_0^\eta \| \mathbb{P}_X)$ holds.
\end{lemma}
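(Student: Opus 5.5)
We plan to reduce the inequality to a one-parameter family of elementary convex functions through the integral representation of operator convex functions, and then check each member by a direct rational-function computation. Write $m=e^\epsilon-1>0$. From the explicit formulas for $\mathbb{P}_0^\eta$ and $\mathbb{P}_1^\eta$, and since both are two-point distributions compared against $\mathbb{P}_X=[1-u,u]^\top$, we can write
\begin{gather*}
	D_F(\mathbb{P}_1^\eta\|\mathbb{P}_X) = (1-u)F(1-us_1)+uF(1+(1-u)s_1),\\
	D_F(\mathbb{P}_0^\eta\|\mathbb{P}_X) = (1-u)F(1+us_0)+uF(1-(1-u)s_0),
\end{gather*}
where
\[
s_1=\Delta_u^\eta(1)-\Delta_u^\eta(0)=\frac{\eta m}{1+um}
\quad\text{and}\quad
s_0=\Delta_{1-u}^\eta(1)-\Delta_{1-u}^\eta(0)=\frac{\eta m}{1+(1-u)m}.
\]
This uses the identity $(1-u)\Delta_u^\eta(0)+u\Delta_u^\eta(1)=1$, which follows from \eqref{eq39}. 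Since $u\le1/2$, we have $s_0\le s_1$. However, the large atom sits on opposite sides in the two distributions.

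The natural first idea is data processing: look for a $2\times2$ channel that fixes $\mathbb{P}_X$ and maps $\mathbb{P}_1^\eta$ to $\mathbb{P}_0^\eta$. A quick computation suggests this fails. Any such channel multiplies the direction $(1,-1)^\top$ by a scalar $\kappa\ge-u/(1-u)$. Here we would need $\kappa=-s_0/s_1$, and $s_0/s_1=(1+um)/(1+(1-u)m)$ exceeds $u/(1-u)$ when $u<1/2$. So this route does not work, and we use the structure of $F$ instead.

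By Löwner's integral representation, an operator convex $F$ on $(0,\infty)$ with $F(1)=0$ has the form
\[
F(t)=a(t-1)+b(t-1)^2+\int_{[0,\infty)}\frac{(t-1)^2}{t+\lambda}\,\mathrm{d}\nu(\lambda),
\]
with $b\ge0$ and $\nu\ge0$. The linear term contributes nothing to either side, because both distributions have mean $1$ under the weights $(1-u,u)$. By linearity in $F$, it therefore suffices to prove the inequality for $F_\lambda(t)=(t-1)^2/(t+\lambda)$ with $\lambda\ge0$, and for $(t-1)^2$, which is the limit $\lambda\to\infty$ after rescaling. Writing $\Lambda=1+\lambda$, the target inequality becomes
\[
s_1^2\Bigl(\frac{u}{\Lambda-us_1}+\frac{1-u}{\Lambda+(1-u)s_1}\Bigr)
\ge s_0^2\Bigl(\frac{u}{\Lambda+us_0}+\frac{1-u}{\Lambda-(1-u)s_0}\Bigr).
\]
For the $\chi^2$ case ($\Lambda\to\infty$) this reduces to $s_1\ge s_0$, which holds.

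The main obstacle is the finite-$\Lambda$ case. Here the right-hand bracket is the larger one, since the weight-$(1-u)$ term has the smaller denominator. The factor $s_1^2/s_0^2=\bigl((1+(1-u)m)/(1+um)\bigr)^2$ must compensate for this. We would first combine each bracket into a single fraction. This gives
\[
\frac{s_1^2(\Lambda+(1-2u)s_1)}{(\Lambda-us_1)(\Lambda+(1-u)s_1)}
\quad\text{and}\quad
\frac{s_0^2(\Lambda-(1-2u)s_0)}{(\Lambda+us_0)(\Lambda-(1-u)s_0)},
\]
respectively. Next we substitute $s_i$ and clear the positive denominators; positivity holds because all atoms are positive and $\Lambda\ge1$. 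The hope is that the resulting polynomial inequality in $(\Lambda,\eta,m,u)$ factors with a visible factor $(1-2u)$ times a nonnegative expression. If it does not factor cleanly, we would treat it as a polynomial in $\Lambda$ on $[1,\infty)$, check the boundary case $\Lambda=1$ (the worst case, $F(t)=(t-1)^2/t$) and the leading behaviour as $\Lambda\to\infty$ directly, and handle the interior by showing monotonicity in $\Lambda$ of the ratio of the two sides. The final statement for $D$ then follows by taking $F=L$, which is operator convex with $L(1)=0$.
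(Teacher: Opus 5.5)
Your strategy is the same as the paper's: expand $F$ via the integral representation of operator convex functions, drop the affine part, and check each extremal function directly. The paper's generators $t^2$ and $-t(t+s)^{-1}$ differ from your $(t-1)^2$ and $(t-1)^2/(t+\lambda)$ only by affine terms and positive factors. Your parametrization by $s_0,s_1$ is correct, as are the $\chi^2$ case and the observation that no channel fixing $\mathbb{P}_X$ maps $\mathbb{P}_1^\eta$ to $\mathbb{P}_0^\eta$.

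The gap is the finite-$\Lambda$ case, which is the whole content of the lemma. You leave it as a hope, and the combined fractions you display are wrong. With the weights as you wrote them, the $\pm u(1-u)s_1$ terms cancel:
\[
\frac{u}{\Lambda-us_1}+\frac{1-u}{\Lambda+(1-u)s_1}=\frac{\Lambda}{(\Lambda-us_1)(\Lambda+(1-u)s_1)}.
\]
Similarly, the right-hand bracket equals $\Lambda/\bigl((\Lambda+us_0)(\Lambda-(1-u)s_0)\bigr)$. Your numerators $\Lambda\pm(1-2u)s_k$ arise only if the weights $u$ and $1-u$ are swapped. So the polynomial inequality you propose to verify is not equivalent to the lemma. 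For example, take $u=1/4$, $\eta=1$, $e^\epsilon=2$, $\Lambda=1$: the two true sides are both $1/2$, whereas your fractions give $7/10$ and $5/14$.

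With the correct fractions the case closes at once. At $u=0$ both sides vanish, so take $u\in(0,1/2]$ and put $a_k=\Lambda/s_k$. Dividing by $u(1-u)\Lambda$ and inverting, the inequality becomes $(a_1-u)(a_1+1-u)\le(a_0+u)(a_0-1+u)$. This is equivalent to $(1-2u)(a_0+a_1)\le(a_0-a_1)(a_0+a_1)$, i.e.\ $a_0-a_1\ge1-2u$. Since $1/s_0-1/s_1=(1-2u)/\eta$, this reads $(1-2u)(\Lambda/\eta-1)\ge0$. That holds because $u\le1/2$ and $\Lambda=1+\lambda\ge1\ge\eta$.

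Equality occurs at $\Lambda=\eta=1$, i.e.\ for $F(t)=(t-1)^2/t$ with $\eta=1$, so that case is tight and any estimate there must be exact.

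Completed this way, your argument is shorter than the paper's. The paper obtains the sign of the difference as a product of $(Z(1)-Z(0))$, $s$, and $(e^\epsilon-Z(0)Z(1))$ only after a longer computation. Your family also covers $\lambda=0$ directly. The paper's rewriting $(t-1)^2(t+s)^{-1}=s^{-1}+t-s^{-1}(s+1)^2t(t+s)^{-1}$ needs $s>0$ and reaches $s=0$ only by a limit.
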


We can intuitively interpret Lemma~\ref{lem4} as follows.
Since $X=1$ is a rarer event than $X=0$, 
observing $Y=1$ (which suggests $X=1$ more strongly) yields a greater update in information about $X$ than observing $Y=0$.

\begin{proof}[Proof of Lemma~\upshape{\ref{lem4}}]
	By \cite[Eq.~(1.4)]{Nguyen} (cf.\ \cite[Problem~V.5.5]{book4}), the operator convex function $F$ is generated 
	by a linear term and the functions $(t-1)^2$ and $(t-1)^2(t+s)^{-1}$, $s\ge0$.
	Since $(t-1)^2(t+s)^{-1}=s^{-1}+t-s^{-1}(s+1)^2t(t+s)^{-1}$, 
	the operator convex function $F$ is generated 
	by a linear term and the functions $t^2$ and $-t(t+s)^{-1}$, $s\ge0$.
	Thus, it suffices to show $D_F(\mathbb{P}_1^\eta \| \mathbb{P}_X) \ge D_F(\mathbb{P}_0^\eta \| \mathbb{P}_X)$ 
	in both cases $F(t)=t^2$ and $F(t)=-t(t+s)^{-1}$, where $s\ge0$.
	Note that $D_F$ can be defined as \eqref{eq04} even for operator convex functions that are not necessarily normalized.
	\par
	\textbf{Case $F(t)=t^2$.}
	We have 
	\begin{align*}
		D_F(\mathbb{P}_1^\eta \| \mathbb{P}_X)
		&= (1-u)F(\Delta_u^\eta(0)) + uF(\Delta_u^\eta(1))\\
		&= (1-u)\Bigl( \frac{\eta}{Z(1)} + 1-\eta \Bigr)^2 + u\Bigl( \frac{\eta e^\epsilon}{Z(1)} + 1-\eta \Bigr)^2\\
		&= \frac{\eta^2(1-u+ue^{2\epsilon})}{Z(1)^2} + \frac{2\eta(1-\eta)(1-u+ue^\epsilon)}{Z(1)} + (1-\eta)^2\\
		&= \frac{\eta^2(Z(1)+e^\epsilon(Z(1)-1))}{Z(1)^2} + 2\eta(1-\eta) + (1-\eta)^2\\
		&= \eta^2\Bigl( \frac{e^\epsilon+1}{Z(1)} - \frac{e^\epsilon}{Z(1)^2} \Bigr) + 1-\eta^2.
	\end{align*}
	Similarly, 
	\[
	D_F(\mathbb{P}_0^\eta \| \mathbb{P}_X)
	= \eta^2\Bigl( \frac{e^\epsilon+1}{Z(0)} - \frac{e^\epsilon}{Z(0)^2} \Bigr) + 1-\eta^2.
	\]
	These and $Z(0)+Z(1)=e^\epsilon+1$ yield 
	\begin{align*}
		&\quad D_F(\mathbb{P}_1^\eta \| \mathbb{P}_X) - D_F(\mathbb{P}_0^\eta \| \mathbb{P}_X)\\
		&= \eta^2\Bigl( \frac{1}{Z(1)} - \frac{1}{Z(0)} \Bigr)\biggl( e^\epsilon+1 - e^\epsilon\Bigl( \frac{1}{Z(1)} + \frac{1}{Z(0)} \Bigr) \biggr)\\
		&= \eta^2\frac{Z(0)-Z(1)}{Z(0)Z(1)}(e^\epsilon+1)\Bigl( 1 - \frac{e^\epsilon}{Z(0)Z(1)} \Bigr).
	\end{align*}
	By this, $Z(0)\ge Z(1)$, and $Z(0)Z(1)\ge e^\epsilon$, we obtain the desired inequality.
	\par
	\textbf{Case $F(t)=-t(t+s)^{-1}$.}
	Set $\theta=1-\eta+s$.
	Since $(1-u)\Delta_u^\eta(0)+u\Delta_u^\eta(1)=1$ and 
	\[
	\frac{(1-u)t_0}{t_0+s} + \frac{ut_1}{t_1+s}
	= \frac{t_0t_1+s((1-u)t_0+ut_1)}{(t_0+s)(t_1+s)},
	\]
	we have 
	\begin{align*}
		D_F(\mathbb{P}_1^\eta \| \mathbb{P}_X)
		&= (1-u)F(\Delta_u^\eta(0)) + uF(\Delta_u^\eta(1))\\
		&= -\frac{\Delta_u^\eta(0)\Delta_u^\eta(1)+s}{(\Delta_u^\eta(0)+s)(\Delta_u^\eta(1)+s)}.
	\end{align*}
	The denominator is 
	\[
	(\eta Z(1)^{-1}+\theta)(\eta e^\epsilon Z(1)^{-1}+\theta),
	\]
	and the numerator is 
	\begin{align*}
		&\quad \bigl( \eta Z(1)^{-1}+1-\eta \bigr)\bigl( \eta e^\epsilon Z(1)^{-1}+1-\eta \bigr)+s\\
		&= \eta^2 e^\epsilon Z(1)^{-2}+\eta(1-\eta)(e^\epsilon+1)Z(1)^{-1}+\theta-\eta(1-\eta)\\
		&= \eta^2 e^\epsilon Z(1)^{-2}+\eta(1-\eta)Z(0)Z(1)^{-1}+\theta,
	\end{align*}
	where we have used $Z(0)+Z(1)=e^\epsilon+1$ to obtain the last equality.
	Setting $W$ as 
	\begin{equation}
		W = \eta^2 e^\epsilon+\eta(1-\eta)Z(0)Z(1),
		\label{eq53}
	\end{equation}
	we obtain 
	\[
	D_F(\mathbb{P}_1^\eta \| \mathbb{P}_X)
	= -\frac{W+\theta Z(1)^2}{(\eta+\theta Z(1))(\eta e^\epsilon+\theta Z(1))}.
	\]
	Similarly, 
	\[
	D_F(\mathbb{P}_0^\eta \| \mathbb{P}_X)
	= -\frac{W+\theta Z(0)^2}{(\eta+\theta Z(0))(\eta e^\epsilon+\theta Z(0))}.
	\]
	Thus, the sign of the difference $D_F(\mathbb{P}_1^\eta \| \mathbb{P}_X) - D_F(\mathbb{P}_0^\eta \| \mathbb{P}_X)$ coincides with 
	\begin{equation}
	\begin{split}
		&\quad-(W+\theta Z(1)^2)(\eta+\theta Z(0))(\eta e^\epsilon+\theta Z(0))\\
		&\quad+ (W+\theta Z(0)^2)(\eta+\theta Z(1))(\eta e^\epsilon+\theta Z(1))\\
		&= W\bigl( \theta^2(Z(1)^2-Z(0)^2) + \eta(e^\epsilon+1)\theta(Z(1)-Z(0)) \bigr)\\
		&\quad- \theta Z(1)^2\bigl( \eta^2 e^\epsilon + \eta(e^\epsilon+1)\theta Z(0) \bigr)
		+ \theta Z(0)^2\bigl( \eta^2 e^\epsilon + \eta(e^\epsilon+1)\theta Z(1) \bigr)\\
		&\overset{\text{(a)}}{=} W(e^\epsilon+1)(Z(1)-Z(0))(\theta^2 + \eta\theta)\\
		&\quad+ (e^\epsilon+1)(Z(0)-Z(1))( \theta\eta^2 e^\epsilon + \theta^2\eta Z(0)Z(1) )\\
		&= (e^\epsilon+1)(Z(1)-Z(0))\theta\eta\bigl( W\eta^{-1}(\theta + \eta) - (\eta e^\epsilon + \theta Z(0)Z(1)) \bigr),
	\end{split}\label{eq54}
	\end{equation}
	where we have used $Z(0)+Z(1)=e^\epsilon+1$ to obtain (a).
	Furthermore, it follows from \eqref{eq53} and $\theta=1-\eta+s$ that 
	\begin{align*}
		&\quad W\eta^{-1}(\theta + \eta) - (\eta e^\epsilon + \theta Z(0)Z(1))\\
		&= ( \eta e^\epsilon+(1-\eta)Z(0)Z(1) )(1+s) - (\eta e^\epsilon + (1-\eta+s)Z(0)Z(1))\\
		&= s\eta( e^\epsilon-Z(0)Z(1) ).
	\end{align*}
	By this, \eqref{eq54}, $Z(0)\ge Z(1)$, and $Z(0)Z(1)\ge e^\epsilon$, we obtain the desired inequality.
\end{proof}

\section{Proof of Theorem~$\ref{main4}$}\label{appF}
This appendix proves Theorem~\ref{main4}.
By Proposition~\ref{prop3}, $\underline{S}_\C^1(n,\epsilon)=S_\C^1(n,\epsilon)$, and \eqref{eq60}, 
it suffices to show the following implications: 
\begin{gather}
	\theta \le (n-1)^2
	\;\Longrightarrow\;
	G_\sym^{\C;1}(1/(n-1)) \ge G_\sym^{\C;1}(1/n)
	\quad\text{and} \label{eq61}\\
	\xi^2 \le (n-1)(n-2)
	\;\Longrightarrow\;
	G_\asym^{\C;1}(1/(n-1)) \ge G_\asym^{\C;1}(1/n), \label{eq62}
\end{gather}
where $\theta=e^\epsilon+n-1$ and $\xi=e^\epsilon$.
First, we show \eqref{eq61}.
Keeping Lemma~\ref{lemC2} and \eqref{eq56} in mind, 
we obtain 
\[
G_\sym^{\C;1}(u) = \frac{n(e^{\epsilon/2}-1)^2}{n-1}\cdot\frac{u(1-u)}{u(e^\epsilon-1)+1}
\]
by \eqref{eq44}. Thus, 
\begin{align*}
	G_\sym^{\C;1}(1/(n-1)) \ge G_\sym^{\C;1}(1/n)
	\iff \frac{1-1/(n-1)}{e^\epsilon+n-2} \ge \frac{1-1/n}{e^\epsilon+n-1}\\
	\iff \frac{n(n-2)}{\theta-1} \ge \frac{(n-1)^2}{\theta}
	\iff \theta \le (n-1)^2.
\end{align*}
This yields \eqref{eq61}.
\par
Next, we show \eqref{eq62}.
Keeping Lemma~\ref{lemC2} and \eqref{eq56} in mind, 
we obtain 
\[
G_\asym^{\C;1}(u) = \frac{uL(e^\epsilon) - L( u(e^\epsilon-1)+1 )}{u(e^\epsilon-1)+1}
= \frac{uL(e^\epsilon)}{u(e^\epsilon-1)+1} - \ln\bigl( u(e^\epsilon-1)+1 \bigr)
\]
by \eqref{eq57}. Thus, 
\begin{equation}
\begin{split}
	&\quad G_\asym^{\C;1}(1/(n-1)) \ge G_\asym^{\C;1}(1/n)\\
	&\iff \frac{L(\xi)}{\xi+n-2} - \ln\frac{\xi+n-2}{n-1} \ge \frac{L(\xi)}{\xi+n-1} - \ln\frac{\xi+n-1}{n}\\
	&\iff \frac{L(\xi)}{(\xi+n-2)(\xi+n-1)} \ge \int_1^\xi \frac{\mathrm{d}s}{(s+n-2)(s+n-1)}\\
	&\iff G(\xi) \ge 0,
\end{split}\label{eq63}
\end{equation}
where the functions $G$ and $g$ are defined as 
\[
G(t) = L(t)g(t) - \int_1^t g(s)\,\mathrm{d}s
\quad\text{and}\quad
g(t) = \frac{1}{(t+n-2)(t+n-1)},
\]
respectively. The derivative of $G$ is 
\begin{align*}
	G'(t) &= (\ln t+1)g(t) + (t\ln t)g'(t) - g(t)\\
	&= (\ln t)(tg(t))'.
\end{align*}
Moreover, 
\begin{equation*}
	(tg(t))' = \frac{(n-2)(n-1)-t^2}{(t+n-2)^2(t+n-1)^2}.
\end{equation*}
Therefore, if $\xi^2\le(n-1)(n-2)$, then 
\[
G(\xi) = G(\xi)-G(1)
= \int_1^\xi G'(s)\,\mathrm{d}s
\ge 0.
\]
This and \eqref{eq63} yield \eqref{eq62}.\qed

\section{Proof of Theorem~$\ref{main5}$}\label{appG}
This appendix proves Theorem~\ref{main5}.
As before, let $n,d\ge2$, $r\in[d-1]$, and $\epsilon>0$, 
representing cardinality, dimension, rank, and privacy parameter, respectively.
Since the second equality in Theorem~\ref{main5} follows from \cite[Theorem~2.2]{Yoshida}, 
it suffices to show the first equality in Theorem~\ref{main5}.
\par
Recall that $\vec{\sigma}^*$ is defined in \eqref{eq-opt-QLDP}, and 
that $\sigma_{i,j;\theta}^*=(1-\theta)\sigma_i^*+\theta\sigma_j^*$.
For all $i\not=j$, 
\begin{gather*}
	(\sigma_j^*-\sigma_i^*)^2
	= \Bigl( \frac{1-\mu_*}{r} \Bigr)^2(P_j-P_i)^2\quad\text{and}\\
	\sigma_{i,j;\theta}^* = \frac{\mu_*}{d}I_d + \frac{1-\mu_*}{r}\bigl( (1-\theta)P_i+\theta P_j \bigr).
\end{gather*}
By Lemmas~\ref{lemJordan}, \ref{lem2}, and $d=2r$, 
for all $i\not=j$, there exists a unitary matrix $U_{i,j}$ such that 
\[
U_{i,j}P_iU_{i,j}^\dag = \bigoplus_{k\in[r]}
\begin{bmatrix}
	1&0\\
	0&0
\end{bmatrix}
\quad\text{and}\quad
U_{i,j}P_jU_{i,j}^\dag = \bigoplus_{k\in[r]}
\begin{bmatrix}
	c&\sqrt{c(1-c)}\\
	\sqrt{c(1-c)}&1-c
\end{bmatrix},
\]
where $\bigoplus_{k\in[r]} X$ denotes the direct sum of $r$ copies of a matrix $X$.
Thus, for all $i\not=j$, 
\begin{gather}
	(P_j-P_i)^2 = (1-c)I_d,\quad
	U_{i,j}\sigma_{i,j;\theta}^*U_{i,j}^\dag
	= \bigoplus_{k\in[r]} R,\quad\text{and} \label{eq64}\\
	\Tr(\sigma_j^*-\sigma_i^*)^2(\sigma_{i,j;\theta}^*)^{-1}
	= \Bigl( \frac{1-\mu_*}{r} \Bigr)^2(1-c)\cdot r\Tr(R^{-1}) \label{eq65},
\end{gather}
where 
\[
R = \frac{\mu_*}{d}I_2 + \frac{1-\mu_*}{r}
\begin{bmatrix}
	1-\theta(1-c)&\theta\sqrt{c(1-c)}\\
	\theta\sqrt{c(1-c)}&\theta(1-c)
\end{bmatrix}.
\]
\par
Let us calculate $\Tr(R^{-1})$.
Since $R$ is of order two, it turns out that 
\[
\Tr(R^{-1}) = \frac{\Tr R}{\det R}.
\]
The trace and determinant of $R$ are calculated as follows: 
by \eqref{eq64}, 
\[
1 = \Tr\sigma_{i,j;\theta}^* = r\Tr R;
\]
setting $c'=1-c$, we have 
\begin{align*}
	\det R &= \Bigl( \frac{\mu_*}{d} + \frac{1-\mu_*}{r}(1-\theta c') \Bigr)
	\Bigl( \frac{\mu_*}{d} + \frac{1-\mu_*}{r}\theta c' \Bigr)
	- \Bigl( \frac{1-\mu_*}{r} \Bigr)^2\theta^2cc'\\
	&= \Bigl( \frac{\mu_*}{d} \Bigr)^2 + \Bigl( \frac{1-\mu_*}{r} \Bigr)^2(1-\theta c')\theta c' + \frac{\mu_*(1-\mu_*)}{dr}
	- \Bigl( \frac{1-\mu_*}{r} \Bigr)^2\theta^2cc'\\
	&= \Bigl( \frac{\mu_*}{d} \Bigr)^2 + \frac{\mu_*(1-\mu_*)}{dr}
	+ \Bigl( \frac{1-\mu_*}{r} \Bigr)^2(1-\theta)\theta c'\\
	&= \frac{\mu_*}{2r^2}\Bigl( 1-\frac{\mu_*}{2} \Bigr)
	+ \Bigl( \frac{1-\mu_*}{r} \Bigr)^2(1-\theta)\theta c',
\end{align*}
where the last equality follows from $d=2r$.
Therefore, 
\begin{equation}
	\Tr(R^{-1}) = \frac{r^2\Tr R}{r^2\det R}
	= \frac{r}{(\mu_*/2)(1-\mu_*/2) + (1-\mu_*)^2(1-\theta)\theta c'}.
	\label{eq66}
\end{equation}
\par
Now, set the positive number $\xi$ as 
\[
\xi = \frac{(\mu_*/2)(1-\mu_*/2)}{(1-\mu_*)^2(1-c)}.
\]
By \eqref{eq65} and \eqref{eq66}, 
\begin{align*}
	R_\Q^\theta(\vec{\sigma}^*)
	&= \min_{i\not=j} \Tr(\sigma_j^*-\sigma_i^*)^2(\sigma_{i,j;\theta}^*)^{-1}\\
	&= \frac{(1-\mu_*)^2(1-c)}{(\mu_*/2)(1-\mu_*/2)+(1-\mu_*)^2(1-\theta)\theta(1-c)}
	= \frac{1}{\xi+(1-\theta)\theta}.
\end{align*}
By \eqref{eq15}, 
\[
\xi = \frac{(1-\mu_*)^{-2}-1}{4(1-c)}
= \frac{1}{4\sinh^2(\epsilon/2)}
= \frac{e^\epsilon}{(e^\epsilon-1)^2}.
\]
Therefore, 
\begin{align*}
	R_\Q^\theta(\vec{\sigma}^*)
	&= \frac{1}{\xi+(1-\theta)\theta}
	= \frac{(e^\epsilon-1)^2}{e^\epsilon+(e^\epsilon-1)^2(1-\theta)\theta}\\
	&= \frac{(e^\epsilon-1)^2}{( (1-\theta)e^\epsilon+\theta )( \theta e^\epsilon+1-\theta )},
\end{align*}
which shows the first equality in Theorem~\ref{main5}.\qed

\section{Verification of the second-order Taylor expansion $\eqref{eqTaylorFdiv}$}\label{appTaylorFdiv}
This appendix verifies \eqref{eqTaylorFdiv}.
Let $\rho$ be a full-rank quantum state on $\mathbb{C}^d$, 
$X_1$ and $X_2$ be traceless Hermitian matrices on $\mathbb{C}^d$, and 
$F\colon (0,\infty)\to\mathbb{R}$ be an operator convex function satisfying $F(1)=0$.
For small real numbers $t_1$ and $t_2$, 
define $g(t_1,t_2)$ as $g(t_1,t_2)=D_F(\rho+t_1X_1\|\rho+t_2X_2)$.
Since $g(0,0)=F(1)=0$, Taylor's theorem and Proposition~\ref{propTaylorFdiv} (which is proved below) imply 
\begin{align*}
	g(t_1,t_2)
	&= \frac{1}{2}\Bigl( t_1^2J_\rho^f[X_1,X_1] + t_2^2J_\rho^f[X_2,X_2] - t_1t_2J_\rho^f[X_1,X_2] - t_2t_1J_\rho^f[X_2,X_1] \Bigr)\\
	&\quad+ o(\max\{t_1^2,t_2^2\})\\
	&= \frac{1}{2}J_\rho^f[t_1X_1-t_2X_2,t_1X_1-t_2X_2] + o(\max\{t_1^2,t_2^2\}).
\end{align*}
Substituting $\rho=\rho_0$, $X_k=\|\delta\rho_k\|^{-1}\delta\rho_k$, and $t_k=\|\delta\rho_k\|$, 
we obtain \eqref{eqTaylorFdiv}

\begin{proposition}\label{propTaylorFdiv}
	The above function $g$ satisfies 
	\begin{gather}
		\partial_1g(0,0) = \partial_2g(0,0) = 0,\label{eq47}\\
		\partial_1^2g(0,0) = J_\rho^f[X_1,X_1],\quad
		\partial_2^2g(0,0) = J_\rho^f[X_2,X_2],\label{eq48}\\
		\partial_1\partial_2g(0,0) = -J_\rho^f[X_1,X_2],\quad\text{and}\quad
		\partial_2\partial_1g(0,0) = -J_\rho^f[X_2,X_1],\nonumber
	\end{gather}
	where $f$ is defined in \eqref{eq05}.
\end{proposition}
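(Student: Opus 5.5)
We will compute the derivatives of $g(t_1,t_2)=D_F(\rho+t_1X_1\|\rho+t_2X_2)$ by rewriting the Petz $F$-divergence in a form that is jointly smooth in both arguments. The tool is the relative modular operator. For full-rank $\rho_1,\rho_2$, define on the Hilbert--Schmidt space $\Delta_{\rho_1,\rho_2}(Y)=\rho_1Y\rho_2^{-1}$. Then \eqref{eq04} reads
\[
D_F(\rho_1\|\rho_2)=\langle \rho_2^{1/2},F(\Delta_{\rho_1,\rho_2})\rho_2^{1/2}\rangle_{\mathrm{HS}}
=\Tr \rho_2^{1/2}F(\Delta_{\rho_1,\rho_2})(\rho_2^{1/2}).
\]
In this form $\Delta_{\rho_1,\rho_2}=L_{\rho_1}R_{\rho_2^{-1}}$, where $L$ and $R$ are left and right multiplication. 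These two operators commute, so $F(\Delta)$ is a genuine two-variable functional calculus $\sum_{i,j}F(\lambda_{1,i}/\lambda_{2,j})\,L_{E_{1,i}}R_{E_{2,j}}$. Since $F$ is smooth on $(0,\infty)$, $g$ is $C^2$ near $(0,0)$, and Taylor's theorem may be applied.

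\textbf{First derivatives.} Using the same double-sum expression, write $g(t_1,t_2)=\sum_{i,j}\lambda_{2,j}F(\lambda_{1,i}/\lambda_{2,j})\Tr E_{1,i}E_{2,j}$. Differentiate in $t_1$ with $\rho_2=\rho$ fixed. By the Daleckii--Krein formula, $\partial_1 g(0,0)=\Tr F'(\rho\rho^{-1})X_1=F'(1)\Tr X_1=0$, because $X_1$ is traceless and $\rho_1,\rho_2$ commute at the base point. Similarly $\partial_2$ produces $\Tr\bigl(F(1)-F'(1)\bigr)X_2$, which also vanishes. This gives \eqref{eq47}.

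\textbf{Second derivatives.} Diagonalize $\rho=\sum_i\lambda_i\ketbra{i}{i}$. For the pure derivatives, use the two-variable function $K(a,b)=bF(a/b)$, which is the perspective of $F$. Then $g$ is a ``double operator'' trace function of $(\rho_1,\rho_2)$. I would compute $\partial_1^2 g(0,0)$ by the second-order divided-difference (Daleckii--Krein) formula for $\rho_1\mapsto\Tr\rho^{\,\cdot}$ acting on the $i$-th eigenvalue. Because the perturbation enters only through the $\rho_1$ eigen-decomposition, the result has the form $\sum_{i,j}k_{ij}|\braket{i|X_1|j}|^2$. The coefficient is $k_{ij}=\bigl(K_a(\lambda_i,\lambda_j)-K_a(\lambda_j,\lambda_i)\bigr)/(\lambda_i-\lambda_j)$ for $i\ne j$, with a derivative limit on the diagonal.

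A short computation gives $K_a(a,b)=F'(a/b)$, so $k_{ij}=\bigl(F'(t)-F'(t^{-1})\bigr)/(\lambda_i-\lambda_j)$ with $t=\lambda_i/\lambda_j$. The remaining task is to check that this equals $\bigl(\lambda_jf(t)\bigr)^{-1}=\bigl(F(t)+tF(t^{-1})\bigr)/\bigl(\lambda_j(t-1)^2\bigr)$ after symmetrizing over $(i,j)$. Symmetrization is allowed because $|\braket{i|X|j}|^2$ is symmetric. This is where I expect the main obstacle: matching the divided difference with \eqref{eq05}. I would try first to symmetrize the full second-order expression directly in terms of $F$ rather than $F'$, since the mixed and pure terms must combine to a quadratic form in $t_1X_1-t_2X_2$. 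That combination is forced, because $g(t,t)$ with $X_1=X_2$ satisfies $g\equiv0$ along the diagonal up to second order whenever $\rho_1=\rho_2$.

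\textbf{Mixed derivatives.} Use this forcing directly. Set $X_1=X_2=X$. Then $D_F(\rho+tX\|\rho+tX)=F(1)=0$ for all $t$, so $\partial_1^2g+2\partial_1\partial_2g+\partial_2^2g=0$ at $(0,0)$. The polarization identity in $X_1,X_2$ then reduces the claims for the mixed derivatives to \eqref{eq48}, together with the bilinear form being symmetric under exchange of arguments, as $J^f_\rho$ is. The same symmetry $K(a,b)\leftrightarrow$ the transposed function $F^*(t)=tF(t^{-1})$ shows that $\partial_2^2g$ equals the analogous expression with $F$ replaced by $F^*$. The combination $F+F^*$ in \eqref{eq05} is symmetric, so $\partial_2^2 g(0,0)=J^f_\rho[X_2,X_2]$ follows from the same computation.
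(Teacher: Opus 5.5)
Your first-derivative computation is correct (a direct Daleckii--Krein calculation, where the paper instead uses $g\ge0$ from data processing). The gap is in the pure second derivatives \eqref{eq48}, which carry the content of the proposition. They are not established, and the coefficient you propose for them is wrong, so the ``main obstacle'' cannot be overcome as you set it up. Three checks show this.

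On the diagonal, your limit is $k_{jj}=2F''(1)/\lambda_j$, twice the correct $F''(1)/\lambda_j$.

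For $F(t)=(t-1)^2$ one has $g(t,0)=t^2\Tr X_1^2\rho^{-1}$, so $\partial_1^2g(0,0)=2\Tr X_1^2\rho^{-1}$, while your formula gives $4\Tr X_1^2\rho^{-1}$.

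For $F(t)=-\ln t$, $g(t,0)=D(\rho\|\rho+tX_1)$ has Hessian coefficients $(\ln\lambda_i-\ln\lambda_j)/(\lambda_i-\lambda_j)$. Your formula gives $\lambda_i^{-1}+\lambda_j^{-1}$, which is not even proportional.

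The second derivative in $\rho_1$ involves a second divided difference, not a first divided difference of $F'$. Write $g(t,0)=\sum_j\lambda_j\braket{j|F((\rho+tX_1)/\lambda_j)|j}$ and set $s=\lambda_b/\lambda_j$. The coefficient of $\abs{\braket{j|X_1|b}}^2$ is then $2\bigl(F(s)-F'(1)(s-1)\bigr)/\bigl(\lambda_j(s-1)^2\bigr)$. Only after adding the $(j,b)$ and $(b,j)$ terms do the $F'(1)$ pieces cancel. What remains is $2\bigl(F(s)+sF(s^{-1})\bigr)/\bigl(\lambda_j(s-1)^2\bigr)=2/(\lambda_jf(s))$, which is the corresponding pair in $J_\rho^f$.

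The mixed-derivative step has a second gap. Let $B(X_1,X_2)\coloneqq\partial_1\partial_2g(0,0)$. Setting $X_1=X_2=X$ only gives $B(X,X)=-J_\rho^f[X,X]$, and polarization then recovers only the symmetric part of $B$. You also need $B(X_1,X_2)=B(X_2,X_1)$. That is not the equality of mixed partials, which keeps each direction attached to its argument. Nor can it be read off from the symmetry of $J_\rho^f$ without circularity.

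Similarly, vanishing on the diagonal does not by itself ``force'' the Hessian to be a form in $t_1X_1-t_2X_2$. That needs positivity, which is exactly the paper's key ingredient. By data processing $g_1\ge0$ and $g_1(t,t)=0$, so $\partial_1g_1(t,t)=0$ for all small $t$. Differentiating gives $\partial_1^2g(0,0)=\partial_1^2g_1(0,0)=-\partial_2\partial_1g_1(0,0)=J_\rho^f[X_1,X_1]$, with the mixed derivative taken from the cited Lesniewski--Ruskai theorems. Your identity $\partial_1^2g+2\partial_1\partial_2g+\partial_2^2g=0$ only constrains the sum of the two pure derivatives, so it cannot play this role.

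For a self-contained version, compute the mixed derivative directly; it is the easy one. Take $K(a,b)=bF(a/b)$, the commuting operators $L_{\rho_1}$ and $R_{\rho_2}$, and let $K^{[1,1]}$ be the mixed divided difference. Then $B(X_1,X_2)=\sum_{i,k}K^{[1,1]}(\lambda_i,\lambda_k;\lambda_i,\lambda_k)\braket{i|X_1|k}\braket{k|X_2|i}$. Since $K(\lambda,\lambda)=0$, the coefficient equals $-\bigl(\lambda_iF(\lambda_k/\lambda_i)+\lambda_kF(\lambda_i/\lambda_k)\bigr)/(\lambda_i-\lambda_k)^2=-1/(\lambda_kf(\lambda_i/\lambda_k))$, i.e.\ $B=-J_\rho^f$. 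Together with the positivity argument, this yields \eqref{eq48} with no pure-derivative divided differences at all.
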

\begin{proof}
	For small real numbers $t_1,t_2$ and $k\in[2]$, 
	define $g_k(t_1,t_2)$ as $g_k(t_1,t_2)=D_F(\rho+t_1X_k\|\rho+t_2X_k)$.
	By \cite[Theorems~2.8--2.9]{Lesniewski}, the equalities 
	\begin{equation}
		\partial_1\partial_2g(0,0) = -J_\rho^f[X_1,X_2]
		\quad\text{and}\quad
		\partial_1\partial_2g_k(0,0) = -J_\rho^f[X_k,X_k]
		\label{eq50}
	\end{equation}
	hold for every $k\in[2]$.
	Since $g$ is of class $C^2$ \cite[Theorem~2.4]{Lesniewski} and $J_\rho^f$ is symmetric \cite[Theorem~2.10]{Lesniewski}, 
	the equalities 
	\[
	\partial_1\partial_2g(0,0) = \partial_2\partial_1g(0,0)
	\quad\text{and}\quad
	J_\rho^f[X_1,X_2]=J_\rho^f[X_2,X_1]
	\]
	also hold.
	All we need is to show \eqref{eq47} and \eqref{eq48}.
	\par
	From the definitions of $D_F$, $g$, and $g_1$, it follows that 
	\begin{equation}
		g_1(t,t) = 0,\quad
		\partial_1g(t,0) = \partial_1g_1(t,0),\quad\text{and}\quad
		\partial_1^2g(t,0) = \partial_1^2g_1(t,0)
		\label{eq51}
	\end{equation}
	for every small $t$.
	Let $\Lambda$ be the quantum channel defined as $\Lambda(X)=(\Tr X)d^{-1}I_d$.
	The data processing inequality implies 
	\[
	g(t_1,t_2) \ge D_F(\Lambda(\rho+t_1X_1)\|\Lambda(\rho+t_2X_2))
	= D_F(d^{-1}I_d\|d^{-1}I_d)
	= F(1) = 0.
	\]
	Thus, the smooth function $t\mapsto g_1(t,t_2)$ achieves the minimum at $t=t_2$, 
	which implies $\partial_1g_1(t_2,t_2)=0$.
	This and \eqref{eq51} yield $\partial_1g(0,0)=\partial_1g_1(0,0)=0$.
	Moreover, by differentiating both sides of $\partial_1g_1(t,t)=0$, 
	we obtain 
	\[
	\partial_1^2g_1(0,0) + \partial_2\partial_1g_1(0,0) = 0.
	\]
	This, \eqref{eq51}, and \eqref{eq50} yield 
	\[
	\partial_1^2g(0,0)
	= \partial_1^2g_1(0,0)
	= -\partial_2\partial_1g_1(0,0)
	= J_\rho^f[X_1,X_1].
	\]
	We can also prove $\partial_2g(0,0)=0$ and $\partial_2^2g(0,0)=J_\rho^f[X_2,X_2]$ in the same way.
\end{proof}

\end{document}